\numberwithin{equation}{section}
\newtheorem{thm}{Theorem}[section]
\newenvironment{thmbis}[1]
  {%
   \addtocounter{thm}{-1}%
   \begin{thm}}
  {\end{thm}}
\newtheorem{prop}[thm]{Proposition}
\newtheorem{lem}{Lemma}[section]
\newtheorem{cor}[thm]{Corollary}
\newenvironment{corbis}[1]
  {%
   \addtocounter{thm}{-1}%
   \begin{cor}}
  {\end{cor}}
\newcommand*\boxt[1]{
\begin{tikzpicture}
\draw node[inner sep=1.8pt,draw]{\tiny{\sc T}};
\end{tikzpicture}}
\renewcommand{\a}{\alpha}
\renewcommand{\b}{\beta}
\newcommand{\g}{\gamma}
\newcommand{\G}{\Gamma}
\newcommand{\e}{\epsilon}
\renewcommand{\l}{\lambda}
\newcommand{\m}{\mu}
\newcommand{\n}{\nu}
\renewcommand{\d}{\delta}
\newcommand{\w}{\omega}
\newcommand{\p}{\partial}
\newcommand{\s}{\mathfrak{s}}
\newcommand{\im}{\mathfrak{Im}}
\newcommand{\teuk}{\ensuremath{\hspace{.05cm}\mathaccent\Box{\text{\tiny \textsc{T}}}\hspace{.05cm}}}     
\def\c{\nabla}
\def\wt{\widetilde}
\def\wh{\widehat}
\begin{document}

\title{Symmetry operators and decoupled equations for linear fields on black hole spacetimes}

\author{Bernardo Araneda\footnote{E-mail: \texttt{baraneda@famaf.unc.edu.ar}} \\ 
 \\
Facultad de Matem\'atica, Astronom\'{\i}a y F\'{\i}sica\\
Universidad Nacional de C\'ordoba\\ 
Instituto de F\'{\i}sica Enrique Gaviola, CONICET\\
Ciudad Universitaria, (5000) C\'ordoba, Argentina 
}

\date{March 27, 2017}

\maketitle

\begin{abstract}
In  the class of vacuum Petrov type D spacetimes with cosmological constant, which includes the Kerr-(A)dS black hole as 
a particular case, we find a set of  four-dimensional operators that, when composed  {\em off shell} with the Dirac, Maxwell 
and linearized gravity equations,  give  a system of  equations for spin weighted scalars associated to  the linear fields,  
that decouple {\em on shell}. Using these operator relations we give compact reconstruction formulae for solutions of the original 
spinor and tensor field equations in terms of solutions of the decoupled scalar equations. We also analyze the role of Killing 
spinors and Killing-Yano tensors in the spin weight zero equations and, in the case of spherical symmetry, we compare our 
four-dimensional formulation with the standard $2+2$ decomposition and particularize to the Schwarzschild-(A)dS black hole. 
Our results uncover  a pattern that  generalizes a number of  previous results on Teukolsky-like equations and Debye potentials 
for higher spin fields. 
\end{abstract}

\tableofcontents

\section{Introduction}

One of the most important open issues in General Relativity (GR) is the black hole stability problem, which consists in proving
the dynamical, non-linear stability of the Kerr metric within the set of solutions of  the Einstein equations. Due to the high complexity,
different levels  of simplifications are considered when approaching this problem; in the first place, one considers {\em linear} systems,  
which, from the physical point of view, represent the propagation of the fundamental classical fields on these spacetimes ignoring back reaction. 
The dynamical evolution of these fields is described by solutions of partial differential equations of 
tensorial or spinorial nature on the Lorentzian manifold that represents the spacetime, 
the structure of which depends on the kind of field we are  dealing with.
For example, the linearized Einstein equations are a set of ten (in four-dimensional GR) linear, second order, partial 
differential equations governing the evolution of the linearized metric. \\
The problem of analyzing solutions of these equations would be simplified if we were able to 
find  equivalent field equations for {\em scalar fields} encoding the dynamical degrees of freedom of the perturbative field, as scalar fields are simpler 
and, unlike spinor and tensor fields on a 
Lorentzian manifold,  carry an obvious notion of size.  
This turns out to be the case for the gravitational perturbations of the Schwarzschild black hole, as recently showed in 
\cite{Dotti}.
The proof of nonmodal linear stability of the Schwarzschild black hole in \cite{Dotti}  makes use of the fact that the 
linearization of a scalar curvature invariant $\Phi[h]$, $h_{\alpha \beta}$ the metric perturbation, 
  satisfies a wave-like equation which, according to the conventions 
of the present paper reads\footnote{we take the metric to have signature $(+---)$, whereby $\Box$ corresponds to $-\Box$ 
in \cite{Dotti}}
\begin{equation}\label{we}
 (\Box-\tfrac{8M}{r^3})\Phi=0,
\end{equation}
where $\Box=g^{\a\b}\c_{\a}\c_{\b}$ is the standard D'Alembertian, with $\c_{\a}$ the Levi-Civita connection.
Furthermore, for the odd sector,  a solution of  the linearized Einstein equations can be covariantly reconstructed from a 
scalar field satisfying (\ref{we})  by means of 
\begin{equation}\label{metricschw}
 h^{-}_{\a\b}=\tfrac{r^2}{3M} {}^{*}C_{\a}{}^{\g\d}{}_{\b}\c_{\g}\c_{\d}(r^3\Phi),
\end{equation}
where ${}^{*}C_{\a\b\g\d}$ is the dual Weyl tensor of the background Schwarzschild solution.
This suggests that there  exists a four-dimensional map transforming {\em off-shell} the linearized Einstein tensor, 
regarded as a linear differential operator on $h_{\alpha \beta}$, into the composition of the scalar 
wave operator  acting on $\Phi$ in  (\ref{we}) and 
the linear differential  operator $\Phi[h]$.
By {\em off-shell} we mean that this is an {\em operator} equality for operators acting on 
 the space of symmetric $(0,2)$ tensors (where  the perturbed metric  tensor  lives) and, as such,  
 it holds  regardless of any field equations satisfied by $h_{\alpha \beta}$.
If this is so, a natural question to ask, besides what the explicit form of this map is,  is
whether such an operator equality exists for more general spacetimes, in particular for the Kerr solution.
In this work we address this question for the class of vacuum Petrov type D spacetimes with cosmological constant, 
which includes the Kerr-(A)dS black hole as a particular case.
We  proof the existence of   maps transforming spinor/tensor field operators into scalar operators. These maps  have a universal form that depends on 
 the spin $\s$ of the field (for $\s=\tfrac{1}{2},1,2$) and the spin weight $s$ of the scalar $\Phi$, and 
(\ref{we}) corresponds to the particular case $(\s=2,s=0)$ on a Schwarzschild background. 
We find that the mechanism explaining why (\ref{we})-(\ref{metricschw}) solves the linearized Einstein equations 
is the transposition of linear differential operators introduced by  Wald in \cite{Wald}.
We also investigate the role of Killing-Yano tensors in the description of 
perturbative fields, since, although not stated in \cite{Dotti}, it turns out that the $\Phi$ solving (\ref{we}) 
(in the particular case of the Schwarzschild solution) can be written as 
\begin{equation}\label{Phi}
\Phi=Y^{\a\b}{}^{*}Y_{\g\d}\dot{C}_{\a\b}{}^{\g\d},
\end{equation}
where $Y_{\a\b}$ is a Killing Yano tensor, ${}^{*}Y_{\a\b}$ its dual, and $\dot{C}_{\a\b}{}^{\g\d}$ is the linearized Weyl tensor.\\

As is well-known, perturbations of rotating black holes are traditionally studied by the Teukolsky equations \cite{Teukolsky}, 
which are a set of decoupled scalar equations for the {\em extreme} spin weight components of perturbative fields of 
spin $\frac{1}{2}$, 1 and 2. As showed in \cite{Bini}, these equations can be put in a wave-like form by adding 
to the Levi-Civita derivative a  connection 1-form $\Gamma_{\a}$ (see (\ref{Gamma}) below for an explicit expression), 
that gives a  weigthed covariant derivative $\c_{\a}+p\Gamma_{\a}$, $p\in\mathbb{Z}$, and 
the weighted wave operator \cite{Andersson1}
\begin{equation}
 \teuk_{p}:=g^{\a\b}(\c_{\a}+p\Gamma_{\a})(\c_{\b}+p\Gamma_{\b}).
\end{equation}
The advantage of using this modified wave operator is that the Teukolsky equations adopt a very simple 
and elegant form in terms of it \cite{Bini, Andersson1}: 
\begin{equation}
 (\teuk_{2s}-4s^2\Psi_2)\Phi^{(s)}=0,
\end{equation} 
where the field $\Phi^{(s)}$ has spin weight $s$ and it is assumed a vacuum type D background spacetime
(the adjointness property of the Teukolsky system is also easily seen in terms of the modified wave operator, see 
subsection \ref{adjop-sec} below and references therein).
The extreme spin weight cases are those treated in the original work of Teukolsky. 
However, we are particularly interested in {\em spin weight zero} fields, both because they are truly 
(tetrad independent) scalar fields and also  because the scalar field  in 
(\ref{we}) is of this type. For gravitational perturbations,
decoupled equations for all the perturbed Weyl scalars have been derived in \cite{Andersson1} (for spin weight $s=\pm1$ the equations 
in \cite{Andersson1} are actually not decoupled, in the sense that they involve perturbed quantities other than the Weyl scalars).
In any case, the equations in \cite{Andersson1} are valid {\em on-shell}, i.e. the linearized vacuum Einstein equations are assumed to hold. 
Since we are interested in finding operator relations we cannot make this assumption. Working {\em off shell} 
is what ultimately allows us to find patterns  relating the equations for perturbed Weyl scalars and the linearized Einstein tensor, and these relations allow us 
to construct 
solutions of the linearized Einstein equations 
from  solutions of the decoupled scalar equations. 
In the following section we state our main results. They all have the form of operator relations for operators acting on Dirac, Maxwell 
and perturbed metric fields. On shell, they give a system of decoupled scalar wave-like equations implied by the (Dirac, Maxwell and linear gravity) 
field equations. Their off-shell validity is what allow us to construct solutions for the Dirac, Maxwell and linear gravity equations 
from solutions of simple scalar wave-like equations.

\subsection{Main results}

We recall that, in the Petrov classification,  type D spaces, which include the Kerr family, have  two 
principal null directions (PNDs) $o^A$, $\iota^A$ in terms of which the only non-trivial Weyl scalar of the curvature is $\Psi_2$.
In the following, the spinors $o^A$, $\iota^A$ (and the associated null vectors) will always refer to these PNDs.
In particular, we introduce the anti-self-dual 2-forms 
\begin{equation}\label{Mforms}
 \stackbin[]{0}{M}_{\a\b}:=2l_{[\a}m_{\b]}, \;\;\;
 \stackbin[]{1}{M}_{\a\b}:=2l_{[\a}n_{\b]}+2\bar{m}_{[\a}m_{\b]}, \;\;\; \stackbin[]{2}{M}_{\a\b}:=2\bar{m}_{[\a}n_{\b]},
\end{equation}
associated to the principal null tetrad $\{l^{\a},n^{\a},m^{\a},\bar{m}^{\a}\}$, 
and the following tensors, which are  anti-self-dual in each pair of indices and have the symmetries  of the Weyl tensor:
\begin{eqnarray}
 \stackbin[]{0}{W}_{\a\b\g\d}&:=&\stackbin[]{0}{M}_{\a\b}\stackbin[]{0}{M}_{\g\d}, \label{W0}\\
 \stackbin[]{2}{W}_{\a\b\g\d}&:=&\stackbin[]{0}{M}_{\a\b}\stackbin[]{2}{M}_{\g\d}
 +\stackbin[]{2}{M}_{\a\b}\stackbin[]{0}{M}_{\g\d}+\stackbin[]{1}{M}_{\a\b}\stackbin[]{1}{M}_{\g\d}, \label{W2}\\
 \stackbin[]{4}{W}_{\a\b\g\d}&:=&\stackbin[]{2}{M}_{\a\b}\stackbin[]{2}{M}_{\g\d}. \label{W4}
\end{eqnarray}
In this paper we prove  that there are  four-dimensional maps that transform off-shell (in a sense to be made precise below) 
the field operators of higher spin fields into scalar operators. 
Although these operators  have a generic form that depends on the spin $\s$ of the field and the spin weight $s$ of the 
related scalar (as we show in section \ref{adjop-sec}), 
for clarity purposes we give now, in separate form, the explicit operators for spins $\s=\frac{1}{2}$, 1 and 2, and for zero 
and extreme spin weight, $s=0,\pm\s$.\\

\noindent
Consider first spin $\s=\frac{1}{2}$; this case describes massless Dirac fields. We will use the 2-spinor formalism, 
in which the massless Dirac equation is $\c^{AA'}\chi_A=0$. \\
The proof of the theorem below can be found  in section \ref{diracsec}. \\

\begin{thm}[spin $\s=\frac{1}{2}$]\label{spin1/2}
 Consider an arbitrary spinor field $\chi_A$ on a vacuum Petrov type D spacetime with cosmological constant $\l$. 
 Then we have the following equalities:
 \begin{eqnarray}
 \Psi^{1/3}_2o^{B}\c^{B'}_{B}[\Psi^{-1/3}_2\c^{A}_{B'}\chi_{A}]
 &=&-\tfrac{1}{2}(\teuk_{+1}-\Psi_2 +\tfrac{2}{3}\l)[o^A\chi_A] \label{dirac1} \\
 \iota^{B}\c^{B'}_{B}[\Psi^{-1/3}_2\c^{A}_{B'}\chi_{A}]
 &=&-\tfrac{1}{2}(\teuk_{-1}-\Psi_2 +\tfrac{2}{3}\l)[\Psi^{-1/3}_2\iota^A\chi_A]. \label{dirac2}
\end{eqnarray}
\end{thm}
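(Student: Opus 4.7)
My plan is to work in the 2-spinor formalism adapted to the principal null tetrad $\{o^A,\iota^A\}$ and to reduce both sides of (\ref{dirac1})--(\ref{dirac2}) to a common expression in terms of directional derivatives of the component scalars $\phi_0:=o^A\chi_A$ and $\phi_1:=\iota^A\chi_A$. The structural input that makes the argument close is that, for vacuum Petrov type D spacetimes with cosmological constant, the relevant Bianchi identities are equivalent to $\c_{A'(A}\kappa_{BC)}=0$ for the canonical valence-two Killing spinor $\kappa_{AB}:=-2\Psi_2^{-1/3}o_{(A}\iota_{B)}$; this explains a priori why the powers $\pm 1/3$ on $\Psi_2$ in the statement are exactly those that let the weighted connection defining $\teuk_p$ emerge on the right-hand side. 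Because $\chi_A$ is \emph{arbitrary}, the identities must be established at the level of differential operators, with no simplification from a field equation available.

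First I would expand $\c^{A}{}_{B'}\chi_A$ by writing $\chi_A$ in terms of $\phi_0,\phi_1$ and the dyad, commuting the derivative past $o_A,\iota_A$ and collecting the spin-coefficient contributions; this produces a primed spinor whose components are linear combinations of the four GHP directional derivatives of $\phi_0,\phi_1$. Then, after multiplying by $\Psi_2^{-1/3}$ and contracting with $o^B\c^{B'}{}_B$ (resp.\ $\iota^B\c^{B'}{}_B$), Leibniz together with the weighted Bianchi identities (equivalently, with the Killing spinor equation for $\kappa_{AB}$) absorbs the derivatives of $\Psi_2^{\pm 1/3}$ into shifts of the boost and spin weights of the scalars being differentiated. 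The shifts that appear are precisely those encoded in the $p\Gamma_\a$ correction of $\teuk_p$, with $p=+1$ when acting on the spin-weight $+1/2$ scalar $\phi_0$ and $p=-1$ when acting on the spin-weight $-1/2$ scalar $\Psi_2^{-1/3}\phi_1$.

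To finish, I would apply the spinor Ricci identities to the resulting second-order expression so that commutators of derivatives are traded for curvature. In a vacuum type D background with cosmological constant $\l$, the Weyl contribution produces the $-\Psi_2$ term (acting on a spinor of spin weight $\pm 1/2$), and the trace part of the Ricci spinor ($\Lambda=\l/6$) produces the term $\tfrac{2}{3}\l$, yielding the zero-order correction $-\Psi_2+\tfrac{2}{3}\l$. Both (\ref{dirac1}) and (\ref{dirac2}) would then follow by matching coefficients of each directional-derivative monomial on the two sides.

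The main obstacle is bookkeeping rather than conceptual: many spin coefficients must be tracked consistently through repeated Leibniz and commutator applications, and the boost and spin weights must be adjusted each time $\Psi_2^{\pm 1/3}$ is moved past a derivative. Invoking the Killing spinor $\kappa_{AB}$ from the start, and using its weighted covariant-constancy instead of handling each of the four Bianchi relations separately, collapses most of this bookkeeping; after that reorganization, the proof reduces to a finite mechanical check that the zero-order remainder equals $-\Psi_2+\tfrac{2}{3}\l$.
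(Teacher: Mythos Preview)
Your proposal is correct and follows essentially the same route as the paper: expand $\c^{A}{}_{B'}\chi_A$ in GHP form, use the type D Bianchi identities to absorb the derivatives of $\Psi_2^{\pm1/3}$, and then the $\chi_1$-piece on the left of (\ref{dirac1}) cancels by the GHP commutator identity while the $\chi_0$-piece assembles into $-\tfrac{1}{2}(\teuk_{+1}-\Psi_2+\tfrac{2}{3}\l)\chi_0$. Two minor remarks: the paper packages the Bianchi identities not through the Killing spinor but through the $1$-form $A_{AA'}=\Psi_2^{1/3}\c_{AA'}\Psi_2^{-1/3}$ (which carries exactly the same content), and it obtains (\ref{dirac2}) in one line by applying the GHP prime operation to (\ref{dirac1}) together with the transformation law of $\teuk_p$, rather than repeating the calculation.
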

\noindent

Note that $\chi_{A}$ in equations (\ref{dirac1}) and (\ref{dirac2}) is an {\em arbitrary} $\s=\frac{1}{2}$ field, that is,  not satisfying any 
field equation, these are examples of   what we mean by {\em off shell} equations. {\em If} the field $\chi_A$ satisfies the Dirac equation, the left hand sides 
of (\ref{dirac1}) and (\ref{dirac2}) vanish and we get a system of two decoupled linear homogeneous (Teukolsky) 
equations for the scalar fields $o^A\chi_A$ and $\iota^A\chi_A$. Knowledge of the off shell relations above is crucial 
for constructing solutions of the original (Dirac, in this case) field equations from scalar (Debye) potentials. \\

The spin $\s=1$ case corresponds to Maxwell fields, which are solutions to $\c^{AA'}\phi_{AB}=0$.
The following theorem, proved in section \ref{maxwellsec}, 
shows that a similar structure to that of spin $\s=\frac{1}{2}$ occurs for this case:
\begin{thm}[spin $\s=1$, spinor version]\label{spin1}
 Consider an arbitrary symmetric spinor field $\phi_{AB}$ on a vacuum Petrov type D spacetime with cosmological constant $\l$. 
 Then we have the following equalities:
 \begin{align}
 \Psi^{2/3}_2o^{B}o^{C}\c^{B'}_{C}[\Psi^{-2/3}_2\c^{A}_{B'}\phi_{AB}]
 =-\tfrac{1}{2}(\teuk_{+2}-4\Psi_2 +\tfrac{2}{3}\l)[o^Ao^B\phi_{AB}] \\
 \Psi^{1/3}_2o^{(B}\iota^{C)}\c^{B'}_{C}[\Psi^{-2/3}_2\c^{A}_{B'}\phi_{AB}]
 =-\tfrac{1}{2}(\Box+2\Psi_2 +\tfrac{2}{3}\l)[\Psi^{-1/3}_2o^A\iota^B\phi_{AB}] \\
 \iota^{B}\iota^{C}\c^{B'}_{C}[\Psi^{-2/3}_2\c^{A}_{B'}\phi_{AB}]
 =-\tfrac{1}{2}(\teuk_{-2}-4\Psi_2 +\tfrac{2}{3}\l)[\Psi^{-2/3}_2\iota^A\iota^B\phi_{AB}]
 \end{align}
\end{thm}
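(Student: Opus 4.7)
The plan is to parallel the strategy used for Theorem~\ref{spin1/2} in the Dirac case and to adapt it to the symmetric spinor field $\phi_{AB}$, working throughout in the two-spinor formalism adapted to the principal null tetrad $\{l^\a,n^\a,m^\a,\bar m^\a\}$ with associated dyad $\{o^A,\iota^A\}$. The first step is to decompose $\phi_{AB}$ into its three GHP scalars $\phi_0=o^Ao^B\phi_{AB}$, $\phi_1=o^A\iota^B\phi_{AB}$, $\phi_2=\iota^A\iota^B\phi_{AB}$, and to expand the Maxwell object $\c^A_{B'}\phi_{AB}$ accordingly. Each of the three operator identities is then obtained by contracting $\c^A_{B'}\phi_{AB}$ with a second derivative $\c^{B'}_C$ projected along the appropriate combination of principal spinors, turning the left hand side into a second-order linear operator on $\phi_{AB}$.

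The two structural inputs needed are the same as in the spin-$\tfrac{1}{2}$ case. First, on a vacuum type D spacetime with cosmological constant $\l$ the Weyl spinor reads $\Psi_{ABCD}=6\Psi_2\,o_{(A}o_B\iota_C\iota_{D)}$, so that the spinor Ricci identities $\Box_{AB}$ and $\Box_{A'B'}$ produce, when acting on $\phi_{AB}$, only terms proportional to $\Psi_2$ and $\l$ whose dependence on the dyad is purely algebraic. Second, the vacuum Bianchi identities in type D express the logarithmic derivatives of $\Psi_2$ along the principal null directions as specific components of the connection one-form $\Gamma_\a$ entering the definition of $\teuk_p$; equivalently, the conjugation $\Psi_2^{p/3}\,\c^{B'}_C\!\bigl[\Psi_2^{-p/3}\,\cdot\,\bigr]$ produces, after projection, the weighted derivative $\c^{B'}_C+p\,\Gamma^{B'}_C$.

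With these two inputs, the three identities follow from a direct rearrangement: I would commute the two derivatives using the spinor Ricci identities, which produces $\Psi_2$ and $\l$ curvature contractions, and then absorb the $\Psi_2$ powers into the weighted derivatives via the Bianchi identities. In the extreme-weight cases the algebra reproduces the off-shell form of the classical Teukolsky derivation and yields $\teuk_{\pm 2}-4\Psi_2+\tfrac{2}{3}\l$. In the middle case the projector $o^{(B}\iota^{C)}$ mixes both principal spinors, so both contributions to $\Psi_{ABCD}$ enter the Ricci commutator; the weighted derivative is now of weight $p=0$ and the modified wave operator therefore collapses to the ordinary $\Box$, while the $\Psi_2^{\pm 1/3}$ factors on the right hand side reshuffle internally to produce the different $\Psi_2$ coefficient.

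The main obstacle is precisely this middle identity: one must verify that after all cancellations the $\Psi_2$-coefficient comes out with sign $+2$ (rather than $-4$ as in the extreme cases), which requires careful bookkeeping of the mixed contribution of $\Psi_{ABCD}$ in the Ricci commutator on $o^{(B}\iota^{C)}$ and of the precise normalization $\Psi_2^{-1/3}$ on the right hand side. This is a bookkeeping rather than a conceptual difficulty, but it is the step most prone to sign and factor errors; the extreme-weight identities, by contrast, are essentially off-shell Teukolsky derivations and should require little more than the Dirac-case calculation carried through with one additional spinor index.
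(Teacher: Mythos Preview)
Your outline is correct and, for the extreme weights $s=\pm1$, essentially matches the paper: both expand $\c^A_{B'}\phi_{AB}$ in the dyad (the paper's formula (\ref{maxwelleq})), project with $o^Bo^C$, and reduce the result to GHP weighted operators. The paper organizes the computation via Leibniz rule and the GHP commutator (\ref{comm4}) rather than the curvature operators $\Box_{AB}$ you invoke, but these are equivalent repackagings of the same identities; in particular the vanishing of the $\phi_1$ cross term is exactly (\ref{comm4}) with $a=2$, and the $s=-1$ case follows by priming.

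Where your route genuinely diverges from the paper is the middle identity. You propose to push through the same GHP bookkeeping with the mixed projector $o^{(B}\iota^{C)}$; the paper instead recognizes that $2\Psi_2^{-1/3}o^{(A}\iota^{B)}$ is the type~D Killing spinor $K_{AB}$ and exploits the twistor equation $\c_{C'(C}K_{AB)}=0$ together with the identities (\ref{k1})--(\ref{k4}). This lets the paper work entirely covariantly: the $\Box$ on $K^{AB}\phi_{AB}$ is isolated immediately, the curvature term comes from $\Box K_{AB}=(2\Psi_2+\tfrac{2}{3}\l)K_{AB}$, and the cross terms collapse because of the Killing spinor equation. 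Your approach would instead require tracking all four GHP directional derivatives simultaneously (since $o^{(B}\iota^{C)}$ has no definite weight structure in the same way $o^Bo^C$ does), making the cancellations you flag as ``most prone to sign and factor errors'' genuinely more laborious. The paper does remark in its conclusions that the direct route also works, so your plan is viable; but the Killing-spinor shortcut is both cleaner and explains conceptually why the $s=0$ coefficient is $+2\Psi_2$ rather than $-4\Psi_2$---it is the eigenvalue of $\psi_{ABCD}K^{CD}=-2\Psi_2K_{AB}$ entering with the opposite sign through $\Box K_{AB}$.
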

\noindent
The tensor version of this theorem is achieved by introducing an anti-self-dual 2-form $\wt{F}_{\a\b}=F_{\a\b}+i {}^{*}F_{\a\b}$, 
and by using the tensors (\ref{Mforms}):
\begin{thmbis}{spin1}[spin $\s=1$, tensor version]
 Consider an arbitrary anti-self-dual 2-form $\wt{F}_{\a\b}$ on a vacuum Petrov type D spacetime with cosmological constant $\l$. 
 Then we have the following equalities:
 \begin{eqnarray}
  \Psi^{2/3}_2\stackbin[]{0}{M}{}^{\b\g}\c_{\g}[\Psi^{-2/3}_2\c^{\a}\wt{F}_{\a\b}]&=&
 -(\teuk_{+2}-4\Psi_2 +\tfrac{2}{3}\l)[\stackbin[]{0}{M}{}^{\a\b}\wt{F}_{\a\b}] \label{maxwell1} \\
 \Psi^{1/3}_2\stackbin[]{1}{M}{}^{\b\g}\c_{\g}[\Psi^{-2/3}_2\c^{\a}\wt{F}_{\a\b}]&=&
 -(\Box+2\Psi_2 +\tfrac{2}{3}\l)[\Psi^{-1/3}_2\stackbin[]{1}{M}{}^{\a\b}\wt{F}_{\a\b}] \label{maxwell2} \\
 \stackbin[]{2}{M}{}^{\b\g}\c_{\g}[\Psi^{-2/3}_2\c^{\a}\wt{F}_{\a\b}]&=&
\nonumber -(\teuk_{-2}-4\Psi_2 +\tfrac{2}{3}\l)[\Psi^{-2/3}_2\stackbin[]{2}{M}{}^{\a\b}\wt{F}_{\a\b}]. \\
 \label{maxwell3}
\end{eqnarray}
\end{thmbis}
As in the Dirac field case, on shell the left hand sides of the above equations vanish and give decoupled scalar field  equations 
for the Maxwell scalars on the right hand side, 
generalizing Teukolsly equations  to non extreme spin weights. Once again, the fact that equations (\ref{maxwell1})-(\ref{maxwell3})
hold for {\em any}  anti-self-dual 2-form $\wt{F}_{\a\b}$ is what interest us most.\\

Finally, spin $\s=2$ corresponds to gravitational perturbations. 
We assume there is a  mono-parametric family of metrics $g_{\a \b}(\e)$, where the unperturbed metric $g_{\a \b}(0)=g_{\a \b}$ 
solves Einstein equations.
In what follows, we use alternatively $\frac{d}{d\e}|_{\e=0}$ and a dot over a quantity to denote linearization.
Assuming linearized Einstein vacuum equations (with cosmological constant) are also satisfied (that is, {\em on shell}), 
the linearized Bianchi identities are formally $\frac{d}{d\e}|_{\e=0}(\c^{AA'}\psi_{ABCD})=0$ (see e.g. \cite[Eq. (2.8)]{curtis}). 
The operators to be applied {\em off shell} to these identities follow a similar pattern to those of spin $\s=\frac{1}{2}$ 
and $\s=1$, as the following theorem shows:
\begin{thm}[spin $\s=2$, spinor version]\label{spin2SV}
 Let $(\mathcal{M}_{\e},g_{\a\b}(\e))$ be a monoparametric family of pseudo-Riemannian manifolds, analytic around $\e=0$, 
 such that $g_{\a\b}(0)$ satisfies the vacuum Einstein equations (with cosmological constant $\lambda$)  and is of Petrov type D.
 Let $\psi_{ABCD}$ be the Weyl curvature spinor of the metric $g_{\a\b}(\e)$.
 Then we have the following equalities:
 \begin{align}
 \tfrac{d}{d\e}|_{\e=0}[\Psi^{4/3}_2o^{B}o^{C}o^{D}o^{E}\c^{B'}_{E}(\Psi^{-4/3}_2\c^{A}_{B'}\psi_{ABCD})]
 =-\tfrac{1}{2}(\teuk_{+4}-16\Psi_2 +\tfrac{2}{3}\l)\dot{\Psi}_0 \\
 \tfrac{d}{d\e}|_{\e=0}[\Psi^{2/3}_2o^{(B}o^{C}\iota^{D}\iota^{E)}\c^{B'}_{E}(\Psi^{-4/3}_2\c^{A}_{B'}\psi_{ABCD})]
 =-\tfrac{3}{2}\tfrac{d}{d\e}|_{\e=0}[(\Box+2\Psi_2 +\tfrac{R}{6})\Psi^{1/3}_2] \\
 \tfrac{d}{d\e}|_{\e=0}[\iota^{B}\iota^{C}\iota^{D}\iota^{E}\c^{B'}_{E}(\Psi^{-4/3}_2\c^{A}_{B'}\psi_{ABCD})]
 =-\tfrac{1}{2}(\teuk_{-4}-16\Psi_2 +\tfrac{2}{3}\l)[\Psi^{-4/3}_2\dot{\Psi}_4]
 \end{align}
 where $\dot{\Psi}_i=\frac{d}{d\e}|_{\e=0}\Psi_i(\e)$, $i=0,4$.
\end{thm}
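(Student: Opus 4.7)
My strategy mirrors the approach used for Dirac (Theorem \ref{spin1/2}) and Maxwell (Theorem \ref{spin1}): establish an off-shell operator identity, valid on the background type-D spacetime for any totally symmetric 4-spinor $\varphi_{ABCD}$, and then specialize $\varphi_{ABCD}$ to the Weyl spinor $\psi_{ABCD}(\e)$ of the family. Concretely, the identity I aim to prove has the form
\begin{equation*}
\Psi_2^{4/3}\,o^B o^C o^D o^E \c^{B'}_E\bigl[\Psi_2^{-4/3}\c^A_{B'}\varphi_{ABCD}\bigr]
=-\tfrac{1}{2}\bigl(\teuk_{+4}-16\Psi_2+\tfrac{2}{3}\l\bigr)\bigl[o^A o^B o^C o^D\varphi_{ABCD}\bigr],
\end{equation*}
together with analogous identities for the middle projection $o^{(B}o^C\iota^D\iota^{E)}$ (producing the unweighted operator $\Box+2\Psi_2+R/6$ on a $\Psi_2^{-1/3}$-weighted scalar) and the lower extreme $\iota^B\iota^C\iota^D\iota^E$ (producing $\teuk_{-4}-16\Psi_2+\tfrac{2}{3}\l$ on the $\Psi_2^{-4/3}$-weighted $\iota^{ABCD}\varphi_{ABCD}$). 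These extend the pattern of the previous theorems to two additional index pairs, with a universal weight exponent $2\s/3=4/3$ for spin $\s=2$.

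The derivation of this off-shell identity proceeds by moving the $o^E$'s and the weight $\Psi_2^{-4/3}$ past the two $\c$'s, using the spinor commutator $[\c^{B'}_E,\c^A_{B'}]$ to reduce the second-order operator to $\Box$ plus a curvature contribution. On a vacuum type-D background the curvature spinor is determined algebraically by $\Psi_2$ and $\l$, so this contribution collects into the explicit mass term $-16\Psi_2+\tfrac{2}{3}\l$; the weight $\Psi_2^{\pm 4/3}$ is exactly what is required to convert the connection pieces arising from commuting $o^E$ with $\c$ into the weighted covariant derivative $\c+2s\Gamma$ of spin weight $+2$, upgrading $\Box$ to $\teuk_{+4}$. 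Once this identity is established, applying it to $\varphi_{ABCD}=\psi_{ABCD}(\e)$ for each $\e$ and then taking $\tfrac{d}{d\e}|_{\e=0}$ of both sides produces the three equations of the theorem: the extreme projections $o^A o^B o^C o^D\psi_{ABCD}(\e)$ and $\iota^A\iota^B\iota^C\iota^D\psi_{ABCD}(\e)$ are by definition $\Psi_0(\e)$ and $\Psi_4(\e)$ computed in the background tetrad, whose derivatives at $\e=0$ give $\dot\Psi_0$ and $\dot\Psi_4$; while the middle projection is a constant multiple of $\Psi_2(\e)$, so the $\Psi_2^{-2/3}$ weight from the off-shell identity turns it into $\Psi_2^{1/3}$ acted on by $\Box+2\Psi_2+R/6$.

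The main obstacle will be the first step --- verifying that the weighted projections produce precisely the shifts $-16\Psi_2+\tfrac{2}{3}\l$ (extreme) and $2\Psi_2+R/6$ (middle) with the correct numerical prefactors $-\tfrac{1}{2}$ and $-\tfrac{3}{2}$. The middle case is particularly delicate: at spin weight zero the outer operator simplifies to $\Box$ rather than a $\teuk_s$, but a non-trivial algebraic piece $2\Psi_2$ survives from the curvature, and the appearance of $\Psi_2^{1/3}$ on the right-hand side reflects the classical fact that $\Psi_2^{1/3}$ itself satisfies $(\Box+2\Psi_2+R/6)\Psi_2^{1/3}=0$ on any vacuum type-D background as a consequence of the second Bianchi identity; the right-hand side of the middle equation therefore captures precisely the first-order deviation of $\Psi_2(\e)^{1/3}$ from this background identity, which is exactly what the linearization of the LHS should measure.
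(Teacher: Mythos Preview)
Your strategy works for Dirac and Maxwell but fails here because the off-shell identity you propose is not correct with the stated coefficient. On the type D background, for an arbitrary symmetric $\varphi_{ABCD}$, the computation actually yields
\[
\Psi_2^{4/3}\,o^{BCDE}\c^{B'}_E\bigl[\Psi_2^{-4/3}\c^A_{B'}\varphi_{ABCD}\bigr]
=-\tfrac{1}{2}\bigl(\teuk_{+4}-10\Psi_2+\tfrac{2}{3}\l\bigr)\varphi_0,
\]
with $-10\Psi_2$ rather than $-16\Psi_2$: the coefficient $-(3p-2)\Psi_2$ coming from the $\teuk_p$ formula (\ref{boxp2}) matches the Teukolsky value $-4s^2\Psi_2$ only when $(2s-1)(s-1)=0$, which is why the pattern from $\s=\tfrac12,1$ does not carry over to $\s=2$.

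The missing $3\Psi_2\dot\Psi_0$ arises because the left-hand side of the theorem is the linearization of an expression in which \emph{all} quantities (tetrad, spin coefficients, connection, $\Psi_2$, $\psi_{ABCD}$) depend on $\e$. Since $(\c^A_{B'}\psi_{ABCD})|_{\e=0}=0$, the outer operator can be evaluated on the background, but the inner piece linearizes to $\c^A_{B'}|_0\dot\psi_{ABCD}+(\dot\c^A_{B'})\psi_{ABCD}|_0$; the second term, involving the perturbed connection acting on the background Weyl spinor, is exactly what your ``background identity applied to $\varphi=\psi(\e)$'' misses. In GHP language this is the contribution of $\dot\sigma,\dot\kappa$ multiplying the background $\Psi_2$ in the expansion (\ref{bianchieq}); the paper keeps these by working in the perturbed spacetime and then uses the Ricci identity (\ref{ricci2}), $\text{\th}\sigma-\text{\dh}\kappa=(\rho+\bar\rho)\sigma-(\tau+\bar\tau')\kappa+\Psi_0$, to convert them into $3\Psi_2\Psi_0$, which linearizes to the required shift. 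The $s=0$ case is similarly affected: the right-hand side there is the linearization of the full $(\Box+2\Psi_2+R/6)\Psi_2^{1/3}$, containing $\dot\Box$ and $\dot R$ contributions that come from varying the geometry and cannot be obtained from a background operator applied to a component of $\dot\psi$.
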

\noindent
This theorem shows how to map off shell the linearized Bianchi identities to decoupled equations for perturbed Weyl scalars.
However, gravitational perturbations are traditionally described in terms of the perturbed metric 
$h_{\a \b} = \dot g_{\a \b} =\frac{d}{d\e}|_{\e=0} g_{\a \b}(\e)$, which is a solution 
to the linearized Einstein equations $\dot{G}_{\a\b}[h]+\l h_{\a\b}=0$, where 
$\dot{G}_{\a\b}[h] =  \frac{d}{d\e}|_{\e=0} G_{\a \b}(\e)$ is the linearized Einstein tensor, which is a  
--$g_{\a \b}$ dependent-- linear functional on $h_{\a \b}$:
\begin{equation}
 \dot{G}_{\a\b}[h]=-\tfrac{1}{2}\Box h_{\a\b}-\tfrac{1}{2}\c_{\a}\c_{\b}h+\c^{\g}\c_{(\a}h_{\b)\g}
 +\tfrac{1}{2}g_{\a\b}(\Box h -\c^{\g}\c^{\d}h_{\g\d}),
\end{equation}
where $h=g^{\g\d}h_{\g\d}$.
In order to relate the perturbed Weyl scalars in theorem (\ref{spin2SV}) to the linearized Einstein tensor,
we use the linearized Bianchi identities in the following way: 
let $O^{\a\b\g}$ be a linear differential operator such that $O^{\a\b\g}=O^{[\a\b]\g}$ and $g_{\a\g}O^{\a\b\g}=0$ 
(see section \ref{gravsec} for explicit expressions of $O^{\a\b\g}$ in spinor form). 
As we will show, applying $O^{\a\b\g}$ to the Bianchi identities on an {\em arbitrary} spacetime, one gets
\begin{equation}\label{keyeq0}
 O^{\a\b\g}\c^{\d}C_{\a\b\g\d}=-O^{\a\b\g}\c_{\a}R_{\b\g},
\end{equation}
where $C_{\a\b\g\d}$ is the Weyl tensor.
The idea is to choose $O^{\a\b\g}$ such that the left hand side of (\ref{keyeq0})
is a decoupled equation for some Weyl scalar plus additional terms that vanish when linearizing.
When  we linearize the right hand side 
of (\ref{keyeq0}) around a vacuum solution (i.e. with $R_{\a\b}|_{\e=0}=0$), the linearization operator $\frac{d}{d\e}|_{\e=0}$ 
commutes with $O^{\a\b\g}\c_{\a}$ 
and we are left with a {\em background} operator acting on the linearized Ricci tensor:
\begin{equation}
 \left.\tfrac{d}{d\e}\right|_{\e=0}\left[O^{\a\b\g}\c^{\d}C_{\a\b\g\d}\right] 
 =-O^{\a\b\g}\c_{\a}\left[\left.\tfrac{d}{d\e}\right|_{\e=0}R_{\b\g}\right].
\end{equation}
Note that the symmetries of $O^{\a\b\g}$ are such that we can add a term propotional to the metric in 
the right hand side of (\ref{keyeq0}), this allows to replace $R_{\a\b}$ by the Einstein tensor and to include a cosmological 
constant term (in which case we consider {\em $\l$-vacuum} background solutions, $(G_{\a\b}+\l g_{\a\b})|_{\e=0}=0$). 
See section \ref{linbianchi} for details. 
When combined with theorem (\ref{spin2SV}), and using the tensors (\ref{W0}), (\ref{W2}) and (\ref{W4}),
the previous idea leads to the following result:
\begin{thmbis}{spin2SV}[spin $\s=2$, tensor version]\label{spin2}
 Consider an arbitrary metric perturbation $h_{\a\b}$ on a vacuum Petrov type D spacetime with cosmological constant $\l$.
 Then we have the following equalities:
 \begin{equation}
 \Psi_2^{4/3} \stackbin[]{0}{W}{}^{\a\g\b\d}\c_{\d}\left[\Psi_2^{-4/3}\c_{\g}(\dot{G}_{\a\b}[h]+\l h_{\a\b})\right]
 =(\teuk_{+4}-16\Psi_2 +\tfrac{2}{3}\l)\dot\Psi_0[h] \label{grav1} 
 \end{equation}
 \begin{multline}
 \Psi^{2/3}_{2}\stackbin[]{2}{W}{}^{\a\g\b\d}\c_{\d}\left[\Psi^{-4/3}_2\c_{\g}(\dot{G}_{\a\b}[h]+\l h_{\a\b})\right]\\
  =6\left[(\Box+8\Psi_2 +\tfrac{2}{3}\l)[\Psi^{-2/3}_2\dot{\Psi}_2[h]]+3(\dot{\Box}_h+\tfrac{\dot{R}_h}{6})\Psi^{1/3}_2\right], \label{grav2}
 \end{multline}
 \begin{equation}
  \stackbin[]{4}{W}{}^{\a\g\b\d}\c_{\d}\left[\Psi_2^{-4/3}\c_{\g}(\dot{G}_{\a\b}[h]+\l h_{\a\b})\right]
  =(\teuk_{-4}-16\Psi_2 +\tfrac{2}{3}\l)[\Psi^{-4/3}_2\dot\Psi_4[h]].  \label{grav3}
 \end{equation}
\end{thmbis}

In the next sections we put the equalities in the previous theorems in an operator identity form in the spirit of 
\cite{Wald}. This provides a way to reconstruct solutions of the original field equations from solutions of the decoupled 
equations of which  (\ref{we})-(\ref{metricschw}) is a particular case (see lemma \ref{lemma-schw} and below).\\
The proof of all theorems requires  combining spinor and Geroch-Held-Penrose (GHP) techniques, 
although in section \ref{sphsym-sec} we also give some alternative proofs using the $2+2$ decomposition of 
warped product spacetimes, which is useful for connecting our formalism to the traditional approach in the spherically 
symmetric case.

\subsection{Conventions and overview}

We will assume the spacetime to be a $3+1$ dimensional, orientable, Lorentzian manifold with metric signature $(+---)$, 
and we further assume that it admits a spinor structure.
Greek indices refer to spacetime indices, and (primed and unprimed) latin capital indices are spinor indices. 
Additional notation, when needed, will be explained in the corresponding sections.
Throughout the paper, we will omit the soldering forms $\sigma^{\a}{}_{AA'}$ for the correspondence between spinors
and tensors. For background on the 2-spinor and GHP formalisms, see for example \cite{Penrose1,Penrose2}. 
We will often use `$\l$-vacuum spacetime' for referring to a spacetime which is vacuum apart from a nonzero 
cosmological constant $\l$.
For the sign conventions we use regarding curvature tensors, see appendix \ref{appA}.

In section \ref{sec-GHP} we explain the methods we will use in the calculations of this paper, 
in particular the basics of the GHP formalism, the properties of Petrov type D spacetimes relevant 
for this work, and a review of Wald's method of adjoint operators, together with a unified form of 
the operator to be applied to a spin-$\s$ field in order to get decoupled equations for its components.
Sections \ref{diracsec}, \ref{maxwellsec} and \ref{gravsec} are devoted to the proof of theorems 
\ref{spin1/2}, \ref{spin1} and \ref{spin2SV} respectively, we also give covariant, compact expressions for solutions 
of the field equations in terms of solutions of decoupled equations; in particular, in section \ref{gravsec} we 
show in detail how to relate the equations for perturbed Weyl scalars to the linearized Einstein tensor, and then how to 
construct a solution of the linearized Einstein equations from solutions of the decoupled equations.
In section \ref{sphsym-sec} we give the relation of our methods and results with the $2+2$ decomposition of spacetimes 
with warped product structure, particularized to the Schwarzschild-(A)dS solution. 
In particular, we demonstrate the origin of (\ref{we}), (\ref{metricschw}) (section \ref{met.rec.schw}) and 
(\ref{Phi}) (section \ref{kyt.schw}).
Finally, the conclusions of this work are presented in section \ref{conclusions}, together with a summary of previously 
known results. 
We also include an appendix collecting relevant formulae for the proofs of the main theorems.

\section{Spinor and GHP methods}\label{sec-GHP}

The purpose of this section is to introduce the different techniques we will use in the calculations of this paper.
In section \ref{prelimsec} we discuss briefly the spinor fields we will consider in this work and the associated 
scalar, decoupled equations.
In section \ref{ghp-sec} we give the basics of the GHP formalism needed to understand 
the notation and calculations of the next sections (we mainly follow \cite{Geroch:1973am} and section 4.12 in \cite{Penrose1}).
In section \ref{teuk-sec} we present the compact form of the Teukolsky equations using weighted wave operators 
\cite{Bini, Andersson1}, we will use them in the case of extreme spin weight.
The characteristics of Petrov type D spacetimes relevant to this work are presented in section \ref{typeD-sec}, 
together with the properties of the Killing spinor associated to these solutions. Finally, in section \ref{adjop-sec} 
we recall the method of adjoint operators due to Wald \cite{Wald}, that will be central in this work, and we give the 
general $(\s,s)$-operator that maps off-shell the field equations into scalar, decoupled equations for the spin weight $s$ 
component of a spin-$\s$ field.

\subsection{Preliminaries}\label{prelimsec}

The fields one typically considers in the study of black hole stability are obtained as a generalization of the situation
in the Minkowski space. The possible physical fields that can exist on a flat spacetime are in turn determined by very 
general symmetry arguments. More precisely, the (massless) physical fields are classified by studying the massless 
irreducible representations of the universal covering of the Poincar\'e group, which is the isometry group of Minkowski
space. This leads to the notion of massless free fields of {\em spin} \footnote{or, more properly for the massless case, 
{\em helicity}.} $\s$: totally symmetric spinors $\phi_{A_1...A_{2\s}}=\phi_{(A_1...A_{2\s})}$ with $2\s$ indices 
satisfying the equation\footnote{in the case of spin $\s=0$ the field satisfies the massless wave equation $\Box\phi=0$.}
\begin{equation}\label{field-mink}
 \p^{A_1A'_1}\phi_{A_1...A_{2\s}}=0,
\end{equation}
where $\p_{AA'}=\sigma^{\a}{}_{AA'}\p_{\a}$, with $\sigma^{\a}{}_{AA'}$ the soldering form and $\p_{\a}$  
derivatives with respect to global inertial coordinates. Physically important examples of (\ref{field-mink}) 
are the Dirac ($\s=1/2$), Maxwell ($\s=1$) and linearized gravitational fields ($\s=2$).
For curved spacetimes, however, the existence of spinor fields depends on whether or not it is possible to define 
a spinor structure, for which there are some topological obstructions \cite{Geroch,Geroch2}. If  the topological 
conditions are met, spinors are defined by using the local $SO(1,3)$ symmetry, and the generalization of 
(\ref{field-mink}) to a curved space is achieved by the minimal substitution $\p_{\a}\to\c_{\a}$,
\begin{equation}\label{field-curv}
\c^{A_1A'_1}\phi_{A_1...A_{2\s}}=0,
\end{equation}
where now $\phi_{A_1...A_{2\s}}$ is a cross section of the corresponding spinor bundle. 
The {\it spin} now labels the irreducible representations of $SL(2,\mathbb{C})$, which is the covering of $SO(1,3)$.
On the other hand, even if the spacetime admits a spin structure, 
the existence of non trivial solutions of (\ref{field-curv}) for $\s>1$ is constrained by algebraic consistency conditions:
if we assume (\ref{field-curv}) holds and take an extra covariant derivative we find that
\begin{equation}\label{buchdahl}
 \phi_{ABC(A_4...A_{2\s}}\psi_{A_{2\s+1})}{}^{ABC}=0,
\end{equation}
where $\psi_{ABCD}$ is the Weyl curvature spinor.
This equation is sometimes referred to as the {\it Buchdahl constraint}, and it imposes strong restrictions on the geometry 
of the background spacetime (see e.g. \cite{Szekeres}). 
Moreover, the gravitational perturbations of a generic curved space, represented by the linearized Weyl spinor, do not 
satisfy (\ref{field-curv}), i.e. they involve a non-trivial right hand side in this equation (see e.g. \cite{curtis}), 
and the algebraic specialty is generally not preserved under perturbations \cite{Araneda}.
Therefore, we will focus on the spin $\s=\frac{1}{2},1$ cases of (\ref{field-curv}), 
while for the spin 2 case we will use the linearization of (\ref{field-curv}).

As mentioned in the introduction, a useful simplification in the study of solutions of tensorial/spinorial field equations 
would be to find a scalar equation describing the system. Of course, one can obtain a {\em set} of scalar equations 
on an arbitrary spacetime by simply projecting the field equations on a basis frame at each tangent space. 
Simplifications useful for calculations are achieved if the basis 
frame one chooses can be related to the particular geometric structure of the spacetime. This is the case for example when the 
geometry possesses distinguished directions, like in the algebraically special spacetimes of the Petrov classification. The 
Petrov type D is especially relevant for the black hole stability problem, since the Kerr family of stationary, vacuum black 
hole solutions corresponds to this case. Two (repeated) null directions are preferred at each point in this class of spaces, 
and, by adapting a null frame to them, a formalism especially suited for this situation can be implemented, namely the GHP 
formalism. However, the system of equations obtained this way typically consists of several interrelated equations which in 
principle cannot be analyzed separately. That is to say, the equations are generally {\em coupled}, in the sense that each 
one of them involves more than one of the components of the field relative to the basis frame one have chosen. 

In a flat space, given a spin-$\s$ field (\ref{field-mink}), a single scalar equation can be obtained by 
using {\em Killing spinors}\footnote{not to be confused with the homonymous object in the mathematics and supergravity 
communities \cite{vanNieu}} (see \cite[section 6.4]{Penrose2}): if $L_{A_1...A_{2\s}}$ is
a $2\s$-index Killing spinor, the field $\Phi=L^{A_1...A_{2\s}}\phi_{A_1...A_{2\s}}$ satisfies 
the wave equation, $\Box\Phi=0$. In curved spacetimes, the situation is more subtle 
because the existence of Killing spinors imposes restrictions on the curvature to algebraically special cases. 
On the other hand, even if Killing spinors are available, it is expected the appearance of curvature terms in wave-like 
equations for $\Phi=L^{A_1...A_{2\s}}\phi_{A_1...A_{2\s}}$. For example,
Petrov type D spacetimes admit a 2-index Killing spinor $K_{AB}$ (see subsection \ref{typeD-sec} below), which is related 
to various symmetries of these spaces. The scalar field $\Phi\equiv K^{AB}\phi_{AB}$, where $\phi_{AB}$ is a spin-1 field
(\ref{field-curv}), can be shown to satisfy the {\em Fackerell-Ipser equation}
\begin{equation}
 (\Box+2\Psi_2)\Phi=0.
\end{equation}
This equation was found in \cite{Fackerell} by other means in the particular case of the Kerr solution, but it is valid 
for all type D vacuum spacetimes.

On the other hand, it is possible that the scalar equations we are looking for involve wave operators distinct from 
the traditional D'Alembertian $\Box=g^{\a\b}\c_{\a}\c_{\b}$. We can think of this situation in the following geometrical terms.
Let $P\xrightarrow{\pi}\mathcal{M}$ be a principal fiber bundle with structure group $G$ over the spacetime 
$\mathcal{M}$, and let $\w_{\a}$ be a $\mathfrak{g}$-valued connection 1-form on $P$, where $\mathfrak{g}=\text{Lie}(G)$ is 
the Lie algebra of $G$.
Tensor fields on $\mathcal{M}$ are sections of associated bundles to $P$, $E=P\times_{\rho}V$, where
$(\rho,V)$ is a representation of $G$ on the vector space $V$. The covariant derivative on $E$ is induced by the 
connection 1-form on $P$, and, acting on a cross-section $\psi$ of $E$, it is explicitly given by
\begin{equation}\label{der}
 \Theta_{\a}\psi=\p_{\a}\psi-\rho'(\w_{\a})\psi,
\end{equation}
where $\rho':\mathfrak{g}\to \text{gl}(V)$ is the associated representation of the Lie algebra $\mathfrak{g}$ 
(see e.g. \cite{Nakahara}). 
Formula (\ref{der}) is very useful; it generalizes the expression for the covariant derivative of tensor and 
spinor fields occurring in General Relativity or Yang-Mills theories. (For example, the covariant derivative 
appearing in (\ref{field-curv}) is a particular case of (\ref{der}), where the Lie algebra is 
$\mathfrak{g}=\mathfrak{so}(1,3)$ and the connection 1-form is the spin connection.) 
The equation we are looking for may then involve a wave operator formed as $g^{\a\b}\Theta_{\a}\Theta_{\b}$. 
This is actually the case for the Teukolsky equation \cite{Andersson1}, where, in the context of the GHP formalism, 
the gauge group is $\mathbb{C}^{\times}$ and its representations on the fields 
of interest are labeled by an integer number $p$; see section \ref{teuk-sec}.

\subsection{GHP formalism}\label{ghp-sec}

The GHP calculus is especially suited for situations in which two null directions $l^{\a}$ and $n^{\a}$ on the spacetime 
are distinguished, like in the case of Petrov type D spaces we are interested in. 
We align a spin dyad $\{o^A,\iota^A\}$ to these null directions, with $o_A\iota^A=1$.
The relation of this dyad with a null tetrad is as usual,
\begin{equation}\label{tetrad}
 l^{\a}=o^A\bar{o}^{A'}, \;\;\; n^{\a}=\iota^A\bar{\iota}^{A'}, \;\;\; m^{\a}=o^A\bar{\iota}^{A'}, 
 \;\;\; \bar{m}^{\a}=\iota^A\bar{o}^{A'}.
\end{equation}
As the normalization is preserved throughout the spacetime under $o^A\to\lambda o^A$, $\iota^A\to\lambda^{-1}\iota^A$, 
where $\lambda$ is a nowhere vanishing complex scalar field, 
then fixing the null directions reduces the local $SO(1,3)$ freedom in choosing an orthonormal tetrad, to a 
gauge freedom represented by a 2-dimensional subgroup of $SO(1,3)$, which is isomorphic to $\mathbb{C}^{\times}$ 
(the multiplicative group of complex numbers).
In more geometrical terms \cite{ehlers}, we get a reduction of the orthonormal frame bundle, with structure group $SO(1,3)$, to a 
principal fiber bundle $B\xrightarrow{\pi}\mathcal{M}$ with structure group $\mathbb{C}^{\times}$. 
The $\text{Lie}(\mathbb{C}^{\times})$-valued connection form is
\begin{equation}\label{connection}
 \w_{\a}=\e n_{\a}-\e' l_{\a}+\b' m_{\a}-\b\bar{m}_{\a},
\end{equation}
and it transforms under $\mathbb{C}^{\times}$ as the gauge potential of an abelian Lie group, 
$\w_{\a}\to\w_{\a}+\lambda^{-1}\c_{\a}\lambda$.\\
The components of a tensor field projected on the null tetrad (or a spinor field projected on the dyad) 
are complex fields on the spacetime or, more precisely, fields $\eta:B\to\mathbb{C}$, since they are associated to 
a particular frame.
These components have a well-defined transformation law under a change of frame; in other words, 
they transform under the representation $\Pi_{p,q}:\mathbb{C}^{\times}\to\text{GL}(\mathbb{C})$ of 
$\mathbb{C}^{\times}$ on $\mathbb{C}$ given by
\begin{equation}\label{rep}
 \eta\mapsto\Pi_{p,q}(\lambda)\eta:=\lambda^{p}\bar\lambda^{q}\eta,
\end{equation}
for some integers $p,q$. Elements transforming under this representation are known as {\em weighted quantities of type $\{p,q\}$}, 
or, alternatively, quantities of {\em spin weight} $s=(p-q)/2$ and {\em boost weight} $b=(p+q)/2$. 
While the quantities of a well-defined type $\{p,q\}$ form a complex vector space (carrying the representation (\ref{rep}) 
of $\mathbb{C}^{\times}$), the quantities of all types together form a graded algebra.
The properly weighted spin coefficients are $\rho,\tau,\kappa,\sigma,\rho',\tau',\kappa',\sigma'$ 
(see \cite[Eq.(4.5.21)]{Penrose1} for the definition of the spin coefficients as derivatives of the dyad spinors), while 
the coefficients $\b,\e,\b',\e'$ do not have a well-defined type, they enter in the formalism in the definition of 
the connection form (\ref{connection}). 
On the other hand, the components $\chi_0=\chi_A o^A$ and $\chi_1=\chi_A\iota^A$ of a spinor field $\chi_A$ are of type 
$\{1,0\}$ and $\{-1,0\}$ respectively, while the Maxwell components $\phi_i$ are of type $\{2-2i,0\}$, $i=0,1,2$, and
the Weyl scalars $\Psi_i$, $i=0,...,4$ have types $\{4-2i,0\}$. \\
The representation of the Lie algebra $\mathfrak{g}=\text{Lie}(\mathbb{C}^{\times})$ associated to 
(\ref{rep}), $\pi_{p,q}:\mathfrak{g}\to\text{gl}(\mathbb{C})$ , is easily calculated as
\begin{equation}
 \pi_{p,q}(X)\eta=(pX+q\bar{X})\eta.
\end{equation}
Then, according to (\ref{der}), 
the covariant derivative on sections of the associated bundles $E_{p,q}:=B\times_{\Pi_{p,q}}\mathbb{C}$ is
\begin{equation}\label{Theta}
 \Theta_{\a} = \c_{\a}-p\w_{\a}-q\bar\w_{\a}
\end{equation}
(the inclusion of the Levi-Civita derivative $\c_{\a}$ allows to apply this formula to weighted spinor and tensor fields, besides 
weighted scalars).
The traditional weighted derivative operators \th , \th$'$, \dh{} and \dh$'$ are simply the directional derivatives along 
the null tetrad, \th$=l^{\a}\Theta_{\a}$, \th$'=n^{\a}\Theta_{\a}$, \dh$=m^{\a}\Theta_{\a}$ and \dh$'=\bar{m}^{\a}\Theta_{\a}$.
This is in contrast to the non-weighted directional derivatives of the Newman-Penrose formalism, 
$D=l^{\a}\c_{\a}$, $D'=n^{\a}\c_{\a}$, $\d=m^{\a}\c_{\a}$ and $\d'=\bar{m}^{\a}\c_{\a}$. The relation between both classes 
of operators can be inferred from (\ref{Theta}) and (\ref{connection}): acting on a type $\{p,q\}$ quantity, we have
\begin{eqnarray}
 \text{\th}&=&D-p\e-q\bar\e, \label{thorn} \\
 \text{\dh}&=&\d-p\b+q\bar\b', \label{eth} \\
 \text{\th}'&=&D'+p\e'+q\bar\e',\label{thorn'} \\
 \text{\dh}'&=&\d'+p\b'-q\bar\b. \label{eth'}
\end{eqnarray}
A very useful GHP operation taking weighted quantities into weighted quantities is the so-called {\em prime} operation, 
which is defined by the interchange $o^{A}\leftrightarrow\iota^{A}$.
It is easy to see that if $\eta$ is of type $\{p,q\}$, then $\eta'$ is of type $\{-p,-q\}$. 
This operation allows to halve the number of Newman-Penrose equations which are properly weighted, namely
the Ricci identities involving derivatives of the weighted spin coefficients\footnote{the greek letter $\Lambda$ 
(traditionally associated to the cosmological constant) represents the scalar curvature and is conventional in the 
two-spinor formalism \cite{Penrose1,Penrose2}, this is the reason why we use $\l$ for the cosmological constant.},
\begin{eqnarray}
 \text{\th}\rho-\text{\dh}'\kappa&=&\rho^2+\sigma\bar\sigma-\bar\kappa\tau-\tau'\kappa+\Phi_{00}\\
 \text{\th}\sigma-\text{\dh}\kappa&=&(\rho+\bar\rho)\sigma-(\tau+\bar\tau')\kappa+\Psi_0 \label{ricci2}\\
 \text{\th}\tau-\text{\th}'\kappa&=&(\tau-\bar\tau')\rho+(\bar\tau-\tau')\sigma+\Psi_1+\Phi_{01}\\
 \text{\dh}\rho-\text{\dh}'\sigma&=&(\rho-\bar\rho)\tau+(\bar\rho'-\rho')\kappa-\Psi_1+\Phi_{01}\\
 \text{\dh}\tau-\text{\th}'\sigma&=&-\rho'\sigma-\bar\sigma'\rho+\tau^2+\kappa\bar\kappa'+\Phi_{02}\\
 \text{\th}'\rho-\text{\dh}'\tau&=&\rho\bar\rho'+\sigma\sigma'-\tau\bar\tau-\kappa\kappa'-\Psi_2-2\Lambda.
\end{eqnarray}
The prime of these equations gives six more properly weighted Ricci equations. 
The remaining Newman-Penrose equations involve derivatives of spin coefficients not properly weighted; 
in the GHP formalism they enter in the commutation relations for the derivative operators:
\begin{eqnarray}
\nonumber [\text{\th},\text{\th}']&=&(\bar\tau-\tau')\text{\dh}+(\tau-\bar\tau')\text{\dh}'-p(\kappa\kappa'-
   \tau\tau'+\Psi_2+\Phi_{11}-\Lambda)\\
 & & -q(\bar\kappa\bar\kappa'-\bar\tau\bar\tau'+\bar\Psi_2+\Phi_{11}-\Lambda),\\
\nonumber [\text{\th},\text{\dh}]&=&\bar\rho\text{\dh}+\sigma\text{\dh}'-\bar\tau'\text{\th}-\kappa\text{\th}'
  -p(\rho'\kappa-\tau'\sigma+\Psi_1)\\
 & &-q(\bar\sigma'\bar\kappa-\bar\rho\bar\tau'+\Phi_{01}),\\
\nonumber [\text{\dh},\text{\dh}']&=&(\bar\rho'-\rho')\text{\th}+(\rho-\bar\rho')\text{\th}'+p(\rho\rho'+\sigma\sigma'
 +\Psi_2-\Phi_{11}-\Lambda)\\
 & & -q(\bar\rho\bar\rho'-\bar\sigma\bar\sigma'+\bar\Psi_2-\Phi_{11}-\Lambda).
\end{eqnarray}
We also note the following commutation relation, in which $a$ is an arbitrary constant and $\eta$ is type $\{p,0\}$:
\begin{eqnarray}
\nonumber [\text{\th}-a\rho,\text{\dh}-a\tau]\eta &=&\bar{\rho}(\text{\dh}-a\tau)\eta-\bar{\tau}'(\text{\th}-a\rho)\eta-(2a+p)\Psi_1\eta\\
\nonumber & & +\left(\text{\dh}'\eta+p\tau'\eta+a\eta(\text{\dh}'-\bar\tau+\tau')\right)\sigma \\
 & & -\left(\text{\th}'\eta+p\rho'\eta+a\eta(\text{\th}'-\bar\rho'+\rho')\right)\kappa. \label{comm4generic}
\end{eqnarray}
For type D spacetimes, the terms proportional to $\sigma$, $\kappa$ and $\Psi_1$ vanish, and we are left with 
\cite{Andersson1}
\begin{equation}\label{comm4}
 [\text{\th}-a\rho,\text{\dh}-a\tau]\eta=\bar{\rho}(\text{\dh}-a\tau)\eta-\bar{\tau}'(\text{\th}-a\rho)\eta,
\end{equation}
this relation will be very useful in the following sections.

In order to find the spinor operators that map field equations to decoupled scalar equations, 
we will need the explicit form of $\c^{A}_{B'}\chi_{A}$, $\c^{A}_{B'}\phi_{AB}$ and $\c^{A}_{B'}\psi_{ABCD}$ in its components 
in the $\{o_A,\iota_A\}$ basis. 
This can be obtained readily by using formulae (4.12.27) in \cite{Penrose1}. 
For Dirac fields, this gives
\begin{eqnarray}
\nonumber \c^{A}_{B'}\chi_{A}&=&[(\text{\th}'-\rho')\chi_0-(\text{\dh}-\tau)\chi_1]\bar{o}_{B'}\\
 & & +[(\text{\th}-\rho)\chi_1-(\text{\dh}'-\tau')\chi_0]\bar{\iota}_{B'}, \label{diraceq}
\end{eqnarray}
while for the Maxwell spinor, we get
\begin{eqnarray}
\nonumber \c^{A}_{B'}\phi_{AB}&=&[(\text{\dh}-2\tau)\phi_1-(\text{\th}'-\rho')\phi_0+\sigma\phi_2]\iota_B \bar{o}_{B'}\\
\nonumber & & +[(\text{\th}'-2\rho')\phi_1-(\text{\dh}-\tau)\phi_2+\kappa'\phi_0]o_B \bar{o}_{B'}\\
\nonumber & & +[(\text{\dh}'-\tau')\phi_0-(\text{\th}-2\rho)\phi_1-\kappa\phi_2]\iota_B \bar{\iota}_{B'}\\
 & & +[(\text{\th}-\rho)\phi_2-(\text{\dh}'-2\tau')\phi_1-\sigma'\phi_0]o_B \bar{\iota}_{B'}, \label{maxwelleq}
\end{eqnarray}
and similarly for the Weyl spinor
\begin{eqnarray}
\nonumber \c^{A}_{B'}\psi_{ABCD}&=&-[(\text{\th}'-\rho')\Psi_0-(\text{\dh}-4\tau)\Psi_1-3\sigma\Psi_2]\iota_{BCD}\bar{o}_{B'}\\
\nonumber & & +3[(\text{\th}'-2\rho')\Psi_1-(\text{\dh}-3\tau)\Psi_2+\kappa'\Psi_0-2\sigma\Psi_3]\iota_{(BC}o_{D)}\bar{o}_{B'}\\
\nonumber & & -3[(\text{\th}'-3\rho')\Psi_2-(\text{\dh}-2\tau)\Psi_3+2\kappa'\Psi_1-\sigma\Psi_4]\iota_{(B}o_{CD)}\bar{o}_{B'}\\
\nonumber & & +[(\text{\th}'-4\rho')\Psi_3-(\text{\dh}-\tau)\Psi_4+3\kappa'\Psi_2]o_{BCD}\bar{o}_{B'}\\
\nonumber & & -[(\text{\th}-4\rho)\Psi_1-(\text{\dh}'-\tau')\Psi_0+3\kappa\Psi_2]\iota_{BCD}\bar{\iota}_{B'}\\
\nonumber & & +3[(\text{\th}-3\rho)\Psi_2-(\text{\dh}'-2\tau')\Psi_1+2\kappa\Psi_3-\sigma'\Psi_0]\iota_{(BC}o_{D)}\bar{\iota}_{B'}\\
\nonumber & & -3[(\text{\th}-2\rho)\Psi_3-(\text{\dh}'-3\tau')\Psi_2+\kappa\Psi_4-2\sigma'\Psi_1]\iota_{(B}o_{CD)}\bar{\iota}_{B'}\\
 & & +[(\text{\th}-\rho)\Psi_4-(\text{\dh}'-4\tau')\Psi_3-3\sigma'\Psi_2]o_{BCD}\bar{\iota}_{B'}, \label{bianchieq}
\end{eqnarray}
where $\iota_{ABC}=\iota_A\iota_B\iota_C$, $\iota_{AB}=\iota_A\iota_B$, $o_{ABC}=o_Ao_Bo_C$ and $o_{AB}=o_Ao_B$.
The Dirac and Maxwell equations and the vacuum Bianchi identities of the GHP formalism are given simply by setting all previous 
components equal to zero independently.

\subsection{The Teukolsky equations}\label{teuk-sec}

The Teukolsky equations \cite{Teukolsky}, which were originally found by using the Newman-Penrose formalism,
can be put in a compact form by using a modification of the covariant derivative (\ref{Theta}). 
With this purpose we define the 1-form $B_{\a}$ by
\begin{equation}
 B_{AA'}:=-\rho\iota_{A}\bar\iota_{A'}+\tau\iota_{A}\bar{o}_{A'}
\end{equation}
and, following \cite{Andersson1}, we introduce a new connection $\Gamma_{\a}:=B_{\a}-\w_{\a}$ on $E_{p,q}$; explicitly:
\begin{equation}\label{Gamma}
 \Gamma_{\a}= (\e-\rho)n_{\a}-\e' l_{\a}+\b' m_{\a}+(\tau-\b)\bar{m}_{\a}.
\end{equation}
Since in the next sections we will work on the Dirac, Maxwell and Weyl scalars, and they are all type
$\{p,0\}$ quantities, we need only define the weighted wave operator 
\begin{equation}\label{boxp1}
 \teuk_{p}:=(\c^{\a}+p\Gamma^{\a})(\c_{\a}+p\Gamma_{\a}).
\end{equation}
Note that $\teuk_{0}\equiv\Box$.
Its expression in terms of the weighted directional derivatives is
\begin{eqnarray}
\nonumber \teuk_{p}&=&2(\text{\th}-p\rho-\bar\rho)(\text{\th}'-\rho')-2(\text{\dh}-p\tau-\bar\tau')(\text{\dh}'-\tau')\\
& & +[(3p-2)\Psi_2-4\Lambda]+2(p-1)(\kappa\kappa'-\sigma\sigma'), \label{boxp2}
\end{eqnarray}
where $\Lambda=R/24$, with $R$ the Ricci scalar. 
The Teukolsky equations for a field $\Phi^{(s)}$ of spin weight $s$, on a background type D vacuum spacetime, are then \cite{Bini}
\begin{equation}\label{teuk}
 (\teuk_{2s}-4s^2\Psi_2)\Phi^{(s)}=0.
\end{equation}
We will see that several of the identities we will prove follow easily from applying the prime operation to 
other identities. For this, we need to know the behavior of $\teuk_p$ under the prime operation. 
As proved in \cite{Andersson1}, acting on a type $\{p,0\}$ quantity $\Phi$, $\teuk_{p}$ transforms as
\begin{equation}\label{boxt'}
 \teuk'_{p}\Phi'=\Psi^{p/3}_2\teuk_{-p}(\Psi^{-p/3}_2\Phi').
\end{equation}

\subsection{Petrov type D spacetimes}\label{typeD-sec}

In the Petrov classification of spacetimes, type D spaces are characterized by the existence of two (repeated) principal 
null directions (PNDs). As mentioned, the Kerr-Newman-(A)dS family of stationary, electrovacuum black hole 
solutions belongs to this class. Aligning a spin dyad $\{o^A,\iota^A\}$ to the PNDs,
several of the GHP coefficients and Weyl scalars vanish:
\begin{eqnarray}
 && \kappa=\kappa'=\sigma=\sigma'=0=\Psi_0=\Psi_1=\Psi_3=\Psi_4, \label{ghpD} \\
 && \Psi_2=\psi_{ABCD}o^{A}o^{B}\iota^{C}\iota^{D} \neq 0.
\end{eqnarray}
The Weyl curvature spinor has the explicit form
\begin{equation}\label{weylD}
 \psi_{ABCD}=6\Psi_2o_{(A}o_B\iota_C\iota_{D)},
\end{equation}
and the Bianchi identities of a $\l$-vacuum, type D spacetime are simply 
\begin{equation}\label{bianchiD}
\text{\th}\Psi_2=3\rho\Psi_2, \;\;\; \text{\dh}\Psi_2=3\tau\Psi_2 
\end{equation}
and their primed versions. If we introduce a 1-form $A_{AA'}$ as
\begin{equation}\label{AD}
A_{AA'}:=\Psi^{1/3}_2\c_{AA'}\Psi^{-1/3}_2, 
\end{equation}
Bianchi identities imply that
\begin{equation}
 A_{AA'}=-\rho\iota_{A}\bar\iota_{A'}-\rho' o_{A}\bar{o}_{A'}+\tau' o_{A}\bar\iota_{A'}+\tau\iota_{A}\bar{o}_{A'}, \label{A} \\ 
\end{equation}
Expressions (\ref{A}) and (\ref{AD}) will be both very useful in the applications.\\
A very important property of $\l$-vacuum type D spaces, is that they admit a 2-index {\em Killing spinor}, namely 
a symmetric spinor $K_{AB}=K_{(AB)}$ satisfying the twistor equation
\begin{equation}
 \c_{C'(C}K_{AB)}=0,
\end{equation}
see \cite{Penrose3} (also \cite[section 6.7]{Penrose2}). The explicit form of $K_{AB}$ in the principal dyad $\{o^A,\iota^A\}$ is
\begin{equation}\label{KSD}
 K_{AB}=k\Psi^{-1/3}_2 o_{(A}\iota_{B)},
\end{equation}
where $k$ is an arbitrary complex constant.
This object is associated to several kind of symmetries and `hidden' symmetries of the spacetime, as we briefly recall in the 
following. Taking the divergence of (\ref{KSD}) in an unprimed index, we get 
$\xi^{AA'}\equiv\c^{BA'}K_{B}{}^{A}$, which turns out to be complex a Killing vector \cite[Proposition (6.7.17)]{Penrose2}, 
and in the case of the Kerr solution it is proportional to the (asymptotically) timelike Killing field. 
The tensor fields associated to $K_{AB}$ are the 2-forms
\begin{eqnarray}
 Y_{\a\b}&=&iK_{AB}\bar\e_{A'B'}-i\bar{K}_{A'B'}\e_{AB}, \label{KY} \\
 {}^{*}Y_{\a\b}&=&K_{AB}\bar\e_{A'B'}+\bar{K}_{A'B'}\e_{AB}, \label{DKY} 
\end{eqnarray}
and they turn out to be {\em conformal Killing-Yano tensors}.
In the case in which $\xi^{\a}$ is real (for example in the Kerr and Schwarzschild solutions), 
$Y_{\a\b}$ is an ordinary Killing-Yano tensor:
\begin{equation}
 \c_{(\a}Y_{\b)\g}=0
\end{equation}
(see \cite{Jezierski:2005cg} for a thorough account of these tensor fields in the Kerr case). 
In \cite{Gibbons} it was shown that $Y_{\a\b}$ generates conserved supercharges for the supersymmetric extension of the 
geodesic motion (see also \cite{Santillan} where further applications of Killing-Yano tensors are discussed). 
On the other hand, the square $H_{\a\b}=Y_{\a\g}Y^{\g}{}_{\b}$ is a Killing tensor, $\c_{(\a}H_{\b\g)}=0$, 
whose existence in the Kerr spacetime allows to completely integrate the geodesic equation \cite{Penrose3}.
Finally, the vector $\eta^{\a}=H^{\a\b}\xi_{\b}$ is also a Killing vector (which is linearly independent from $\xi^{\a}$ in the Kerr 
case, and it is zero in Schwarzschild). 
Apart from subsection \ref{sw0Msection} below, in this work we do not assume that the Killing vector $\xi^{\a}$ is real.\\
The Weyl spinor (\ref{weylD}) of a type D space can be written in terms of the Killing spinor (\ref{KSD}) in the form
\begin{equation}\label{WSD}
 \psi_{ABCD}=\tfrac{6}{k^2}\Psi^{5/3}_2 K_{AB}K_{CD}-\tfrac{1}{2}\Psi_2(\e_{AD}\e_{CB}+\e_{AC}\e_{DB}).
\end{equation}
This leads to the following expression for the anti-self-dual Weyl tensor in terms of the Killing-Yano tensors
\begin{equation}\label{sdweylD}
 \wt{C}_{\a\b\g\d}=-\tfrac{6}{k^2}\Psi^{5/3}_2\wt{Y}_{\a\b}\wt{Y}_{\g\d}+\Psi_2(g_{\a[\g}g_{\d]\b}+\tfrac{i}{2}\e_{\a\b\g\d}),
\end{equation}
where 
\begin{equation}
 \wt{Y}_{\a\b}:=\tfrac{1}{2}(Y_{\a\b}+i{}^{*}Y_{\a\b}).
\end{equation}
We recall that, according to our conventions, we have
\begin{equation}\label{weyl}
 \wt{C}_{\a\b\g\d}=\tfrac{1}{2}(C_{\a\b\g\d}+i{}^{*}C_{\a\b\g\d}).
\end{equation}
Formula (\ref{sdweylD}) will be particularly useful in section \ref{sphsym-sec}, where we explicitly evaluate 
our results in the Schwarzschild-(A)dS spacetime.

\subsection{Adjoint operators}\label{adjop-sec}

In this subsection we review Wald's idea of adjoint operators \cite{Wald}, since it plays a central role in this work.
Suppose that we are interested in solutions $f$ of the differential equation $\mathcal{E}(f)=0$, where $\mathcal{E}$ is a 
linear differential operator acting on a (spinorial/tensorial) field $f$. 
Suppose also that there exist a new variable of the form $\mathcal{T}(f)$, and linear differential operators $\mathcal{S}$ and 
$\mathcal{O}$ such that, {\it for all} $f$ (not only for solutions of $\mathcal{E}(f)=0$), the following equality holds:
\begin{equation}\label{seot}
 \mathcal{S}\mathcal{E}(f)=\mathcal{O}\mathcal{T}(f).
\end{equation}
Then if $f$ is a solution of $\mathcal{E}(f)=0$, $\Psi=\mathcal{T}(f)$ satisfies the equation $\mathcal{O}(\Psi)=0$.
Furthermore, given that (\ref{seot}) is valid for all $f$, we may introduce a hermitian inner product $\langle\cdot,\cdot\rangle$ and 
define the adjoint of an operator $A$ as $\langle f,A g\rangle=\langle A^{\dag}f,g\rangle$; and, since 
$(AB)^{\dagger}=B^{\dagger} A^{\dagger}$, we have the adjoint of equation (\ref{seot}):
\begin{equation}
 \mathcal{E}^{\dag}\mathcal{S}^{\dag}(\Phi)=\mathcal{T}^{\dag}\mathcal{O}^{\dag}(\Phi).
\end{equation}
This implies that a solution $\Phi$ of $\mathcal{O}^{\dag}(\Phi)=0$ generates a solution of $\mathcal{E}^{\dag}(\chi)=0$, where 
$\chi=\mathcal{S}^{\dag}(\Phi)$. 
Therefore, if the adjoint operators have a particularly useful form, we obtain in this way a mechanism for generating solutions of 
differential equations from solutions of other equations. In practice, the hermitian product we will use is given by
\begin{equation}\label{innerp}
 \langle f,g \rangle = \int_{\mathcal{M}}\bar{f}g,
\end{equation}
where a total contraction of all the indices of $f$ and $g$ is understood. 
We will further assume that all fields decay to zero at infinity, so that divergence terms will be neglected.

In the next sections, we apply this idea to spinor fields of spin $\frac{1}{2}$, 1 and 2.
The decoupled equations for Dirac, Maxwell and linearized gravitational fields on vacuum type D spacetimes with 
cosmological constant can be obtained from linear 
differential operators, acting on the corresponding spinor fields,
that have a generic form. 
More precisely, for a totally symmetric spinor $\phi_{A_1...A_{2\s}}=\phi_{(A_1...A_{2\s})}$, we will show that 
applying the operator given by
\begin{equation}\label{PQphi}
\Psi^{2\s/3}_2 P^{B_1A_2...A_{2\s}}_{(\s,s)}\c^{B'}_{B_1}\left(\Psi^{-2\s/3}_2\c^{A_1}_{B'}\phi_{A_1A_2....A_{2\s}}\right)
\end{equation}
where 
\begin{equation}\label{P}
 P^{A_1...A_{2\s}}_{(\s,s)}:=\tfrac{(2\s)!}{(\s-s)!(\s+s!)}
 \Psi^{(s-\s)/3}_2\iota^{(A_1}...\iota^{A_{\s-s}}o^{A_{\s-s+1}}...o^{A_{2\s})},
\end{equation}
and then linearizing around a type D $\l$-vacuum background, the result is a decoupled equation for the spin weight $s$ 
component of the field, with $s=0,\pm\s$. 
Since we are assuming a $\l$-vacuum solution with no background spin $\s=\tfrac{1}{2}$ or $\s=1$ fields, 
the linearization is actually only needed for spin $\s=2$, and we mention that in this case 
it should be understood in a `tensor sense'\footnote{this is because 
the linearization of a spinor is a rather delicate issue, see \cite{Backdahl}.}, that is to say, we linearize tensor quantities 
(we can do this because integer spin fields can be equivalently described by either spinor or tensor fields).
We note that, writing (\ref{PQphi}) in the form $P^{A_1...A_{2\s}}_{(\s,s)}(Q\phi)_{A_1...A_{2\s}}$, where 
\begin{equation}
 (Q\phi)_{A_1...A_{2\s}}:=\Psi^{2\s/3}_2\c^{B'}_{(A_1}\left(\Psi^{-2\s/3}_2\c^{B}_{|B'|}\phi_{A_2....A_{2\s})B}\right),
\end{equation}
the operator $Q$ coincides with the operator (2.13) recently presented in \cite{Aksteiner}\footnote{I thank T. B\"ackdahl for this 
observation.} (we also note that in this last reference, Wald's method of adjoint operators is also applied 
to construct higher order symmetry operators for the Teukolsky equations and the Teukolsky-Starobinsky identities in the cases
of spins 1 and 2).
The following sections are therefore mostly dedicated to prove that the linearization of
\begin{equation}
P^{A_1...A_{2\s}}_{(\s,s)}(Q\phi)_{A_1...A_{2\s}}=
\Psi^{2\s/3}_2 P^{B_1A_2...A_{2\s}}_{(\s,s)}\c^{B'}_{B_1}\left(\Psi^{-2\s/3}_2\c^{A_1}_{B'}\phi_{A_1A_2....A_{2\s}}\right) 
\end{equation}
leads to decoupled equations for (rescaled) components of the field $\phi_{A_1...A_{2\s}}$, 
i.e., to prove theorems \ref{spin1/2}, \ref{spin1} and \ref{spin2SV}.

The only cases we will not worry about in this work are $(\s=2,s=\pm1)$, which correspond to the linearized 
Weyl scalars $\dot{\Psi}_1$ and $\dot{\Psi}_3$; this is because they do not satisfy decoupled equations, 
as showed in \cite{Andersson1}.
On the other hand, we note that for spins $\s=1,2$ and spin-weight $s=0$, (\ref{P}) turns out to be a Killing spinor, which 
explains the appearance of this object on the field equations for $s=0$ in the Maxwell and linearized gravity systems. 

We finally mention that, in the next sections, the operator $\mathcal{O}$ of (\ref{seot}) will always have the form of the 
modified wave operator (\ref{boxp1}) (for some weight $p$) plus a (complex) potential $V$,
\begin{equation}
 \mathcal{O}=\teuk_{p}+V.
\end{equation}
Using that $\teuk_{p}=(\c^{\a}+p\Gamma^{\a})(\c_{\a}+p\Gamma_{\a})$, the adjoint $\mathcal{O}^{\dag}$ with respect to (\ref{innerp})
is easily calculated as
\begin{equation}\label{Oadj}
 \mathcal{O}^{\dag}=\bar{\teuk}_{-p}+\bar{V},
\end{equation}
an identity that will be extensively used in the next sections when calculating adjoint equations. 
This adjointness property is very important in the Teukolsky system, see \cite{Wald} and the recent article \cite{Aksteiner}.

\section{Dirac fields on type D spaces}\label{diracsec}

In this section we prove the theorem \ref{spin1/2} for spin $\s=\frac{1}{2}$,
which corresponds to massless Dirac fields. We recall that we use two-component (Weyl) spinors, 
corresponding to the $(\tfrac{1}{2},0)$ (or $(0,\frac{1}{2})$) irreducible representation of $SL(2,\mathbb{C})$. 
As is well-known, Dirac spinors, more commonly used in quantum field theory, transform under the (reducible) representation 
$(\tfrac{1}{2},0)\oplus(0,\tfrac{1}{2})$.\\
For notational convenience we define
\begin{equation}
 P^{B}_{(\frac{1}{2},s)}=:\stackbin[]{s}{P}{}^{B}, \;\;\;\; s=\pm\tfrac{1}{2}.
\end{equation}
Then, according to (\ref{P}), we have
\begin{equation}
 \stackbin[]{\frac{1}{2}}{P}{}^{B}= o^B, \;\;\;\;\;\; \stackbin[]{-\frac{1}{2}}{P}{}^{B}=\Psi^{-1/3}_2\iota^B.
\end{equation}
\begin{thm}[spin $\s=\frac{1}{2}$]\label{thmDirac}
 Consider a vacuum spacetime of Petrov type D with cosmological constant $\l$, and let $s=\pm\frac{1}{2}$. 
 Then for all spinor field $\chi_A$, the following equality holds:
 \begin{equation}\label{seotD}
  \mathcal{S}_{D,s}\mathcal{E}_{D}(\chi_A)=\mathcal{O}_{D,s}\mathcal{T}_{D,s}(\chi_A),
 \end{equation}
 where the linear differential operators are
 \begin{eqnarray}
  \mathcal{S}_{D,s}(J_{B'}) &:=&\Psi^{1/3}_2\stackbin[]{s}{P}{}^{B}\c^{B'}_{B}[\Psi^{-1/3}_2 J_{B'}], \\
  \mathcal{E}_{D}(\chi_A) &:=&\c^{A}_{B'}\chi_{A}, \\
  \mathcal{O}_{D,s}(\Phi) &:=&(\teuk_{2s}-\Psi_2 +\tfrac{2}{3}\l)\Phi, \\
  \mathcal{T}_{D,s}(\chi_A) &:=&-\tfrac{1}{2}\stackbin[]{s}{P}{}^{A}\chi_{A}.
 \end{eqnarray}
\end{thm}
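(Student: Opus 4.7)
The plan is to prove the $s=+\tfrac{1}{2}$ case of (\ref{seotD}) by a direct computation in the GHP formalism, and deduce the $s=-\tfrac{1}{2}$ case by applying the prime operation together with its covariance (\ref{boxt'}). Both sides will be reduced to explicit combinations of the weighted directional derivatives $\text{\th},\text{\th}',\text{\dh},\text{\dh}'$ acting on the dyad components $\chi_0:=o^A\chi_A$ and $\chi_1:=\iota^A\chi_A$, and then matched.

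First, using formula (\ref{diraceq}) I expand $J_{B'}:=\nabla^A_{B'}\chi_A$ as $A\bar o_{B'}+B\bar\iota_{B'}$, with $A=(\text{\th}'-\rho')\chi_0-(\text{\dh}-\tau)\chi_1$ and $B=(\text{\th}-\rho)\chi_1-(\text{\dh}'-\tau')\chi_0$. Second, I push the factor $\Psi^{-1/3}_2$ through the outer derivative using $\Psi^{1/3}_2\nabla_{BB'}\Psi^{-1/3}_2=A_{BB'}$, which follows from (\ref{AD}) with explicit form (\ref{A}); this splits $\mathcal{S}_{D,+1/2}\mathcal{E}_D(\chi_A)$ into $o^B\nabla^{B'}_B J_{B'}+o^B A^{B'}_B J_{B'}$. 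The second summand is purely algebraic: direct contraction gives $o^B A^{B'}_B=\tau\bar o^{B'}-\rho\bar\iota^{B'}$, hence $o^B A^{B'}_B J_{B'}=\rho A+\tau B$. The first summand is expanded in the null tetrad; a key observation is that the type D Bianchi identities (\ref{bianchiD}) yield $\Psi^{1/3}_2\text{\th}(\Psi^{-1/3}_2\,\cdot\,)=(\text{\th}-\rho)\,\cdot\,$, with analogous rules for $\text{\th}',\text{\dh},\text{\dh}'$, so the $\Psi^{\pm 1/3}_2$ factors are cleanly absorbed into the shifted weighted derivatives of (\ref{boxp2}) and all ill-weighted connection coefficients $\epsilon,\bar\epsilon,\beta,\bar\beta$ cancel automatically.

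The decisive step is to apply the type D commutator (\ref{comm4}) to the mixed second derivatives that arise in the expansion, for example $\text{\th}(\text{\dh}-\tau)\chi_1$ versus $(\text{\dh}-\tau)\text{\th}\chi_1$; this rearrangement generates a curvature contribution proportional to $\Psi_2\chi_0$ and forces all cross-terms in $\chi_1$ to cancel, leaving an operator acting only on $\chi_0$. Comparing with (\ref{boxp2}) specialized to $p=+1$ on a type D background ($\kappa=\sigma=\kappa'=\sigma'=0$), the derivative part reassembles into $2(\text{\th}-\rho-\bar\rho)(\text{\th}'-\rho')\chi_0-2(\text{\dh}-\tau-\bar\tau')(\text{\dh}'-\tau')\chi_0$, while the curvature piece $(3p-2)\Psi_2-4\Lambda=\Psi_2-\tfrac{2}{3}\lambda$ (using $\Lambda=\lambda/6$ in $\lambda$-vacuum) is exactly compensated by the $-\Psi_2+\tfrac{2}{3}\lambda$ inside $\mathcal{O}_{D,+1/2}$, producing the desired right-hand side $-\tfrac{1}{2}(\teuk_{+1}-\Psi_2+\tfrac{2}{3}\lambda)\chi_0$. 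The $s=-\tfrac{1}{2}$ identity then follows by priming: since $\Psi^{-1/3}_2\iota^B$ is the prime of $o^B$ dressed by the appropriate power of $\Psi_2$, applying the prime operation to the identity just established and invoking (\ref{boxt'}) immediately gives (\ref{seotD}) for $s=-\tfrac{1}{2}$. The main obstacle is the careful bookkeeping needed to verify that all cross-terms in $\chi_1$ cancel: a single misplaced spin coefficient would leave a Teukolsky-like equation with a source, spoiling the decoupling.
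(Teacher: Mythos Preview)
Your proposal is correct and follows essentially the same route as the paper: split $\mathcal{S}_{D,+1/2}\mathcal{E}_D$ via the $1$-form $A_{AA'}$ of (\ref{AD})--(\ref{A}), expand in GHP using (\ref{diraceq}), use the type~D commutator (\ref{comm4}) (with $a=1$) to annihilate the $\chi_1$ cross-terms, identify the remaining operator on $\chi_0$ with $\teuk_{+1}$ through (\ref{boxp2}), and then obtain the $s=-\tfrac{1}{2}$ case by priming and (\ref{boxt'}). One small correction: the commutator (\ref{comm4}) does \emph{not} generate a curvature term proportional to $\Psi_2\chi_0$ --- it simply makes the $\chi_1$ piece vanish identically (see (\ref{commD})); the $\Psi_2$ contribution on the right-hand side comes entirely from the $(3p-2)\Psi_2$ term already present in the definition (\ref{boxp2}) of $\teuk_p$.
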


\begin{proof}
Consider first the spin weight $s=\tfrac{1}{2}$ case. Using the expression (\ref{AD}) for the 1-form $A_{\a}$, we have
\begin{equation}\label{seD}
 \mathcal{S}_{D,\frac{1}{2}}\mathcal{E}_{D}(\chi_A)=o^B\c^{B'}_{B}\c^{A}_{B'}\chi_A+o^BA^{B'}_{B}\c^{A}_{B'}\chi_A.
\end{equation}
For the term with second derivatives of $\chi_A$, we use Leibniz rule:
\begin{equation}\label{leibD}
 o^B\c^{B'}_{B}\c^{A}_{B'}\chi_A=\c^{B'}_{B}(o^B\c^{A}_{B'}\chi_A)-(\c^{B'}_{B}o^B)(\c^{A}_{B'}\chi_A).
\end{equation}
The first term on the RHS of this equation gives:
\begin{eqnarray*}
 \c^{B'}_{B}(o^B\c^{A}_{B'}\chi_A)&=&-\c_{CC'}(o^C\bar\e^{C'B'}\c^{A}_{B'}\chi_A)\\
 &=&-\c_{CC'}(o^C\bar{o}^{C'}\bar\iota^{B'}\c^{A}_{B'}\chi_A)+\c_{CC'}(o^C\bar\iota^{C'}\bar{o}^{B'}\c^{A}_{B'}\chi_A)\\
 &=&-(D+(\c_{\a}l^{\a}))[\bar\iota^{B'}\c^{A}_{B'}\chi_A]+(\d+(\c_{\a}m^{\a}))[\bar{o}^{B'}\c^{A}_{B'}\chi_A]\\
 &=&-(D+\e+\bar\e-\rho-\bar\rho)[\bar\iota^{B'}\c^{A}_{B'}\chi_A]\\
 & & +(\d+\b+\bar\b'-\tau-\bar\tau')[\bar{o}^{B'}\c^{A}_{B'}\chi_A]
\end{eqnarray*}
where we have used (\ref{tetrad}) for the relation between the dyad and the tetrad vectors, and also expressions (\ref{dl}) and 
(\ref{dm}) for the divergence of the tetrad. 
For the second term in (\ref{leibD}) we use equation (\ref{edo}) for the derivative of $o^B$, then
\begin{equation}
 -(\c^{B'}_{B}o^B)(\c^{A}_{B'}\chi_A)=-(\b-\tau)\bar{o}^{B'}\c^{A}_{B'}\chi_A+(\e-\rho)\bar\iota^{B'}\c^{A}_{B'}\chi_A.
\end{equation}
For the term with first derivatives of $\chi_{A}$ in (\ref{seD}), we use the expression (\ref{A}) of $A_{\a}$, 
which implies 
\begin{equation}
o^{B}A^{B'}_{B}=\rho\bar\iota^{B'}-\tau\bar{o}^{B'}.
\end{equation}
Combining with the previous calculations, using (\ref{thorn})-(\ref{eth}) for the definition of the weighted 
derivatives $\text{\th}$ and $\text{\dh}$, and (\ref{diraceq}) for the corresponding components of $\c^{A}_{B'}\chi_A$, 
we have
\begin{eqnarray}
\nonumber \mathcal{S}_{D,\frac{1}{2}}\mathcal{E}_{D}(\chi_A)&=&-(D+\bar\e-\rho-\bar\rho)[\bar\iota^{B'}\c^{A}_{B'}\chi_A]
 +(\d+\bar\b'-\tau-\bar\tau')[\bar{o}^{B'}\c^{A}_{B'}\chi_A]\\
\nonumber &=&-(\text{\th}-\rho-\bar\rho)[\bar\iota^{B'}\c^{A}_{B'}\chi_A]+(\text{\dh}-\tau-\bar\tau')[\bar{o}^{B'}\c^{A}_{B'}\chi_A]\\
\nonumber &=&-(\text{\th}-\rho-\bar\rho)[(\text{\th}'-\rho')\chi_0-(\text{\dh}-\tau)\chi_1]\\
\nonumber & & + (\text{\dh}-\tau-\bar\tau')[(\text{\th}-\rho)\chi_1-(\text{\dh}'-\tau')\chi_0]\\
\nonumber &=&-[(\text{\th}-\rho-\bar\rho)(\text{\th}'-\rho')-(\text{\dh}-\tau-\bar\tau')(\text{\dh}'-\tau')]\chi_0\\
 & & +[(\text{\th}-\rho-\bar\rho)(\text{\dh}-\tau)-(\text{\dh}-\tau-\bar\tau')(\text{\th}-\rho)]\chi_1.
\end{eqnarray}
Using the explicit expression for the weighted wave operator (\ref{boxp2}) with $p=1$, we see that the term with $\chi_0$ 
in the previous equation is just
\begin{equation}
 -[(\text{\th}-\rho-\bar\rho)(\text{\th}'-\rho')-(\text{\dh}-\tau-\bar\tau')(\text{\dh}'-\tau')]\chi_0
 =-\tfrac{1}{2}(\teuk_{1}-\Psi_2+\tfrac{2}{3}\l)\chi_0.
\end{equation}
For the term with $\chi_1$, we use the commutation relation (\ref{comm4}) with $a=1$:
\begin{multline}\
 [(\text{\th}-\rho-\bar\rho)(\text{\dh}-\tau)-(\text{\dh}-\tau-\bar\tau')(\text{\th}-\rho)]\chi_1\\
 = [\text{\th}-\rho,\text{\dh}-\tau]\chi_1-\bar{\rho}(\text{\dh}-\tau)\chi_1+\bar\tau'(\text{\th}-\rho)\chi_1\equiv0, \label{commD}
\end{multline}
and therefore we finally obtain
\begin{equation}
 \mathcal{S}_{D,\frac{1}{2}}\mathcal{E}_{D}(\chi_A)=-\tfrac{1}{2}(\teuk_{1}-\Psi_2+\tfrac{2}{3}\l)\chi_0.
\end{equation}

For the spin weight $s=-\tfrac{1}{2}$ case, we just have to apply the prime operation and use formula (\ref{boxt'}) 
for the transformation law of $\teuk_{1}$:
\begin{equation}
 \mathcal{S}_{D,-\frac{1}{2}}\mathcal{E}_{D}(\chi_A)=-\tfrac{1}{2}(\teuk_{-1}-\Psi_2+\tfrac{2}{3}\l)[\Psi^{-1/3}_2\chi_1].
\end{equation}

\end{proof}

In order to generate Dirac fields from solutions of the decoupled equations, 
we take the adjoint equation to (\ref{seotD}) in the manner described in subsection \ref{adjop-sec}
(in particular we use (\ref{Oadj})), this gives
\begin{equation}\label{adjD}
 -\c^{B'}_{B}[\mathcal{S}^{\dag}_{D,s}(\Phi)]^{B}=
 -\tfrac{1}{2}\stackbin[]{s}{\bar{P}}{}^{B'}(\bar{\teuk}_{-2s}-\bar\Psi_2+\tfrac{2}{3}\l)\Phi, 
\end{equation}
where 
\begin{equation}
 [\mathcal{S}^{\dag}_{D,s}(\Phi)]^{B}=\bar\Psi^{-1/3}_2\c^{BB'}[\bar\Psi^{1/3}_2\stackbin[]{s}{\bar{P}}_{B'}\Phi].
\end{equation}
Equation (\ref{adjD}) implies then the following corollary:
\begin{cor}
  Let $\Phi$ be a solution to the decoupled equation $\bar{\mathcal{O}}_{D,-s}(\Phi)=0$, which is the spin weight 
 $\mp\tfrac{1}{2}$ Teukolsky equation for $s=\pm\tfrac{1}{2}$, in a $\l$-vacuum type D spacetime. Then:
 \begin{enumerate}
 \item 
  The spinor field
  \begin{equation}
   \stackbin[]{s}{\phi}_{A}(\Phi)=\bar\Psi^{-1/3}_2\c^{B'}_{A}[\bar\Psi^{1/3}_2\stackbin[]{s}{\bar{P}}_{B'}\Phi]
  \end{equation}
  is a solution to the massless Dirac equation, $\c^{AA'}\stackbin[]{s}{\phi}_{A}=0$.
  \item The operator $\mathcal{A}_{D,s}$ defined by
  \begin{equation}
   \mathcal{A}_{D,s}(\Phi)=\stackbin[]{s}{P}{}^{A}\stackbin[]{s}{\phi}_{A}(\Phi)
  \end{equation}
  maps solutions of $\mathcal{O}_{D,s}(\Phi)=0$ into solutions of $\bar{\mathcal{O}}_{D,-s}(\Phi)=0$.
 \end{enumerate}
\end{cor}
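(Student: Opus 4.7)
The entire corollary follows from reading off the identities already in place: the off-shell operator identity (\ref{seotD}) of theorem \ref{thmDirac} and its hermitian adjoint (\ref{adjD}). No fresh calculation is required.

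For part (1), my plan is to invoke (\ref{adjD}) directly. Its right-hand side is precisely $-\tfrac{1}{2}\stackbin[]{s}{\bar{P}}{}^{B'}\bar{\mathcal{O}}_{D,-s}(\Phi)$, which vanishes identically once $\Phi$ is assumed to satisfy $\bar{\mathcal{O}}_{D,-s}(\Phi)=0$. The left-hand side then reads $\c^{B'}_{B}[\mathcal{S}^{\dagger}_{D,s}(\Phi)]^{B}=0$, which is nothing but the massless Dirac equation for the spinor $\stackbin[]{s}{\phi}_{A}(\Phi)=[\mathcal{S}^{\dagger}_{D,s}(\Phi)]_{A}$. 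All that is needed is to recognise that the explicit expression for $[\mathcal{S}^{\dagger}_{D,s}(\Phi)]^{B}$ written immediately after (\ref{adjD}) coincides, after lowering the unprimed index, with the formula $\bar{\Psi}^{-1/3}_{2}\c^{B'}_{A}[\bar{\Psi}^{1/3}_{2}\stackbin[]{s}{\bar{P}}_{B'}\Phi]$ displayed in the statement.

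For part (2) my plan is to feed the reconstructed Dirac field back into theorem \ref{thmDirac}. Since (\ref{seotD}) is an \emph{operator} identity, valid for every $\chi_{A}$, I may simply substitute $\chi_{A}:=\stackbin[]{s}{\phi}_{A}(\Phi)$ to obtain
\begin{equation*}
 \mathcal{S}_{D,s}\mathcal{E}_{D}\bigl(\stackbin[]{s}{\phi}_{A}(\Phi)\bigr)
 =\mathcal{O}_{D,s}\mathcal{T}_{D,s}\bigl(\stackbin[]{s}{\phi}_{A}(\Phi)\bigr).
\end{equation*}
Part (1) then kills the inner Dirac operator on the left whenever $\bar{\mathcal{O}}_{D,-s}(\Phi)=0$, so the whole left-hand side vanishes. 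Because $\mathcal{T}_{D,s}\bigl(\stackbin[]{s}{\phi}_{A}(\Phi)\bigr)=-\tfrac{1}{2}\stackbin[]{s}{P}{}^{A}\stackbin[]{s}{\phi}_{A}(\Phi)=-\tfrac{1}{2}\mathcal{A}_{D,s}(\Phi)$, this reduces to $\mathcal{O}_{D,s}\mathcal{A}_{D,s}(\Phi)=0$, establishing the intertwining of the two decoupled scalar equations. That $\mathcal{O}_{D,s}$ and $\bar{\mathcal{O}}_{D,-s}$ interchange their roles in the stated mapping direction is built into the adjointness relation (\ref{Oadj}).

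The main obstacle I expect is not analytic but bookkeeping: one must carefully track primed/unprimed index placements and complex conjugations so that the composition $\stackbin[]{s}{P}{}^{A}[\mathcal{S}^{\dagger}_{D,s}(\Phi)]_{A}$ is correctly identified with $-2\mathcal{T}_{D,s}\mathcal{S}^{\dagger}_{D,s}(\Phi)$, and so that the factors of $\Psi_{2}$ and $\bar{\Psi}_{2}$ appearing in $\mathcal{S}_{D,s}$, $\mathcal{S}^{\dagger}_{D,s}$ and in the definition of $\stackbin[]{s}{\phi}_{A}(\Phi)$ match up. No new analytic input beyond (\ref{seotD}) and (\ref{adjD}) is required.
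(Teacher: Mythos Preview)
Your argument for part (1) is correct and is exactly what the paper does: it simply reads the adjoint identity (\ref{adjD}) and observes that its right-hand side vanishes when $\bar{\mathcal{O}}_{D,-s}(\Phi)=0$, leaving the Dirac equation for $\stackbin[]{s}{\phi}_{A}(\Phi)=[\mathcal{S}^{\dagger}_{D,s}(\Phi)]_{A}$.

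For part (2) your substitution argument is clean but it establishes the \emph{opposite} mapping direction from the one stated. Feeding the reconstructed Dirac field into (\ref{seotD}) and using part (1) yields
\[
\bar{\mathcal{O}}_{D,-s}(\Phi)=0 \;\Longrightarrow\; \mathcal{O}_{D,s}\bigl(\mathcal{A}_{D,s}(\Phi)\bigr)=0,
\]
whereas the corollary, as written, asserts that $\mathcal{A}_{D,s}$ sends solutions of $\mathcal{O}_{D,s}=0$ to solutions of $\bar{\mathcal{O}}_{D,-s}=0$. Your closing sentence, that the role reversal ``is built into the adjointness relation (\ref{Oadj})'', does not fix this: knowing $\mathcal{O}^{\dagger}_{D,s}=\bar{\mathcal{O}}_{D,-s}$ does not allow you to commute these operators past $\mathcal{A}_{D,s}$ or otherwise swap them in the implication. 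The paper itself gives no explicit argument for part (2) beyond ``Equation (\ref{adjD}) implies then the following corollary'', and the direction you actually prove is the one that appears later in the spin-weight-zero Maxwell corollary (Corollary \ref{cor-sw0M}). So the discrepancy is most plausibly a misprint in the statement of the corollary rather than a genuine defect in your reasoning; but you should flag the direction explicitly rather than paper it over with a reference to (\ref{Oadj}).
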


For further symmetry operators for the massless Dirac equation, we refer to \cite{Andersson2}
(and references therein).

\section{Maxwell fields on type D spaces}\label{maxwellsec}

We now prove the theorem of spin $\s=1$, corresponding to Maxwell fields. 
The proof is very similar to the previous case, in the sense that the manipulations for extreme spin weight are the same.
For spin weight zero, the proof can be done either by the same lines or by using the fact that the corresponding object 
is a Killing spinor.\\
Once more, for notational convenience we define
\begin{equation}
 P^{AB}_{(1,s)}=: \stackbin[]{s}{P}{}^{AB}, \;\;\;\; s=0,\pm1.
\end{equation}
Explicitly, we have
\begin{eqnarray}
 \stackbin[]{1}{P}{}^{AB}&=&o^{A}o^{B} =: o^{AB}, \\
 \stackbin[]{0}{P}{}^{AB}&=&2\Psi^{-1/3}_2o^{(A}\iota^{B)}, \label{WM0} \\
 \stackbin[]{-1}{P}{}^{AB}&=&\Psi^{-2/3}_2\iota^{A}\iota^{B} =: \Psi^{-2/3}_2\iota^{AB}.
\end{eqnarray}
Note that $\stackbin[]{0}{P}{}^{AB}$ coincides with the Killing spinor (\ref{KSD}) (with $k=2$).\\
We recall that for spin weight $s=\pm1$, theorem \ref{spin1} should give us the $s=\pm1$ Teukolsky equations for 
electromagnetic perturbations, while for $s=0$ we should obtain the Fackerell-Ipser equation. This is summarized as follows:
\begin{thm}[spin $\s=1$]\label{thm-maxwell}
 Consider a vacuum spacetime of Petrov type D with cosmological constant $\l$, and let $s=0,\pm1$. 
 Then for all symmetric spinor field $\phi_{AB}=\phi_{(AB)}$,  the following equality holds:
 \begin{equation}\label{seotM}
  \mathcal{S}_{M,s}\mathcal{E}_{M}(\phi_{AB})=\mathcal{O}_{M,s}\mathcal{T}_{M,s}(\phi_{AB}),
 \end{equation}
 where the linear differential operators are
 \begin{eqnarray}
  \mathcal{S}_{M,s}(J_{B'B}) &:=&\Psi^{2/3}_2\stackbin[]{s}{P}{}^{AB}\c^{B'}_{A}[\Psi^{-2/3}_2 J_{B'B}], \\
  \mathcal{E}_{M}(\phi_{AB}) &:=&\c^{A}_{B'}\phi_{AB}, \\
  \mathcal{O}_{M,s}(\Phi) &:=&\left(\teuk_{2s}+2(1-3s^2)\Psi_2 +\tfrac{2}{3}\l\right)\Phi, \\
  \mathcal{T}_{M,s}(\phi_{AB}) &:=&-\tfrac{1}{2}\stackbin[]{s}{P}{}^{AB}\phi_{AB}.
 \end{eqnarray}
\end{thm}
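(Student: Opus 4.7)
The argument I would give mirrors the Dirac proof of Theorem~\ref{thmDirac}. For each value of $s$, I would expand the left-hand side of (\ref{seotM}) by Leibniz, peel off the connection contribution generated by conjugation with $\Psi_2^{\pm 2/3}$, substitute the dyad decomposition (\ref{maxwelleq}) of $\c^A{}_{B'}\phi_{AB}$ with the type-D vanishing conditions (\ref{ghpD}) in place, and regroup the surviving terms into a modified wave operator $\teuk_{2s}$ acting on the target scalar, plus a residual commutator that is killed by the type-D identity (\ref{comm4}).

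Concretely, for the extreme weight $s=+1$ the key preliminary is
\[
\Psi_2^{2/3}\,\c_{AA'}\!\left[\Psi_2^{-2/3}\,X\right]=\c_{AA'}X+2A_{AA'}X,
\]
which is immediate from the definition (\ref{AD}) of $A_a$. Taking $X=\c^C{}_{B'}\phi_{CB}$ and contracting with $o^Ao^B$, Leibniz yields three pieces: an outer derivative term $o^A\c^{B'}{}_A(o^B\c^C{}_{B'}\phi_{CB})$, a term $-(o^A\c^{B'}{}_Ao^B)\c^C{}_{B'}\phi_{CB}$ whose spin-coefficient content comes from the Ricci rotation of the dyad, and a term $2\,o^AA^{B'}{}_A\,o^B\c^C{}_{B'}\phi_{CB}$ controlled by the explicit expression $o^AA^{B'}{}_A=\rho\bar\iota^{B'}-\tau\bar{o}^{B'}$ read off from (\ref{A}). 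Substituting (\ref{maxwelleq}) and converting to GHP operators via (\ref{thorn})--(\ref{eth'}), the $\phi_0$-contribution reassembles, by comparison with (\ref{boxp2}) at $p=2$, as exactly $-\tfrac12(\teuk_{+2}-4\Psi_2+\tfrac23\l)\phi_0$, whereas the $\phi_1$-contribution takes the form of the commutator $[\text{\th}-2\rho,\text{\dh}-2\tau]\phi_1$, which vanishes on a type-D $\l$-vacuum background by (\ref{comm4}) with $a=2$. The $s=-1$ identity then follows from the $s=+1$ identity by applying the prime operation and invoking the conjugation law (\ref{boxt'}) at $p=2$.

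The spin-weight-zero case is most cleanly treated by exploiting that $\stackbin[]{0}{P}{}^{AB}$ is, up to normalization, the Killing spinor $K^{AB}$ of (\ref{KSD}); the Killing spinor equation $\c_{C'(C}K_{AB)}=0$ lets one commute $K^{AB}$ past the outer covariant derivative at the cost of a curvature commutator, collapsing the outer pair of derivatives to the plain D'Alembertian $\Box=\teuk_0$ and producing the Fackerell--Ipser operator $\Box+2\Psi_2$ dressed with the cosmological-constant term $\tfrac23\l$. Alternatively, the same identity can be obtained by a direct GHP computation on the middle pair of components of (\ref{maxwelleq}), using the Bianchi identities (\ref{bianchiD}) to absorb the $\Psi_2^{\pm 1/3}$ factors and (\ref{comm4}) with $a=1$ to kill the cross-terms. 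I expect the main obstacle to be strict bookkeeping of the curvature coefficient: the $(3p-2)\Psi_2-4\Lambda$ piece of (\ref{boxp2}), the $\c\Psi_2$ factors processed through Bianchi (\ref{bianchiD}), and the $2(p-1)\kappa\kappa'$-type terms (which vanish for type D) must combine so that the net potential is exactly the universal $2(1-3s^2)\Psi_2+\tfrac23\l$; any mismatch would signal a sign or factor error in the intermediate steps.
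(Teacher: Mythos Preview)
Your proposal is correct and follows essentially the same route as the paper: for $s=+1$ the paper also expands via Leibniz and the $A_{AA'}$ conjugation, reduces the $\phi_0$ piece to $\teuk_{+2}$ via (\ref{boxp2}), and kills the $\phi_1$ cross-term with (\ref{comm4}) at $a=2$ (note the commutator itself does not vanish---it is the combination $[\text{\th}-2\rho,\text{\dh}-2\tau]\phi_1-\bar\rho(\text{\dh}-2\tau)\phi_1+\bar\tau'(\text{\th}-2\rho)\phi_1$ that does); the $s=-1$ case is handled by prime plus (\ref{boxt'}); and for $s=0$ the paper likewise uses the Killing spinor identification $\stackbin[]{0}{P}_{AB}=K_{AB}$ together with the identities (\ref{k1})--(\ref{k4}) to collapse the outer derivatives to $\Box$ and produce the Fackerell--Ipser potential.
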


\begin{proof}

We start with the spin weight $s=1$ case:
\begin{eqnarray}
\nonumber \mathcal{S}_{M,1}\mathcal{E}_{M}(\phi_{AB})&=&(o^{BC}\c^{B'}_{C}+2o^{BC}A^{B'}_{C})\c^{A}_{B'}\phi_{AB}\\
 &=&o^{BC}\c^{B'}_{C}\c^{A}_{B'}\phi_{AB}+2o^{BC}A^{B'}_{C}\c^{A}_{B'}\phi_{AB}, \label{seM1}
\end{eqnarray}
where we have used the expression (\ref{AD}) for the 1-form $A_{AA'}$.
Leibniz rule for the term with second derivatives of $\phi_{AB}$ gives:
\begin{equation}\label{leibM}
 o^{BC}\c^{B'}_{C}\c^{A}_{B'}\phi_{AB}=\c^{B'}_{C}(o^{BC}\c^{A}_{B'}\phi_{AB})-(\c^{B'}_{C}o^{BC})(\c^{A}_{B'}\phi_{AB}).
\end{equation}
The first and second terms in the right hand side of this equation are treated in a similar way as was done 
for the Dirac case in equation (\ref{leibD}); using expressions for derivatives of the dyad spinors and 
tetrad vectors given in appendix \ref{appB}, we get
\begin{equation}
 o^{BC}\c^{B'}_{C}\c^{A}_{B'}\phi_{AB}=-(\text{\th}-\bar\rho)[o^{B}\bar{\iota}^{B'}\c^{A}_{B'}\phi_{AB}]
+(\text{\dh}-\bar\tau')[o^{B}\bar{o}^{B'}\c^{A}_{B'}\phi_{AB}],
\end{equation}
where we have also used the definition of the operators (\ref{thorn})-(\ref{eth}) acting on the corresponding weighted quantities.
On the other hand, using the expression (\ref{A}) for $A_{\a}$, the second term in (\ref{seM1}) is
\begin{equation}
 2o^{BC}A^{B'}_{C}\c^{A}_{B'}\phi_{AB}=2\rho o^{B}\bar{\iota}^{B'}\c^{A}_{B'}\phi_{AB}-2\tau o^{B}\bar{o}^{B'}\c^{A}_{B'}\phi_{AB}.
\end{equation}
Thus:
\begin{equation}
 \mathcal{S}_{M,1}\mathcal{E}_{M}(\phi_{AB})=-(\text{\th}-2\rho-\bar\rho)[o^{B}\bar{\iota}^{B'}\c^{A}_{B'}\phi_{AB}]
+(\text{\dh}-2\tau-\bar\tau')[o^{B}\bar{o}^{B'}\c^{A}_{B'}\phi_{AB}].
\end{equation}
Now we use the formula (\ref{maxwelleq}) for expressing $o^{B}\bar{\iota}^{B'}\c^{A}_{B'}\phi_{AB}$ and 
$o^{B}\bar{o}^{B'}\c^{A}_{B'}\phi_{AB}$ in GHP form; the result, after reordering terms in $\phi_0,\phi_1$ and $\phi_2$, is:
\begin{eqnarray*}
 \mathcal{S}_{M,1}\mathcal{E}_{M}(\phi_{AB})
 &=&-[(\text{\th}-2\rho-\bar\rho)(\text{\th}'-\rho')-(\text{\dh}-2\tau-\bar\tau')(\text{\dh}'-\tau')]\phi_0\\
 & & +[(\text{\th}-2\rho-\bar\rho)(\text{\dh}-2\tau)-(\text{\dh}-2\tau-\bar\tau')(\text{\th}-2\rho)]\phi_1\\
 & &+(\text{\th}-2\rho-\bar\rho)[\sigma\phi_2]-(\text{\dh}-2\tau-\bar\tau')[\kappa\phi_2].
\end{eqnarray*}
For the term with $\phi_0$, using (\ref{boxp2}) we see that
\begin{multline}
 -[(\text{\th}-2\rho-\bar\rho)(\text{\th}'-\rho')-(\text{\dh}-2\tau-\bar\tau')(\text{\dh}'-\tau')]\phi_0\\
 =-\tfrac{1}{2}(\teuk_{2}-4\Psi_2+\tfrac{2}{3}\l)\phi_0+2(\kappa\kappa'-\sigma\sigma')\phi_0.
\end{multline}
The term with $\phi_1$ identically vanishes because of (\ref{comm4}) with $a=2$, similarly as in (\ref{commD}).
Finally, using (\ref{ghpD}) for a type D background, we get:
\begin{equation}
 \mathcal{S}_{M,1}\mathcal{E}_{M}(\phi_{AB})=-\tfrac{1}{2}(\teuk_{2}-4\Psi_2+\tfrac{2}{3}\l)\phi_0.
\end{equation}

This completes the proof of spin weight $s=1$.
For $s=-1$, as with the Dirac case, the corresponding identity follows by applying a prime to the previous equation 
and using (\ref{boxt'}) with $p=2$.

Consider now the spin weight $s=0$ case. We will use the fact that $\stackbin[]{0}{P}{}^{AB}$ 
in (\ref{WM0}) coincides with the Killing spinor (\ref{KSD}), 
\begin{equation}
 \stackbin[]{0}{P}_{AB}\equiv K_{AB}.
\end{equation}
We have:
\begin{eqnarray*}
 \mathcal{S}_{M,0}\mathcal{E}_{M}(\phi_{AB})&=&\Psi^{2/3}_2K^{BC}\c^{B'}_{C}[\Psi^{-2/3}_2 \c^{A}_{B'}\phi_{AB}]\\
 &=&K^{BC}\c^{B'}_{C}\c^{A}_{B'}\phi_{AB}+2K^{BC}A^{B'}_{C}\c^{A}_{B'}\phi_{AB}\\
 &=&-\tfrac{1}{2}K^{AB}\Box\phi_{AB}+2K^{BC}\Box_{CA}\phi^{A}{}_{B}+2K^{BC}A^{B'}_{C}\c^{A}_{B'}\phi_{AB}.
\end{eqnarray*}
Using the explicit action of the curvature operator $\Box_{CA}$, we get
\begin{eqnarray*}
 \mathcal{S}_{M,0}\mathcal{E}_{M}(\phi_{AB})&=&-\tfrac{1}{2}K^{AB}\Box\phi_{AB}
 +K^{BC}[X_{CAD}{}^{A}\phi^{D}{}_{B}-X_{CAB}{}^{D}\phi^{A}{}_{D}]\\
 & & +2K^{BC}A^{B'}_{C}\c^{A}_{B'}\phi_{AB}\\
 &=&-\tfrac{1}{2}K^{AB}\Box\phi_{AB}+K^{AB}[-\tfrac{R}{6}\phi_{AB}+\psi_{ABCD}\phi^{CD}]\\
 & & +2K^{BC}A^{B'}_{C}\c^{A}_{B'}\phi_{AB},
\end{eqnarray*}
where we have used the identity (\ref{X2}), together with the decomposition (\ref{X1}) of the curvature 
spinor $X_{ABCD}$. Now, the identities (\ref{k1}) and (\ref{k3}) for the Killing spinor allow us to write
\begin{eqnarray*}
 -\tfrac{1}{2}K^{AB}\Box\phi_{AB}&=&-\tfrac{1}{2}\Box(K^{AB}\phi_{AB})+\tfrac{1}{2}\phi_{AB}\Box K^{AB}
 +\c^{C'C}K^{AB}\c_{C'C}\phi_{AB}\\
 &=&-\tfrac{1}{2}(\Box-2\Psi_2-\tfrac{R}{6})K^{AB}\phi_{AB}+\tfrac{2}{3}\c^{C'D}K_{D}{}^{A}\c^{B}_{C'}\phi_{AB}.
\end{eqnarray*}
Furthermore, using (\ref{k4}) and the definition of $A_{AA'}$ it is easy to see that $K^{BC}A^{B'}_{C}=\tfrac{1}{3}\c^{B'}_{C}K^{BC}$; 
then combining with (\ref{k2}) we finally have
\begin{eqnarray*}
 \mathcal{S}_{M,0}\mathcal{E}_{M}(\phi_{AB})&=&-\tfrac{1}{2}(\Box-2\Psi_2-\tfrac{R}{6})K^{AB}\phi_{AB}
 +\tfrac{2}{3}\c^{C'D}K_{D}{}^{A}\c^{B}_{C'}\phi_{AB}\\
 & & -(2\Psi_2+\tfrac{R}{6})K^{AB}\phi_{AB}+\tfrac{2}{3}\c^{B'}_{C}K^{BC}\c^{A}_{B'}\phi_{AB}\\
 &=&-\tfrac{1}{2}(\Box+2\Psi_2+\tfrac{R}{6})K^{AB}\phi_{AB}.
\end{eqnarray*}
Finally, replacing $R=4\l$ we obtain the desired formula.
\end{proof}

Now we want to see how to generate Maxwell fields from solutions of the decoupled equations.
If we take the adjoint equation to (\ref{seotM}), we get
\begin{equation}
 -\c^{A(A'}[\mathcal{S}^{\dag}_{M,s}(\Phi)]^{B')}_{A}
 =-\tfrac{1}{2}\stackbin[]{s}{\bar{P}}{}^{A'B'}\left(\bar{\teuk}_{-2s}+2(1-3s^2)\bar\Psi_2+\tfrac{2}{3}\l\right)\Phi, 
\end{equation}
where 
\begin{equation}
 [\mathcal{S}^{\dag}_{M,s}(\Phi)]^{BB'}=-\bar\Psi^{-2/3}_2\c^{B}_{A'}[\bar\Psi^{2/3}_{2}\stackbin[]{s}{\bar{P}}{}^{A'B'}\Phi].
\end{equation}
This implies that if $\Phi$ is a solution to $\bar{\mathcal{O}}_{M,-s}(\Phi)=0$, then 
\begin{equation}
 \c^{A(A'}[\mathcal{S}^{\dag}_{M,s}(\Phi)]^{B')}_{A}=0.
\end{equation}
Evidently, these are not Maxwell equations. In order to construct a Maxwell field, we need the following lemma:
\begin{lem}\label{lemma-pot-max}
Let $\a^{A'}_A$ be a solution of $\c^{A(B'}\a^{A')}_{A}=0$ on an arbitrary spacetime. Then $\phi_{AB}:=\c_{(A|B'|}\a^{B'}_{B)}$
is a Maxwell field, $\c^{AA'}\phi_{AB}=0$.
\end{lem}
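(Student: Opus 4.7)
The plan is to interpret the spinor $\alpha^{A'}_A$ as the 2-spinor form of a (complex) 1-form $\alpha_a=\alpha_{AA'}$, and to recognise both the hypothesis and the conclusion as statements about the self-dual/anti-self-dual decomposition of the 2-form $F_{\a\b}:=2\c_{[\a}\alpha_{\b]}=(d\alpha)_{\a\b}$. In the standard 2-spinor decomposition
\[
F_{AA'BB'}=\phi_{AB}\bar{\e}_{A'B'}+\bar{\phi}_{A'B'}\e_{AB},
\]
the anti-self-dual part $\phi_{AB}=\tfrac{1}{2}F_{A}{}^{A'}{}_{BA'}$ reduces, after expanding $F$ and symmetrising in $A,B$, to precisely $\c_{(A|B'|}\alpha^{B'}_{B)}$, i.e.\ the object $\phi_{AB}$ defined in the lemma; likewise, the self-dual part $\bar\phi_{A'B'}=\tfrac12 \e^{AB}F_{AA'BB'}$ reduces to $\c^{A}{}_{(A'}\alpha_{|A|B')}$. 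Raising both primed indices and relabelling dummies, the latter is $\c^{A(A'}\alpha^{B')}_A$, so the hypothesis $\c^{A(B'}\alpha^{A')}_A=0$ is literally $\bar\phi_{A'B'}=0$, i.e.\ $F=d\alpha$ is purely anti-self-dual as a complex 2-form.

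Once this identification is in place the conclusion is essentially automatic. Since $F=d\alpha$ is exact, $dF=0$ (Poincar\'e/Bianchi). An anti-self-dual 2-form satisfies $*F=\mp iF$, hence $d{*}F=\mp i\,dF=0$. The pair $dF=0$, $d{*}F=0$ are the source-free Maxwell equations for a complex 2-form, and in 2-spinor form they read $\c^{AA'}\phi_{AB}=0$ together with the analogous equation $\c^{AA'}\bar\phi_{A'B'}=0$, the latter trivially satisfied here because $\bar\phi=0$. This is exactly the claim of the lemma.

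The only delicate step is the first-paragraph identification, which is a piece of 2-spinor algebra governed entirely by $\xi^A\eta_A=-\xi_A\eta^A$ and the antisymmetry of $\e_{AB}$; no real obstacle sits there. Crucially the argument uses only Poincar\'e's lemma and algebraic spinor identities, so it holds on an arbitrary spacetime, with no curvature, vacuum, Petrov-type, or reality assumption required, matching the generality stated in the lemma. A direct computational proof is also possible: expand $\c^{AA'}\phi_{AB}$, split $\c^{AA'}\c_{AB'}=-\Box^{A'}{}_{B'}-\tfrac12\delta^{A'}_{B'}\Box$ in the first piece of $2\c^{AA'}\phi_{AB}$, commute derivatives in the second piece (introducing $[\c_\a,\c_\b]$-curvature), and use the hypothesis to kill the symmetric part of $\c^{A(A'}\alpha^{B')}_A$; the curvature contributions cancel, but this path is appreciably heavier than the self-duality argument above.
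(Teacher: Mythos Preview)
Your proof is correct and takes a genuinely different route from the paper. The paper carries out precisely the ``direct computational proof'' you sketch in your closing remark: it differentiates the hypothesis once more to obtain a second-order identity, expands $2\c^{AA'}\phi_{AB}$ explicitly in terms of $\Box\a^{A'}_B$, the Ricci spinor $\Phi_{ABA'B'}$, a scalar-curvature term, and a second-derivative piece $\c^{A'}_A\c_{BB'}\a^{B'A}$, observes that the only difference from the auxiliary identity is the order of derivatives in that last piece, and then uses the explicit action of $\Box_{AB}$, $\Box_{A'B'}$ together with $X_{ABC}{}^B = \tfrac{R}{8}\e_{AC}$ to verify that the commutator vanishes. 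Your argument bypasses all of this curvature bookkeeping by reading the hypothesis as the statement that the complex $2$-form $F = d\a$ is anti-self-dual; then $dF = 0$ (this is just $d^2 = 0$ --- Poincar\'e's lemma is the converse statement) together with ${}^{*}F \propto F$ immediately gives both halves of source-free Maxwell. This makes transparent \emph{why} no curvature, vacuum, or Petrov-type hypothesis is needed --- the lemma is really $d^2 = 0$ dressed in self-duality --- whereas in the paper's proof that generality emerges only after the curvature contributions are seen to cancel. The paper's route has the minor advantage of staying entirely within spinor-index calculus; yours is shorter, more geometric, and explains the result rather than merely verifying it. One small notational caveat: since $\a$ is complex, the self-dual part of $d\a$ is not literally the complex conjugate of $\phi_{AB}$, so writing it as $\bar\phi_{A'B'}$ is potentially misleading; an independent symbol such as $\psi_{A'B'}$ would be cleaner.
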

\begin{proof}
If $\alpha^{A'}_A$ satisfies $\c^{A(A'}\alpha^{B')}_{A}=0$, then taking an additional derivative it is easy to see that
\begin{equation}\label{nab2}
 0=-\tfrac{1}{2}\Box\alpha^{A'}_{B}+\Phi_{BAQ'}{}^{A'}\alpha^{Q'A}-\tfrac{R}{8}\alpha^{A'}_{B}+\c_{BB'}\c^{A'}_{A}\alpha^{B'A}.
\end{equation}
On the other hand, if $\phi_{AB}=\c_{(A|A'|}\alpha^{A'}_{B)}$, then
\begin{equation*}
 2\c^{AA'}\phi_{AB}=\tfrac{1}{2}\Box\alpha^{A'}_{B}-\Phi_{BAQ'}{}^{A'}\alpha^{Q'A}+\tfrac{R}{8}\alpha^{A'}_{B}-\c^{A'}_{A}\c_{BB'}\alpha^{B'A}.
\end{equation*}
Note that the only difference between this equation and (\ref{nab2}) is, besides a global sign, the order of the derivatives in the 
last term on the right hand side. Using (\ref{boxAB}), we have
\begin{eqnarray*}
\c^{A'}_{A}\c_{BB'}\alpha^{B'A}&=&\bar\epsilon^{A'C'}(\c_{BB'}\c_{AC'}+\bar\epsilon_{C'B'}\Box_{AB}+\epsilon_{AB}\Box_{C'B'})\alpha^{B'A}\\
&=& \bar\epsilon^{A'C'}\left[\c_{BB'}\c_{AC'}\alpha^{B'A}+\bar\epsilon_{C'B'}\left(\Phi_{ABQ'}{}^{B'}\alpha^{Q'A}
+X_{ABQ}{}^{A}\alpha^{B'Q}\right) \right.\\
& & \left.+\epsilon_{AB}\left(\bar{X}_{C'B'Q'}{}^{B'}\alpha^{Q'A}+\Phi_{C'B'Q}{}^{A}\alpha^{B'Q}\right)\right]\\
&=& \c_{BB'}\c^{A'}_{A}\alpha^{B'A},
\end{eqnarray*}
where the identity (\ref{X2}) and its complex conjugate were also used in the intermediate steps. It follows that
\begin{equation*}
 \c^{AA'}\phi_{AB}=0.
\end{equation*}
\end{proof}

Combining theorem (\ref{thm-maxwell}) with the results of the previous lemma, we have the following corollary:
\begin{cor}[Spinor version]\label{gen-max1}
 Consider a vacuum type D spacetime with cosmological constant $\lambda$.  
 Let $\Phi$ be a solution of the decoupled equation $\bar{\mathcal{O}}_{M,-s}(\Phi)=0$, which is the
 spin weight $\mp1$ Teukolsky equation for $s=\pm1$, and the Fackerell-Ipser equation for $s=0$. Then:
 \begin{enumerate}
  \item The spinor field 
 \begin{equation}\label{maxfield}
  \stackbin[]{s}{\phi}_{AB}(\Phi) =
  -2\c_{B'(A}\left[\bar\Psi^{-2/3}_2\c_{B)C'}[\bar\Psi^{2/3}_{2}\stackbin[]{s}{\bar{P}}{}^{B'C'}\Phi]\right]
 \end{equation}
 is a solution to Maxwell equations, $\c^{AA'}\phi_{AB}=0$.
 \item The operator $\mathcal{A}_{M,s}$ defined by 
 \begin{equation}
  \mathcal{A}_{M,s}(\Phi)=\stackbin[]{s}{P}{}^{AB}\stackbin[]{s}{\phi}_{AB}(\Phi)
 \end{equation}
 maps solutions of $\mathcal{O}_{M,s}(\Phi)=0$ into solutions of $\bar{\mathcal{O}}_{M,-s}(\Phi)=0$.
 \end{enumerate}
\end{cor}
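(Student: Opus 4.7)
\noindent The plan is to apply Wald's adjoint technique (reviewed in subsection \ref{adjop-sec}) to the off-shell identity of Theorem \ref{thm-maxwell} and then invoke Lemma \ref{lemma-pot-max} to upgrade the resulting first-order constraint into a genuine Maxwell field.

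First I would compute the formal adjoint of (\ref{seotM}) with respect to the inner product (\ref{innerp}). The adjoint of $\mathcal{T}_{M,s}$ is purely algebraic, multiplication by $-\tfrac{1}{2}\stackbin[]{s}{\bar{P}}{}^{A'B'}$; that of $\mathcal{O}_{M,s}$ is $\bar{\mathcal{O}}_{M,-s}$ by (\ref{Oadj}); and those of $\mathcal{S}_{M,s}$ and $\mathcal{E}_{M}$ come from integration by parts, sending each $\c_{AA'}$ to $-\c_{AA'}$ and each multiplicative factor $\Psi^{\pm 2/3}_{2}$ to $\bar{\Psi}^{\pm 2/3}_{2}$. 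The outcome is the identity displayed in the text just above the lemma,
\begin{equation*}
-\c^{A(A'}[\mathcal{S}^{\dag}_{M,s}(\Phi)]^{B')}_{A}=-\tfrac{1}{2}\stackbin[]{s}{\bar{P}}{}^{A'B'}\bar{\mathcal{O}}_{M,-s}(\Phi),
\end{equation*}
with $[\mathcal{S}^{\dag}_{M,s}(\Phi)]^{BB'}=-\bar{\Psi}^{-2/3}_{2}\c^{B}{}_{A'}[\bar{\Psi}^{2/3}_{2}\stackbin[]{s}{\bar{P}}{}^{A'B'}\Phi]$.

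Next, when $\bar{\mathcal{O}}_{M,-s}(\Phi)=0$ the right-hand side vanishes, so the spinor $\alpha^{A'}_{A}:=[\mathcal{S}^{\dag}_{M,s}(\Phi)]_{A}{}^{A'}$ satisfies the hypothesis $\c^{A(B'}\alpha^{A')}_{A}=0$ of Lemma \ref{lemma-pot-max}. The lemma then guarantees that $\c_{(A|B'|}\alpha^{B'}_{B)}$ is a Maxwell field; unpacking $\mathcal{S}^{\dag}_{M,s}$ and using the symmetry of $\stackbin[]{s}{\bar{P}}{}^{B'C'}$ in its primed indices shows this spinor coincides exactly with (\ref{maxfield}), establishing item 1. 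For item 2, I would feed the Maxwell field back into (\ref{seotM}): since $\mathcal{E}_{M}(\stackbin[]{s}{\phi}_{AB}(\Phi))=0$, the left-hand side vanishes, so $\mathcal{O}_{M,s}\mathcal{T}_{M,s}(\stackbin[]{s}{\phi}_{AB}(\Phi))=0$, and because $\mathcal{A}_{M,s}(\Phi)=-2\,\mathcal{T}_{M,s}(\stackbin[]{s}{\phi}_{AB}(\Phi))$, this yields the desired intertwining between the two decoupled scalar equations at spin weights $s$ and $-s$.

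The principal obstacle is not conceptual but organizational: tracking the numerous index placements, symmetrizations, and $\bar{\Psi}^{\pm 2/3}_{2}$ weights through the adjoint calculation and then through Lemma \ref{lemma-pot-max}, so that what emerges matches (\ref{maxfield}) on the nose rather than some variant differing by a symmetrization or a global constant. A secondary check is that the argument runs uniformly in $s=0,\pm 1$: the Fackerell-Ipser seed and the two extreme Teukolsky seeds all generate Maxwell fields by the same mechanism, which is an immediate bonus of the unified operator $P^{AB}_{(1,s)}$ introduced in subsection \ref{adjop-sec}.
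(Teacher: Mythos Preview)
Your approach is correct and matches the paper's: the corollary is stated immediately after the adjoint identity and Lemma \ref{lemma-pot-max} with the remark ``Combining theorem (\ref{thm-maxwell}) with the results of the previous lemma,'' and your outline supplies exactly that combination---adjoint of (\ref{seotM}) to get $\c^{A(A'}[\mathcal{S}^{\dag}_{M,s}(\Phi)]^{B')}_{A}=0$, then Lemma \ref{lemma-pot-max} for item 1, then feed the resulting Maxwell field back into (\ref{seotM}) for item 2. One caveat: the direction your argument for item 2 actually establishes is $\bar{\mathcal{O}}_{M,-s}(\Phi)=0 \Rightarrow \mathcal{O}_{M,s}(\mathcal{A}_{M,s}(\Phi))=0$, which agrees with the phrasing in Corollary \ref{cor-sw0M} but is opposite to the literal wording of item 2 here; this is an inconsistency in the paper's statements rather than a gap in your reasoning.
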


We refer once more to \cite{Andersson2} for further symmetry operators for Maxwell equations. 
We also note the recent work \cite{Aksteiner} in which symmetry operators for spin 1 and the 
connection with Teukolsky systems and Debye potentials are studied.

\subsection{Tensor expressions}

We now put in tensor form the spinor expressions for the Maxwell field. First, we need to introduce 
the anti-self-dual 2-form
\begin{equation}\label{Pab}
 \stackbin[]{s}{P}_{\a\b}:=\stackbin[]{s}{P}{}_{AB}\bar\e_{A'B'}, \;\;\;\; s=0,\pm1,
\end{equation}
or, explicitly,
\begin{eqnarray}
 \stackbin[]{+1}{P}_{\a\b}&:=&2l_{[\a}m_{\b]}, \\
 \stackbin[]{0}{P}_{\a\b}&:=&2\Psi^{-1/3}_2(l_{[\a}n_{\b]}+\bar{m}_{[\a}m_{\b]}), \label{w0max}\\
 \stackbin[]{-1}{P}_{\a\b}&:=&2\Psi^{-2/3}_2\bar{m}_{[\a}n_{\b]}.
\end{eqnarray}
Note that (\ref{w0max}) is the tensor version of the Killing spinor (\ref{KSD}), therefore, it is the sum of a Killing-Yano 
tensor and its dual,
\begin{equation}
 \stackbin[]{0}{P}_{\a\b}=-\tfrac{i}{2}(Y_{\a\b}+i{}^{*}Y_{\a\b}).
\end{equation}
The tensor version of corollary \ref{gen-max1} is the following:
\begin{corbis}{gen-max1}[Tensor version]\label{gen-max2}
 Consider a vacuum type D spacetime with cosmological constant $\l$.  
 Let $\Phi$ be a solution to $\bar{\mathcal{O}}_{M,-s}(\Phi)=0$, which is the spin weight $\mp1$ Teukolsky equation for $s=\pm1$, 
 and the Fackerell-Ipser equation for $s=0$. Then: 
 \begin{enumerate}
  \item The tensor field
  \begin{equation}
  \stackbin[]{s}{\wt{F}}_{\a\b}(\Phi)=\stackbin[]{s}{E}_{\a\b}(\Phi)-i\stackbin[]{s}{{}^{*}E}_{\a\b}(\Phi),
 \end{equation}
 where
 \begin{equation}
  \stackbin[]{s}{E}_{\a\b}(\Phi)=-2\c_{[\a}[\Psi^{-2/3}_{2}\c^{\g}(\stackbin[]{s}{P}_{\b]\g}\Psi^{2/3}_2 \Phi)].
 \end{equation}
 is a (complex) solution to Maxwell equations, $\c^{\a}\stackbin[]{s}{\wt{F}}_{\a\b}=0$.
 \item The operator defined by
 \begin{equation}
  \mathcal{A}_{M,s}(\Phi)=\tfrac{1}{2}\stackbin[]{s}{P}{}^{\a\b}\stackbin[]{s}{\wt{F}}_{\a\b}(\Phi)
 \end{equation}
 maps solutions of $\mathcal{O}_{M,s}(\Phi)=0$ into solutions of $\bar{\mathcal{O}}_{M,-s}(\Phi)=0$.
 \end{enumerate}
\end{corbis}

\begin{proof}
We need only translate the spinor expressions into tensor form.
It is easy to see that
\begin{eqnarray}
\nonumber -2\c_{[\a}[\bar\Psi^{-2/3}_2\c^{\g}(\stackbin[]{s}{\bar{P}}_{\b]\g}\bar\Psi^{2/3}_2\Phi)]
 &=&\bar\e_{A'B'}\c_{D'(A}[\bar\Psi^{-2/3}_2\c^{C'}_{B)}(\stackbin[]{s}{\bar{P}}_{C'}{}^{D'}\bar\Psi^{2/3}_2\Phi)]\\
\nonumber & &+\e_{AB}\c_{D(A'}[\bar\Psi^{-2/3}_2\c^{C'D}(\stackbin[]{s}{\bar{P}}_{B')C'}\bar\Psi^{2/3}_2\Phi)]\\
\end{eqnarray}
The dual to this 2-form is (see e.g. \cite[Eq.(3.4.22)]{Penrose1})
\begin{eqnarray}
\nonumber -\e_{\a\b}{}^{\g\d}\c_{\g}[\bar\Psi^{-2/3}_2\c^{\e}(\stackbin[]{s}{\bar{P}}_{\d\e}\bar\Psi^{2/3}_2\Phi)]
 &=&-i\bar\e_{A'B'}\c_{D'(A}[\bar\Psi^{-2/3}_2\c^{C'}_{B)}(\stackbin[]{s}{\bar{P}}_{C'}{}^{D'}\bar\Psi^{2/3}_2\Phi)]\\
\nonumber & &+i\e_{AB}\c_{D(A'}[\bar\Psi^{-2/3}_2\c^{C'D}(\stackbin[]{s}{\bar{P}}_{B')C'}\bar\Psi^{2/3}_2\Phi)]\\
\end{eqnarray}
Recalling the expression (\ref{maxfield}) for $\stackbin[]{s}{\phi}_{AB}$ we get:
\begin{equation}
 \stackbin[]{s}{\phi}_{AB}\bar\e_{A'B'}
 =-2\c_{[\a}[\bar\Psi^{-2/3}_2\c^{\g}(\stackbin[]{s}{\bar{P}}_{\b]\g}\bar\Psi^{2/3}_2\Phi)]
 +i\e_{\a\b}{}^{\g\d}\c_{\g}[\bar\Psi^{-2/3}_2\c^{\e}(\stackbin[]{s}{\bar{P}}_{\d\e}\bar\Psi^{2/3}_2\Phi)],
\end{equation}
which implies that
\begin{equation}
 \stackbin[]{s}{\wt{F}}_{\a\b}(\Phi) = \stackbin[]{s}{\phi}_{AB}\bar\e_{A'B'}
 =\stackbin[]{s}{E}_{\a\b}(\Phi)-i\stackbin[]{s}{{}^{*}E}_{\a\b}(\Phi),
\end{equation}
where 
\begin{equation}
 \stackbin[]{s}{E}_{\a\b}(\Phi)=-2\c_{[\a}[\bar\Psi^{-2/3}_2\c^{\g}(\stackbin[]{s}{\bar{P}}_{\b]\g}\bar\Psi^{2/3}_2\Phi)].
\end{equation}
The proof of item 2. is immediate from corollary \ref{gen-max1} and equation (\ref{Pab}).
\end{proof}

\subsubsection{Spin weight zero}\label{sw0Msection}

We now consider in more detail the spin weight $s=0$ case of (\ref{maxfield}), in order to understand the 
role that Killing spinors and Killing-Yano tensors have in the description of the Maxwell field.
In this subsection we assume that the Killing vector $\xi^{AA'}=\c^{BA'}K_{B}{}^{A}$ is real.
First, we need to put (\ref{maxfield}) (for $s=0$) in terms of the Killing-Yano tensor:
\begin{lem}
 The spinor field given by (\ref{maxfield}) with $s=0$ can be rewritten as
 \begin{equation}\label{max2}
  \stackbin[]{0}{\phi}_{AB}(\Phi) = 2i\c^{B'}_{A}[Y_{BB'CC'}\c^{CC'}\Phi]+K_{AB}(\Box+2\Psi_2+\tfrac{2}{3}\l)\Phi,
 \end{equation}
 where $Y_{\a\b}$ is the Killing-Yano tensor (\ref{KY}).
\end{lem}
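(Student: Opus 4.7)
The plan is to expand the definition (\ref{maxfield}) for $s=0$ by iterated Leibniz rule and then identify the result with the two terms in (\ref{max2}). The first step is to use the fact (recorded just after (\ref{WM0})) that $\stackbin[]{0}{P}_{AB}=K_{AB}$, together with the conjugate of (\ref{AD}) in the operator form $\bar{\Psi}_2^{-2/3}\c_{BC'}[\bar{\Psi}_2^{2/3}X]=(\c_{BC'}-2\bar{A}_{BC'})X$, to strip off the $\bar{\Psi}_2$ rescalings and obtain
\begin{equation}
\stackbin[]{0}{\phi}_{AB}(\Phi)=-2\,\c_{B'(A}\Bigl[\c_{B)C'}\bigl(\bar{K}^{B'C'}\Phi\bigr)-2\bar{A}_{B)C'}\bar{K}^{B'C'}\Phi\Bigr],
\end{equation}
so that the identity we must prove becomes one involving only $\c$, $K$, $\bar{K}$ and $\Phi$.

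Next I would Leibniz-expand the inner derivative. The divergence $(\c_{BC'}\bar{K}^{B'C'})\Phi$ is, via the conjugate Killing spinor equation $\c_{C(C'}\bar{K}_{A'B')}=0$, proportional to the conjugate Killing vector $\bar{\xi}^{BB'}$, which under the reality assumption coincides with $\xi^{BB'}$; using (\ref{A}) and the explicit dyad form of $\bar{K}^{B'C'}$, the cross term $\bar{A}_{BC'}\bar{K}^{B'C'}$ can likewise be written in terms of the same Killing vector, so these pieces can be combined cleanly. What remains inside the outer derivative is $\bar{K}^{B'C'}\c_{BC'}\Phi$. Applying $\c_{B'(A}$, the terms in which the outer derivative lands on $\bar{K}$ (or on the rescaling factors hidden in the combined expression) are again controlled by the Killing spinor equation, while the terms in which both derivatives land on $\Phi$ give $\bar{K}^{B'C'}\c_{B'(A}\c_{B)C'}\Phi$. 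Since $\Phi$ is a scalar, $\c_{B'A}\c_{BC'}\Phi=\c_{BC'}\c_{B'A}\Phi$, and the symmetrisation in $(AB)$ together with the symmetry of $\bar{K}^{B'C'}$ in $(B'C')$ collapses this second-derivative piece to a trace producing $\tfrac{1}{2}K_{AB}\Box\Phi$; the curvature corrections $2\Psi_2+\tfrac{2}{3}\lambda$ then arise from exactly the same Killing-spinor manipulation used to derive the Fackerell-Ipser operator in the $s=0$ case of the proof of theorem \ref{thm-maxwell}, reproducing the second summand $K_{AB}(\Box+2\Psi_2+\tfrac{2}{3}\lambda)\Phi$ of (\ref{max2}).

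Finally, the remaining terms (in which one derivative still acts on $\Phi$) must be recognised as the divergence $2i\,\c^{B'}_{A}[Y_{BB'CC'}\c^{CC'}\Phi]$. Here I would invoke the decomposition (\ref{KY}), $Y_{\alpha\beta}=iK_{AB}\bar{\epsilon}_{A'B'}-i\bar{K}_{A'B'}\epsilon_{AB}$: the $K$-piece and the $\bar{K}$-piece of $Y$ absorb two separate families of terms produced by the expansion, and the prefactor $2i$ is exactly what emerges after cancelling the signs coming from contractions with $\epsilon$ and $\bar{\epsilon}$. The principal obstacle is purely bookkeeping: multiple invocations of the Killing spinor equation generate several $\xi$- and $\bar{\xi}$-terms that must cancel against the $\bar{A}$-contribution isolated in the first step, and one must keep careful track of the symmetrisation on $(AB)$ while reshuffling contractions with the symmetric pair $(B'C')$ on $\bar{K}^{B'C'}$. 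The reality of $\xi^{\alpha}$ is what ensures these cancellations are complete, leaving only the two structurally distinct terms appearing in (\ref{max2}).
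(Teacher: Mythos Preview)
Your overall strategy---strip off the $\bar\Psi_2$ rescalings, Leibniz-expand, invoke the Killing spinor equation and the reality of $\xi^{\alpha}$---is the same as the paper's. But there is a concrete error in the middle step. You claim that the second-derivative piece $\bar{K}^{B'C'}\c_{B'(A}\c_{B)C'}\Phi$ ``collapses to a trace producing $\tfrac{1}{2}K_{AB}\Box\Phi$.'' It does not: the contraction with $\bar{K}^{B'C'}$ is over \emph{primed} indices, while the free indices on the expression are the \emph{unprimed} pair $(AB)$. Symmetry in $(AB)$ and $(B'C')$ only tells you this is the totally symmetric part of $\c_{AA'}\c_{BB'}\Phi$ contracted with $\bar K^{A'B'}$; no $\epsilon$-trace is generated, and there is no mechanism for $K_{AB}$ (an unprimed object) to appear from $\bar K^{A'B'}$ alone. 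So the term $-2\bar K^{A'B'}\c_{AA'}\c_{BB'}\Phi$ simply survives in the expansion of $\stackbin[]{0}{\phi}_{AB}$.

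What the paper actually does is compute \emph{both} sides of (\ref{max2}) separately and compare. Expanding the left-hand side yields
\[
\stackbin[]{0}{\phi}_{AB}(\Phi)=\tfrac{8}{3}\xi^{A'}_{(A}\c_{B)A'}\Phi+(2\Psi_2+\tfrac{R}{6})K_{AB}\Phi-2\bar K^{A'B'}\c_{AA'}\c_{BB'}\Phi,
\]
using $\bar A_{BC'}\bar K^{B'C'}=-\tfrac13\xi^{B'}_{B}$ and (\ref{k5}). Independently, expanding $2i\c^{B'}_{(A}[Y_{B)B'CC'}\c^{CC'}\Phi]$ via the decomposition (\ref{KY}) gives
\[
\tfrac{8}{3}\xi^{C'}_{(A}\c_{B)C'}\Phi-K_{AB}\Box\Phi-2\bar K^{A'B'}\c_{AA'}\c_{BB'}\Phi.
\]
The $\bar K$ second-derivative term and the $\xi\c\Phi$ term appear \emph{identically} in both expressions and therefore cancel in the difference, leaving exactly $K_{AB}(\Box+2\Psi_2+\tfrac{2}{3}\lambda)\Phi$. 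In particular the $K_{AB}\Box\Phi$ contribution comes from the $K$-piece of $Y$ on the right-hand side, not from any collapse of the $\bar K$ term on the left. Your proposal would go through if you reorganise it along these lines.
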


\begin{proof}
We have
\begin{eqnarray*}
 \stackbin[]{0}{\phi}_{AB}(\Phi)&=&-2\c_{B'(A}\left[\bar\Psi^{-2/3}_2\c_{B)C'}[\bar\Psi^{2/3}_{2}\bar{K}^{B'C'}\Phi]\right]\\
 &=&4\c_{B'(A}[\bar{A}_{B)C'}\bar{K}^{B'C'}\Phi]-2\c_{B'(A}\c_{B)C'}(\bar{K}^{B'C'}\Phi).
\end{eqnarray*}
Using (\ref{k4}) and the definition of the (real, Killing) vector $\xi^{AA'}$, we get 
$\bar{A}_{BC'}\bar{K}^{B'C'}=-\frac{1}{3}\xi^{B'}_{B}$, and then
\begin{eqnarray}
\nonumber  \stackbin[]{0}{\phi}_{AB}(\Phi)&=&-\tfrac{4}{3}\c_{B'(A}[\xi^{B'}_{B)}\Phi]-2\c_{B'(A}\left[(\c_{B)C'}\bar{K}^{B'C'})\Phi
 +\bar{K}^{B'C'}\c_{B)C'}\Phi\right]\\
\nonumber  &=&\tfrac{8}{3}\xi^{A'}_{(A}\c_{B)A'}\Phi+\tfrac{2}{3}(\c_{B'(A}\xi^{B'}_{B)})\Phi-2\bar{K}^{A'B'}\c_{AA'}\c_{BB'}\Phi\\
 &=&\tfrac{8}{3}\xi^{A'}_{(A}\c_{B)A'}\Phi+(2\Psi_2+\tfrac{R}{6})K_{AB}\Phi-2\bar{K}^{A'B'}\c_{AA'}\c_{BB'}\Phi \label{phi0-1}
\end{eqnarray}
where we also used (\ref{k5}) for the divergence of $\xi^{AA'}$. On the other hand,
\begin{eqnarray}
\nonumber 2i\c^{B'}_{(A}[Y_{B)B'CC'}\c^{CC'}\Phi] &=& 2i[\c^{B'}_{(A}Y_{B)B'CC'}]\c^{CC'}\Phi+2Y_{B'(B}{}^{CC'}\c^{B'}_{A)}\c_{CC'}\Phi\\
\nonumber &=& -2\c^{B'}_{(A}K_{B)C}\bar{\epsilon}_{B'C'}\c^{CC'}\Phi+2\c^{B'}_{(A}\bar{K}_{|B'C'|}\epsilon_{B)C}\c^{CC'}\Phi\\
\nonumber & & -2K_{BC}\bar{\epsilon}_{B'C'}\c^{B'}_{A}\c^{CC'}\Phi+2\bar{K}_{B'C'}\epsilon_{BC}\c^{B'}_{A}\c^{CC'}\Phi\\
\nonumber &=& +\tfrac{2}{3}\xi_{C'(B}\epsilon_{A)C}\c^{CC'}\Phi - 2\xi_{C'(A}\c^{C'}_{B)}\Phi -K_{AB}\Box \Phi \\
\nonumber & & - 2\bar{K}^{A'B'}\c_{AA'}\c_{BB'}\Phi\\
\nonumber  &=& \tfrac{8}{3}\xi^{C'}_{(A}\c_{B)C'}\Phi-K_{AB}\Box \Phi-2\bar{K}^{A'B'}\c_{AA'}\c_{BB'}\Phi.\\ 
\label{phi0-2}
\end{eqnarray}
Combining (\ref{phi0-1}) and (\ref{phi0-2}), (\ref{max2}) follows immediately. 
\end{proof}

Now we give the tensor form of (\ref{max2}). It is convenient to separate $\Phi$ into its real and imaginary parts 
in the form $\Phi \equiv u+iv$, with $u$ and $v$ real scalar fields.

\begin{cor}\label{cor-sw0M}
 Let $\Phi=u+iv$ be a solution of the Fackerell-Ipser equation 
 $\bar{\mathcal{O}}_{M,0}=(\Box+2\bar{\Psi}_2+\tfrac{2}{3}\l)\Phi=0$ on a $\l$-vacuum type D spacetime, then: 
\begin{enumerate}
 \item The tensor field $F_{\a\b}(\Phi)= E_{\a\b}(u)-{}^{*}E_{\a\b}(v)$, where
 \begin{equation}\label{bivector}
   {}^{*}E_{\a\b}(v)= -4\c_{[\a}(Y_{\b]}{}^{\g}\c_{\g}v)-4\im(\Psi_2){}^{*}Y_{\a\b}v, 
 \end{equation}
  is a solution to Maxwell equations, $\c^{\a}F_{\a\b}=0=\c^{\a}{}^{*}F_{\a\b}$.
 \item The operator $\mathcal{A}_{M,0}$ defined by
 \begin{equation}\label{Atensor}
  \mathcal{A}_{M,0}(\Phi) = -(Y^{\a\b}+i^{*}Y^{\a\b})\c_{\a}(Y_{\b}{}^{\g}\c_{\g}\Phi)
  -8i\im(\Psi_2)\Psi^{-2/3}_2\Phi
 \end{equation}
 maps solutions of $\bar{\mathcal{O}}_{M,0}(\Phi)=0$ into solutions of $\mathcal{O}_{M,0}(\mathcal{A}_{M,0}(\Phi))=0$.
\end{enumerate}
\end{cor}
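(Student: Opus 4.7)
The plan is to begin from the spinor expression (\ref{max2}) of the preceding lemma, impose the Fackerell--Ipser equation to simplify the wave-operator term, translate to tensor form using the identification $\stackbin[]{0}{\wt F}_{\alpha\beta} = \stackbin[]{0}{\phi}_{AB}\bar\epsilon_{A'B'} = E_{\alpha\beta} - i\,{}^{*}E_{\alpha\beta}$, and finally split the complex scalar $\Phi = u + iv$ into real and imaginary parts. The structural point is that the Fackerell--Ipser equation involves $\bar\Psi_2$ whereas (\ref{max2}) contains $\Psi_2$, and the mismatch is exactly what produces the characteristic $\im(\Psi_2)$ terms in (\ref{bivector}) and (\ref{Atensor}).

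Concretely, I would first use $(\Box + 2\bar\Psi_2 + \tfrac{2}{3}\lambda)\Phi = 0$ to write, on shell,
\begin{equation*}
(\Box + 2\Psi_2 + \tfrac{2}{3}\lambda)\Phi = 4i\,\im(\Psi_2)\,\Phi,
\end{equation*}
so that the second term in (\ref{max2}) becomes $4i\,\im(\Psi_2)\,K_{AB}\Phi$. Next I would pass to tensor form: multiplying by $\bar\epsilon_{A'B'}$ and applying the standard translation rules used in the proof of the tensor version of Theorem \ref{spin1}, the derivative term $2i\nabla^{B'}_A[Y_{BB'CC'}\nabla^{CC'}\Phi]\,\bar\epsilon_{A'B'}$ decomposes into a self-dual/anti-self-dual pair of the form $-4\nabla_{[\alpha}(Y_{\beta]}{}^{\gamma}\nabla_{\gamma}\Phi)$ and its Hodge dual, while the correction term contributes $-4\,\im(\Psi_2)\,{}^{*}Y_{\alpha\beta}\Phi$ to the ${}^{*}E$-piece, thanks to $-iK_{AB}\bar\epsilon_{A'B'} + i\bar K_{A'B'}\epsilon_{AB} = {}^{*}Y_{\alpha\beta}$. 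Writing $\Phi = u + iv$ and noting that $F_{\alpha\beta} = \re \stackbin[]{0}{\wt F}_{\alpha\beta}$ picks up $E_{\alpha\beta}(u)$ from $E_{\alpha\beta}(\Phi)$ and $-{}^{*}E_{\alpha\beta}(v)$ from the $-i\,{}^{*}E_{\alpha\beta}(\Phi)$ piece, one obtains item 1, with the Maxwell property following from Corollary \ref{gen-max2} applied separately to the real and imaginary parts of the complex solution $\stackbin[]{0}{\wt F}_{\alpha\beta}$.

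For item 2, I would insert the just-derived tensor expression for $\stackbin[]{0}{\wt F}_{\alpha\beta}(\Phi)$ into $\mathcal{A}_{M,0}(\Phi) = \tfrac{1}{2}\stackbin[]{0}{P}^{\alpha\beta}\stackbin[]{0}{\wt F}_{\alpha\beta}(\Phi)$ using $\stackbin[]{0}{P}^{\alpha\beta} = -\tfrac{i}{2}(Y^{\alpha\beta} + i\,{}^{*}Y^{\alpha\beta})$. The antisymmetry of $Y^{\alpha\beta}$ removes the antisymmetrization brackets, giving a term $-(Y^{\alpha\beta} + i\,{}^{*}Y^{\alpha\beta})\nabla_\alpha(Y_\beta{}^{\gamma}\nabla_\gamma\Phi)$. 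The remaining algebraic piece, a contraction of $\stackbin[]{0}{P}^{\alpha\beta}$ against $\im(\Psi_2)\,{}^{*}Y_{\alpha\beta}\Phi$, reduces after using (\ref{w0max}), (\ref{KY}) and $\stackbin[]{0}{P}_{\alpha\beta}\stackbin[]{0}{P}^{\alpha\beta} \propto \Psi_2^{-2/3}$ (with the appropriate numerical factor dictated by the normalisation $k=2$ of the Killing spinor) to the advertised $-8i\,\im(\Psi_2)\,\Psi_2^{-2/3}\,\Phi$, matching (\ref{Atensor}). That $\mathcal{A}_{M,0}$ maps the Fackerell--Ipser equation to itself is then either a direct on-shell verification, or follows abstractly from the fact that $\bar{\mathcal{O}}_{M,0} = \mathcal{O}_{M,0}$ up to complex conjugation of the potential and from the adjointness scheme of subsection \ref{adjop-sec}.

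The main obstacle I anticipate is the careful bookkeeping of factors of $i$ in the real/imaginary split of the anti-self-dual 2-form, combined with correctly applying the Killing--Yano identities $Y_{\alpha\gamma}{}^{*}Y^{\gamma}{}_{\beta}$ and the algebraic relation between $\stackbin[]{0}{P}_{\alpha\beta}$, $Y_{\alpha\beta}$ and ${}^{*}Y_{\alpha\beta}$; the reality assumption on $\xi^{\alpha}$ is essential here, since it is what allowed the simplification $\bar{A}_{BC'}\bar K^{B'C'} = -\tfrac{1}{3}\xi^{B'}_B$ in the lemma and what makes $Y_{\alpha\beta}$ (and not merely the conformal Killing--Yano object) appear in the final formulas.
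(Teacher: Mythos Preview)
Your proposal is correct and follows precisely the route the paper intends: the corollary is stated without proof, positioned immediately after the lemma giving (\ref{max2}) and after Corollary \ref{gen-max2}, so the implicit argument is exactly your three-step plan of (i) using the on-shell identity $(\Box+2\Psi_2+\tfrac{2}{3}\lambda)\Phi=4i\,\im(\Psi_2)\Phi$, (ii) translating the resulting spinor expression to tensor form via $K_{AB}\bar\epsilon_{A'B'}=\stackbin[]{0}{P}_{\alpha\beta}=-\tfrac{i}{2}(Y_{\alpha\beta}+i\,{}^{*}Y_{\alpha\beta})$, and (iii) separating $\Phi=u+iv$; item 2 then follows by contracting with $\stackbin[]{0}{P}^{\alpha\beta}$ and invoking item 2 of Corollary \ref{gen-max2}. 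Your identification of the bookkeeping of factors of $i$ and the role of the reality of $\xi^{\alpha}$ as the only delicate points is also accurate.
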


\noindent
We note that several simplifications in the above formulae occur in the case in which $\Psi_2$ is real. 
In the first place, the second terms in the RHS of (\ref{bivector}) and (\ref{Atensor}) vanish. Furthermore, using 
(\ref{KY}), (\ref{DKY}) and (\ref{KSD}) (for $k=2$), it is not difficult to see that
\begin{eqnarray}
 {}^{*}Y^{\a\b}Y_{\b}{}^{\g}&=&-2\im(\Psi^{-2/3}_2)g^{\a\g},\\
 Y_{\b}{}^{\g}\xi^{\b}&=&-3\c^{\g}\im(\Psi^{-2/3}_2),
\end{eqnarray}
and then
\begin{equation}\label{AM0real}
  \mathcal{A}_{M,0}(\Phi) = -Y^{\a\b}\c_{\a}(Y_{\b}{}^{\g}\c_{\g}\Phi), \;\;\;\;\;\; \Psi_2\in\mathbb{R},
\end{equation}
which coincides with the well-known Carter operator \cite{Carter}. 
The most important case in our present work in which $\Psi_2$ is real is the Schwarzschild solution, where the Carter operator
coincides in turn with the laplacian on the sphere. These observations are relevant in section \ref{maxschw}, where we 
apply our general results to Maxwell fields on the Schwarzschild-(A)dS solution.

\section{Gravitational perturbations of type D spaces}\label{gravsec}

We now turn our attention to linearized gravity on curved, Petrov type D backgrounds, which include the 
stationary, $\l$-vacuum black hole solutions of the Kerr-(A)dS family. 
Metric perturbations of rotating black holes are traditionally studied by the Teukolsky equations \cite{Teukolsky}, 
which are decoupled, separable differential equations for the extreme perturbed Weyl scalars $\dot{\Psi}_0$ and $\dot{\Psi}_4$. 
These fields have the desirable property of being tetrad and coordinate gauge invariant.
The spin weight zero Weyl scalar $\dot{\Psi}_2$ (which is just tetrad gauge invariant),
on the other hand, has proven to be useful in the spherically symmetric case 
\cite{Dotti}, since, for the odd sector of gravitational perturbations of the Schwarzschild black hole, 
(a rescaled version of) the imaginary part $\im\dot{\Psi}_2$ is gauge invariant, satisfies a wave-like equation (\ref{we}),
and encodes all the information of the gravitational perturbation (in \cite{Dotti} it is used 
as the linearization of a curvature invariant, an identity valid for all type D spacetimes).
Furthermore, the perturbed metric can be reconstructed from this quantity in a covariant, compact form (\ref{metricschw}). 
The application of the spin $\s=2$ theorem \ref{spin2} will allow us to find the origin of this 
reconstruction, as well as similar covariant, compact maps from solutions of the Teukolsky equations to 
metric perturbations.

\subsection{The Bianchi identities and the linearized Einstein tensor}\label{linbianchi}

We now explain how to relate {\em off-shell} the decoupled equations for perturbed Weyl scalars to the linearized 
Einstein equations. For this we use the Bianchi identities. 
As these identities are a consequence of the definition of the curvature tensor, 
$\c_{[\a}R_{\b\g]\d\e}=0$, they are valid in a generic spacetime regardless of the field equations. 
Contracting with the metric, they imply
\begin{equation}
 \c^{\d}R_{\a\b\g\d}=-2\c_{[\a}R_{\b]\g},
\end{equation}
or, in terms of the Weyl tensor,
\begin{equation}
 \c^{\d}C_{\a\b\g\d}=-\c_{[\a}R_{\b]\g}-g_{\g[\a}\c^{\d}R_{\b]\d}+\tfrac{1}{3}g_{\g[\a}\c_{\b]}R.
\end{equation}
Consider now a linear, covariant differential operator $O^{\a\b\g}=O^{[\a\b]\g}$, with $g_{\a\g}O^{\a\b\g}=0$. 
Applying $O^{\a\b\g}$ to the previous identity, one gets
\begin{equation}\label{keyeq3}
 O^{\a\b\g}\c^{\d}C_{\a\b\g\d}=-O^{\a\b\g}\c_{\a}R_{\b\g}.
\end{equation}
Note that the trace-free condition of the operator $O^{\a\b\g}$ implies that we can add to $R_{\a\b}$ a term proportional 
to the metric; this way we can replace $R_{\a\b}$ with the Einstein tensor and add a cosmological constant term:
\begin{equation}\label{keyeq1}
 O^{\a\b\g}\c^{\d}C_{\a\b\g\d}=-O^{\a\b\g}\c_{\a}(G_{\b\g}+\l g_{\b\g}).
\end{equation}
We claim that this equation is the key to relate the decoupled equations for the perturbed Weyl scalars to the 
linearized Einstein equations. 
In the following section we will choose $O^{\a\b\g}$ such that the left hand side of (\ref{keyeq1})
is a decoupled equation for some Weyl scalar plus additional terms that vanish when linearizing.
On the other hand, if we linearize the right hand side 
of (\ref{keyeq1}) around a $\l$-vacuum solution, the linearization operator $\frac{d}{d\e}|_{\e=0}$ 
commutes with $O^{\a\b\g}\c_{\a}$ (because $(G_{\a\b}+\l g_{\a\b})|_{\e=0}=0$)
and we are left with a {\em background} operator acting on the linearized Einstein tensor:
\begin{equation}
 \left.\tfrac{d}{d\e}\right|_{\e=0}\left[O^{\a\b\g}\c^{\d}C_{\a\b\g\d}\right] 
 =-O^{\a\b\g}\c_{\a}\left[\left.\tfrac{d}{d\e}\right|_{\e=0}(G_{\b\g}+\l g_{\b\g})\right].
\end{equation}
The operator $O^{\a\b\g}$ will have the generic form
\begin{equation}
 O^{\a\b\g}=W^{\a\b\g\m}(\c_{\m}+nA_{\m}),
\end{equation}
for some constant $n$, where $W^{\a\b\g\m}$ has the symmetries of the Weyl tensor, 
and the 1-form $A_{\mu}$ is the tensorial counterpart of the spinor $A_{AA'}$ introduced before.

We find that the calculations are most easily performed using the Bianchi identities in spinor form. Following \cite{Penrose1},
contracting with the volume form they are equivalent to $\c^{\a}{}^{*}R_{\a\b\g\d}=0$, where ${}^{*}R_{\a\b\g\d}$ is the 
left-dual Riemann tensor, ${}^{*}R_{\a\b\g\d}=\tfrac{1}{2}\e_{\a\b}{}^{\m\n}R_{\m\n\g\d}$.
In spinor terms (see \cite[section 4.10]{Penrose1}), one gets
\begin{equation}\label{fullbianchi}
 \c^{A}_{B'}\psi_{ABCD}=\c^{A'}_{B}\Phi_{CDA'B'}-2\e_{B(C}\c_{D)B'}\Lambda,
\end{equation}
where $\Lambda=R/24$ (with $R$ the curvature scalar), and $\Phi_{CDA'B'}$ is the spinor analogue of the trace-free Ricci tensor,
\begin{equation}
 \Phi_{ABA'B'}=-\tfrac{1}{2}R_{ABA'B'}+\tfrac{R}{8}\e_{AB}\bar\e_{A'B'}.
\end{equation}
If we apply a linear differential spinor operator $O^{B'BCD}=O^{B'(BCD)}$ in (\ref{fullbianchi}), the trace part vanishes 
because of the symmetries of $O^{B'BCD}$, and, analogously as in (\ref{keyeq1}), we can replace $\Phi_{CDA'B'}$ with the 
Einstein tensor plus a cosmological constant term:
\begin{equation}\label{keyeq2}
 O^{B'BCD}\c^{A}_{B'}\psi_{ABCD}=-\tfrac{1}{2}O^{B'BCD}\c^{A'}_{B}(G_{CDA'B'}+\l\e_{CD}\bar{\e}_{A'B'}).
\end{equation}

\subsection{The decoupled equations}

In \cite{Andersson1}, decoupled equations for all the perturbed Weyl scalars are obtained, assuming that the linearized 
Einstein equations are satisfied (that is, {\em on-shell}). These equations are the 
Teukolsky equations for spin weight $s=\pm2$, corresponding to $\dot{\Psi}_0$ and $\dot{\Psi}_4$;
the `linearized Fackerell-Ipser equation' for spin weight $s=0$, which corresponds to $\dot{\Psi}_2$;
and two more equations for spin weight $s=\pm1$ that are not decoupled in the sense that they involve perturbed 
quantities other than the corresponding scalars $\dot{\Psi}_1$ and $\dot{\Psi}_3$. 
As we mentioned in section \ref{adjop-sec}, we will focus only in the spin weight $s=0,\pm2$ cases.
We recall that, in what follows, all expressions containing linearization of spinors are purely formal; they should 
be understood as the linearization of the corresponding tensor expressions, which is always possible because we are working with 
fields of integer spin (see for example footnote \ref{fn} below).
On the other hand, when linearizing tetrad components of tensors, we assume that there is a monoparametric family 
$\{l^{\a}(\e),n^{\a}(\e),m^{\a}(\e),\bar{m}^{\a}(\e)\}$ such that, in the background, 
$\{l^{\a}(0),n^{\a}(0),m^{\a}(0),\bar{m}^{\a}(0)\}$ is the principal tetrad of a type D space. 
Thus, for example, when the quantity $\dot{\Psi}_0:=\tfrac{d}{d\e}|_{\e=0}\Psi_0(\e)$ appears below, one has
\begin{eqnarray}
\nonumber \tfrac{d}{d\e}|_{\e=0}\Psi_0(\e)&=&\tfrac{d}{d\e}|_{\e=0}(\wt{C}_{\a\b\g\d}l^{\a}m^{\b}l^{\g}m^{\d})\\
\nonumber &=&(\tfrac{d}{d\e}|_{\e=0}\wt{C}_{\a\b\g\d})(l^{\a}m^{\b}l^{\g}m^{\d})|_{\e=0}
  +\wt{C}_{\a\b\g\d}|_{\e=0}\tfrac{d}{d\e}|_{\e=0}(l^{\a}m^{\b}l^{\g}m^{\d})\\
\end{eqnarray}
where $(l^{\a}m^{\b}l^{\g}m^{\d})|_{\e=0}$ refers to the principal tetrad of the background. 
We will not need to work explicitly with the perturbed tetrad.

We now demonstrate the spin $\s=2$ theorem \ref{spin2SV}, and in the following subsection 
we use the Bianchi identities to relate the decoupled equations for the Weyl scalars to the linearized Einstein equations 
(i.e. we evaluate explicitly (\ref{keyeq2})).

The objects (\ref{P}) for the $\s=2$ case are:
\begin{eqnarray}
 P^{A_1A_2A_3A_4}_{(2,2)}&=&o^{A_1}o^{A_2}o^{A_3}o^{A_4}=: o^{A_1A_2A_3A_4}, \\
 P^{A_1A_2A_3A_4}_{(2,0)}&=&6\Psi^{-2/3}_2o^{(A_1}o^{A_2}\iota^{A_3}\iota^{A_4)}=: L^{A_1A_2A_3A_4}, \label{L} \\
 P^{A_1A_2A_3A_4}_{(2,-2)}&=&\Psi^{-4/3}_2\iota^{A_1}\iota^{A_2}\iota^{A_3}\iota^{A_4}=: \Psi^{-4/3}_2 \iota^{A_1A_2A_3A_4}.
\end{eqnarray}
Note that (\ref{L}) is a four-index Killing spinor, $\c_{E'(E}L_{ABCD)}=0$ (the product of two $K_{AB}$'s (\ref{KSD})).
For spin weight $s=\pm2$, theorem \ref{spin2SV} give the $s=\pm2$ Teukolsky equations for 
gravitational perturbations, while for $s=0$ we obtain the linearization of the Fackerell-Ipser operator. 
\begin{thm}[spin $\s=2$]\label{thm-spin2}
 Let $(\mathcal{M}_{\e},g_{\a\b}(\e))$ be a monoparametric family of pseudo-Riemannian manifolds, analytic around $\e=0$, 
 such that $g_{\a\b}(0)$ satisfies the vacuum Einstein equations (with cosmological constant $\lambda$)
 and is of Petrov type D.
 Let $\psi_{ABCD}$ be the Weyl curvature spinor of the metric $g_{\a\b}(\e)$, and let $s=0,\pm2$, then the following equality holds:
  \begin{equation}\label{SEOTspin2}
  \left.\frac{d}{d\e}\right|_{\e=0}\left[\mathcal{S}_{G,s}\mathcal{E}_{G}(\psi_{ABCD})\right]
  =\left.\frac{d}{d\e}\right|_{\e=0}\left[\mathcal{O}_{G,s}\mathcal{T}_{G,s}(\psi_{ABCD})\right],
 \end{equation}
 where the linear differential operators are
 \begin{eqnarray}
  \mathcal{S}_{G,s}(J_{B'BCD}) &:=&\Psi^{4/3}_2 P^{ABCD}_{(2,s)}\c^{B'}_{A}[\Psi^{-4/3}_2 J_{B'BCD}], \\
  \mathcal{E}_{G}(\psi_{ABCD}) &:=&\c^{A}_{B'}\psi_{ABCD}, \\
  \mathcal{O}_{G,s}(\Phi) &:=&\left(\teuk_{2s}+2(1-\tfrac{9}{4}s^2)\Psi_2 +\tfrac{R}{6}\right)\Phi, \\
  \mathcal{T}_{G,s}(\psi_{ABCD}) &:=&-\tfrac{(3-|s|)}{2} P^{ABCD}_{(2,s)}\psi_{ABCD}.
 \end{eqnarray}
\end{thm}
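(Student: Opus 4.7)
The plan is to mirror the structure of the Dirac (Theorem~\ref{thmDirac}) and Maxwell (Theorem~\ref{thm-maxwell}) proofs: establish an off-shell identity for an arbitrary totally symmetric four-spinor $\phi_{ABCD}$ on a fixed vacuum type D background, and then specialize to $\phi_{ABCD}=\psi_{ABCD}(\e)$ and differentiate at $\e=0$. At $\e=0$ the background Bianchi identity forces $\mathcal{E}_G(\psi(0))=0$ and also $\mathcal{T}_{G,s}(\psi(0))$ reduces to a trivial background scalar, so the first nontrivial content of the identity appears at the linearised level, which is why the theorem is stated with $\tfrac{d}{d\e}|_{\e=0}$ on both sides.

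For the extreme spin weights $s=\pm 2$ I would proceed exactly as in the $\s=\tfrac{1}{2}$ and $\s=1$ extreme cases of Theorems~\ref{thmDirac} and~\ref{thm-maxwell}. Using $\Psi^{4/3}_2\c^{B'}_A(\Psi^{-4/3}_2 J)=\c^{B'}_A J+4A^{B'}_A J$ to absorb the conformal factor, I apply Leibniz to move the dyad inside the double derivative, convert to weighted GHP derivatives via (\ref{thorn})--(\ref{eth'}) and the dyad-derivative formulae of Appendix~\ref{appB}, and substitute the GHP expansion (\ref{bianchieq}) of $\c^A_{B'}\phi_{ABCD}$. On a type D background the vanishing of $\kappa$, $\sigma$ and of $\Psi_0,\Psi_1,\Psi_3,\Psi_4$ removes the subleading pieces of (\ref{bianchieq}), and combining the $A_\a$-contributions from (\ref{A}) produces the operators $(\text{\th}-4\rho-\bar\rho)$ and $(\text{\dh}-4\tau-\bar\tau')$ acting on the weighted components of $\c\phi$. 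The single surviving cross-term is killed by the commutator (\ref{comm4}) with $a=4$, in exact parallel with the $a=1$ and $a=2$ choices that worked for Dirac and Maxwell. Recognising the remainder via the explicit formula (\ref{boxp2}) with $p=4$ produces $-\tfrac{1}{2}(\teuk_{+4}-16\Psi_2+\tfrac{2}{3}\l)\phi_0$, as claimed. The $s=-2$ identity then follows from the prime operation together with the transformation rule (\ref{boxt'}).

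For spin weight zero the proof mirrors the $s=0$ argument in Theorem~\ref{thm-maxwell}, the key structural fact being that $P^{ABCD}_{(2,0)}=L^{ABCD}$ is a four-index Killing spinor, essentially the symmetric product of two copies of the two-index Killing spinor $K_{AB}$ in (\ref{KSD}). After distributing the conformal factor into a $4A^{B'}_E$ term, I would rewrite the double derivative as $\c^{B'}_E\c^A_{B'}=-\tfrac{1}{2}\epsilon^A{}_E\Box+\Box^A{}_E$ and commute $\Box$ past $L^{ABCD}$ using the twistor equation $\c_{E'(E}L_{ABCD)}=0$ together with the divergence-type identities for $L_{ABCD}$ that follow from the basic $K_{AB}$ identities (\ref{k1})--(\ref{k5}) applied to the symmetric product. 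The curvature commutator $\Box^A{}_E$ acting on $\phi_{ABCD}$, once contracted with $L^{BCDE}$, generates exactly the potential terms $2\Psi_2$ and $R/6$ via the type D form (\ref{WSD}) of $\psi_{ABCD}$ and the standard decomposition (\ref{X1}) of the curvature spinor. The remaining $A_\a$-contribution is reorganised through the identity $L^{BCDE}A^{B'}_E\propto\c^{B'}_E L^{BCDE}$, the natural generalisation of $K^{BC}A^{B'}_C=\tfrac{1}{3}\c^{B'}_C K^{BC}$ used in the Maxwell case. Collecting all pieces gives $-\tfrac{3}{2}(\Box+2\Psi_2+\tfrac{R}{6})[L^{ABCD}\phi_{ABCD}]$, which is the $s=0$ identity.

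The final step is to apply the resulting off-shell identity to $\phi_{ABCD}=\psi_{ABCD}(\e)$ and take $\tfrac{d}{d\e}|_{\e=0}$, giving (\ref{SEOTspin2}). The main obstacle is the bookkeeping in the $s=0$ case: one must carefully extend the algebraic and divergence-type identities (\ref{k1})--(\ref{k5}) to the four-index Killing spinor $L_{ABCD}$, and then verify that the curvature commutator contribution combines with the $A_\a$-term to yield precisely the coefficient $2\Psi_2$ in the Fackerell-Ipser-type potential. The extreme cases, by contrast, become essentially automatic once the parallel with the lower-spin extreme calculations is recognised, with $a=4$ in the commutator (\ref{comm4}) playing the crucial role of eliminating the subextreme cross-term.
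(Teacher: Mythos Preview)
Your plan contains a genuine conceptual gap that produces a concrete numerical error in the $s=\pm 2$ coefficient. You propose to prove an identity for an \emph{arbitrary} totally symmetric spinor $\phi_{ABCD}$ on a \emph{fixed} type D background and then specialize to $\psi_{ABCD}(\e)$. But spin $2$ is essentially different from spins $\tfrac12$ and $1$: here the field \emph{is} the curvature of $g(\e)$, so varying $\e$ also varies the spin coefficients $\kappa,\sigma,\dots$ and the Weyl scalars entering the Ricci identities. The paper therefore performs the entire GHP computation on the perturbed spacetime, keeping all $\e$-dependent quantities, collects the result as $-\tfrac12(\teuk_4-16\Psi_2+\tfrac{R}{6})\Psi_0+B[\Psi_0,\Psi_1,\kappa,\sigma]$, and only then linearizes, showing $\dot B=0$ because every term of $B$ is at least quadratic in $\{\Psi_0,\Psi_1,\kappa,\sigma\}$.

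The specific failure of the test-field route is this: on a fixed type D background the terms $3(\text{\th}-4\rho-\bar\rho)[\sigma\phi_2]-3(\text{\dh}-4\tau-\bar\tau')[\kappa\phi_2]$ vanish outright since $\sigma=\kappa=0$, and (\ref{boxp2}) alone then gives the potential $-10\Psi_2$. In the paper these terms are kept on the perturbed spacetime, and the Ricci identity (\ref{ricci2}), $\text{\th}\sigma-\text{\dh}\kappa=(\rho+\bar\rho)\sigma-(\tau+\bar\tau')\kappa+\Psi_0$, feeds back a term $3\Psi_2\Psi_0$ whose linearization $3\Psi_2(0)\dot\Psi_0$ is exactly what shifts $-10\Psi_2$ to the correct $-16\Psi_2$. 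For the same reason one must use the generic commutator (\ref{comm4generic}) rather than (\ref{comm4}). The $s=0$ case is affected even more visibly: the paper does \emph{not} use four-index Killing-spinor identities but a direct GHP calculation on the perturbed spacetime, and the right-hand side is the linearization of the \emph{whole} expression $(\Box+2\Psi_2+\tfrac{R}{6})\Psi_2^{1/3}$, hence contains the perturbed-operator piece $(\dot\Box_h+\tfrac{\dot R_h}{6})\Psi_2^{1/3}$. A background identity of the form $-\tfrac{3}{2}(\Box+2\Psi_2+\tfrac{R}{6})[L^{ABCD}\phi_{ABCD}]$ with background $\Box,R$ cannot produce that term.
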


\begin{proof}

We start by the spin weight $s=+2$ case. We have
\begin{eqnarray}
\nonumber \mathcal{S}_{G,+2}\mathcal{E}_{G}(\psi_{ABCD})&=&\left[o^{EBCD}\c^{B'}_{E}+4 o^{EBCD}A^{B'}_{E}\right]\c^{A}_{B'}\psi_{ABCD}\\
\nonumber &=&o^{EBCD}\c^{B'}_{E}\c^{A}_{B'}\psi_{ABCD}+4 o^{EBCD}A^{B'}_{E}\c^{A}_{B'}\psi_{ABCD}.\\
 \label{Sbianchi}
\end{eqnarray}
Noting that $(\c^{A}_{B'}\psi_{ABCD})|_{\e=0}=0$, we can evaluate the term $o^{EBCD}A^{B'}_{E}$ in the background; 
thus, using expression (\ref{A}) for the 1-form $A_{AA'}$, the second term in the bottom line of (\ref{Sbianchi}) gives
\begin{equation}\label{4oA}
 4 o^{EBCD}A^{B'}_{E}\c^{A}_{B'}\psi_{ABCD}=4(\rho\bar{\iota}^{B'}-\tau\bar{o}^{B'}) o^B o^C o^D \c^{A}_{B'}\psi_{ABCD}
\end{equation}
On the other hand, the term with second derivatives of $\psi_{ABCD}$ in (\ref{Sbianchi}) is treated along similar lines as in 
the Dirac and Maxwell cases. Leibniz rule gives
\begin{equation}\label{leibniz}
 o^{EBCD}\c^{B'}_{E}\c^{A}_{B'}\psi_{ABCD}=\c^{B'}_{E}(o^{EBCD}\c^{A}_{B'}\psi_{ABCD})-(\c^{B'}_{E}o^{EBCD})(\c^{A}_{B'}\psi_{ABCD}),
\end{equation}
and manipulations analogous to those performed in (\ref{leibD}) lead to
\begin{eqnarray*}
 o^{EBCD}\c^{B'}_{E}\c^{A}_{B'}\psi_{ABCD}&=&-(\text{\th}-\bar\rho)[o^B o^C o^D\bar{\iota}^{B'}\c^{A}_{B'}\psi_{ABCD}]\\
 & & +(\text{\dh}-\bar\tau')[o^B o^C o^D\bar{o}^{B'}\c^{A}_{B'}\psi_{ABCD}] 
\end{eqnarray*}
where we used the definition of the operators (\ref{thorn}) and (\ref{eth}).
Combining this expression with (\ref{4oA}), we get
\begin{multline}
 \left[o^{EBCD}\c^{B'}_{E} + 4 o^{EBCD}A^{B'}_{E}\right]\c^{A}_{B'}\psi_{ABCD}\\
 =-(\text{\th}-4\rho-\bar\rho)[o^B o^C o^D\bar{\iota}^{B'}\c^{A}_{B'}\psi_{ABCD}]
  +(\text{\dh}-4\tau-\bar\tau')[o^B o^C o^D\bar{o}^{B'}\c^{A}_{B'}\psi_{ABCD}]. 
\end{multline}
Now we just have to put in GHP form the spinor terms in the last expression, for which we use (\ref{bianchieq}), 
and then use the same arguments as in \cite{Andersson1} in order to arrive to the decoupled equation 
(we repeat them here for completeness). 
Reordering terms in $\Psi_0$, $\Psi_1$ and $\Psi_2$, we have
\begin{eqnarray*}
& & \left[o^{EBCD}\c^{B'}_{E} + 4 o^{EBCD}A^{B'}_{E}\right]\c^{A}_{B'}\psi_{ABCD}\\
&=& \left[-(\text{\th}-4\rho-\bar\rho)(\text{\th}'-\rho')+(\text{\dh}-4\tau-\bar\tau')(\text{\dh}'-\tau')\right]\Psi_0\\
 & & +\left[(\text{\th}-4\rho-\bar\rho)(\text{\dh}-4\tau)-(\text{\dh}-4\tau-\bar\tau')(\text{\th}-4\rho)\right]\Psi_1\\
 & & +3(\text{\th}-4\rho-\bar\rho)[\sigma\Psi_2]-3(\text{\dh}-4\tau-\bar\tau')[\kappa\Psi_2].
\end{eqnarray*}
Using (\ref{boxp2}), we see that the term involving $\Psi_0$ is just 
\begin{multline}
 \left[-(\text{\th}-4\rho-\bar\rho)(\text{\th}'-\rho')+(\text{\dh}-4\tau-\bar\tau')(\text{\dh}'-\tau')\right]\Psi_0\\
 =-\frac{1}{2}(\teuk_{4}-10\Psi_2 +\tfrac{R}{6})\Psi_0-3(\kappa\kappa'-\sigma\sigma')\Psi_0.
\end{multline}
On the other hand, for the $\Psi_1$ term we use the commutation relation (\ref{comm4generic}) applied to $\Psi_1$,
with $a=4$ and $p=2$:
\begin{eqnarray}
 & & \left[(\text{\th}-4\rho-\bar\rho)(\text{\dh}-4\tau)-(\text{\dh}-4\tau-\bar\tau')(\text{\th}-4\rho)\right]\Psi_1\\
 &=&[\text{\th}-4\rho,\text{\dh}-4\tau]\Psi_1-\bar\rho(\text{\dh}-4\tau)\Psi_1+\bar\tau'(\text{\th}-4\rho)\Psi_1 \\
\nonumber &=& -10\Psi^2_1+(\text{\dh}'\Psi_1+2\tau'\Psi_1+4\Psi_1(\text{\dh}-\bar\tau+\tau'))\sigma \\
 & & -(\text{\th}'\Psi_1+2\rho'\Psi_1+4\Psi_1(\text{\th}'-\bar\rho'+\rho'))\kappa.
\end{eqnarray}
For the $\Psi_2$ term, we only need to use the Ricci identities (\ref{ricci2}):
\begin{eqnarray*}
 & & 3(\text{\th}-4\rho-\bar\rho)[\sigma\Psi_2]-3(\text{\dh}-4\tau-\bar\tau')[\kappa\Psi_2]\\
 &=& 3\{(\text{\th}\sigma-\text{\dh}\kappa)\Psi_2+\sigma\text{\th}\Psi_2-\kappa\text{\dh}\Psi_2+
 (4\tau\kappa+\bar\tau'\kappa-4\rho\sigma-\bar\rho\sigma)\Psi_2\}\\
 &=&3\Psi_2\Psi_0+3\sigma(\text{\th}-3\rho)\Psi_2-3\kappa(\text{\dh}-3\tau)\Psi_2.
\end{eqnarray*}
Then, recalling (\ref{Sbianchi}) we get
\begin{equation}
 \nonumber \mathcal{S}_{G,2}\mathcal{E}_{G}(\psi_{ABCD})=-\tfrac{1}{2}(\teuk_{4}-16\Psi_2+\tfrac{R}{6})\Psi_0
 +B[\Psi_0,\Psi_1,\kappa,\sigma]
\end{equation}
where
\begin{eqnarray}
\nonumber B[\Psi_0,\Psi_1,\kappa,\sigma]&:=&-3(\kappa\kappa'-\sigma\sigma')\Psi_0-10\Psi^2_1\\
\nonumber & & +(\text{\dh}'\Psi_1+2\tau'\Psi_1+4\Psi_1(\text{\dh}-\bar\tau+\tau'))\sigma \\
\nonumber & & -(\text{\th}'\Psi_1+2\rho'\Psi_1+4\Psi_1(\text{\th}'-\bar\rho'+\rho'))\kappa\\
 & & +3\sigma(\text{\th}-3\rho)\Psi_2-3\kappa(\text{\dh}-3\tau)\Psi_2.
\end{eqnarray}
Linearizing this expression (in the sense described at the beginning of this subsection)
around a type D background spacetime, and using the Bianchi identities (\ref{bianchiD}), we get
\begin{equation}
 \left.\frac{d}{d\e}\right|_{\e=0}B[\Psi_0,\Psi_1,\kappa,\sigma]=0
\end{equation}
(because all terms in $B[\Psi_0,\Psi_1,\kappa,\sigma]$ are at least order $\e^2$)
and therefore, recalling that $\Psi_0|_{\e=0}=0$,
\begin{equation}\label{dSE2}
 \left.\frac{d}{d\e}\right|_{\e=0}[\mathcal{S}_{G,2}\mathcal{E}_{G}(\psi_{ABCD})]
 =-\tfrac{1}{2}(\teuk_{4}-16\Psi_2+\tfrac{2}{3}\l)\dot\Psi_0,
\end{equation}
which is what we wanted to prove.\footnote{\label{fn}The linearization of the spinor expression 
$\mathcal{S}_{G,2}\mathcal{E}_{G}(\psi_{ABCD})$ in the LHS of (\ref{dSE2}) is an example of what we mean by 
`understood in a tensor sense', since one uses the equality
\begin{equation*}
 \Psi^{4/3}_2o^{EBCD}\c^{B'}_{E}[\Psi^{-4/3}_2\c^{A}_{B'}\psi_{ABCD}]
 =-\tfrac{1}{2}\Psi^{4/3}_2\stackrel[]{0}{W}{}^{\a\b\g\e}\c_{\e}[\Psi^{-4/3}_2\c^{\d}\wt{C}_{\a\b\g\d}]
\end{equation*}
}

For spin weight $s=-2$, as observed in \cite{Andersson1}, this case follows from the previous one by 
simply applying the prime operation and using the transformation law (\ref{boxt'}).

Consider now the spin weight $s=0$ case.
The proof of this case goes along similar lines as those of the previous one: we start by
\begin{eqnarray}
\nonumber \mathcal{S}_{G,0}\mathcal{E}_{G}(\psi_{ABCD})&=& \left[L^{EBCD}\c^{B'}_{E} + 4 L^{EBCD}A^{B'}_{E}\right]\c^{A}_{B'}\psi_{ABCD}\\
\nonumber &= & L^{EBCD}\c^{B'}_{E}\c^{A}_{B'}\psi_{ABCD}+4 L^{EBCD}A^{B'}_{E}\c^{A}_{B'}\psi_{ABCD},\\
\end{eqnarray}
and use Leibniz rule for the term with second derivatives of $\psi_{ABCD}$:
\begin{equation}\label{leibniz-sw0}
 L^{EBCD}\c^{B'}_{E}\c^{A}_{B'}\psi_{ABCD}=\c^{B'}_{E}(L^{EBCD}\c^{A}_{B'}\psi_{ABCD})-(\c^{B'}_{E}L^{EBCD})(\c^{A}_{B'}\psi_{ABCD}).
\end{equation}
Now, the second term in this equation is more easily calculated taking into account that in the end we want to 
linearize around a $\l$-vacuum solution, such that $(\c^{A}_{B'}\psi_{ABCD})|_{\e=0}=0$; then
\begin{multline}
 \left.\frac{d}{d\e}\right|_{\e=0}\left[-(\c^{B'}_{E}L^{EBCD})(\c^{A}_{B'}\psi_{ABCD})\right]\\
 =-(\c^{B'}_{E}L^{EBCD})|_{\e=0}\left.\frac{d}{d\e}\right|_{\e=0}\left[\c^{A}_{B'}\psi_{ABCD}\right]
\end{multline}
This implies that we can use identities from the unperturbed spacetime. From the definition of $L_{ABCD}$, eq. (\ref{L}), 
we see that it is propotional to the Weyl spinor of the type D background: 
$L_{ABCD}=\Psi^{-5/3}_2\mathring{\psi}_{ABCD}$, where $\mathring{\psi}_{ABCD}=(\psi_{ABCD})|_{\e=0}$. 
Using the background Bianchi identities, we then have
\begin{equation}
 (\c^{B'}_{E}L^{EBCD})|_{\e=0}=(5L^{EBCD}A^{B'}_{E})|_{\e=0},
\end{equation}
where we have used expression (\ref{AD}). 
Therefore,
\begin{equation}
 \left.\frac{d}{d\e}\right|_{\e=0}\left[-(\c^{B'}_{E}L^{EBCD})(\c^{A}_{B'}\psi_{ABCD})\right]
 =\left.\frac{d}{d\e}\right|_{\e=0}\left[-5L^{EBCD}A^{B'}_{E}\c^{A}_{B'}\psi_{ABCD}\right],
\end{equation}
and then
\begin{multline}\label{dSE}
\left.\frac{d}{d\e}\right|_{\e=0} [\mathcal{S}_{G,0}\mathcal{E}_{G}(\psi_{ABCD})]\\
=\left.\frac{d}{d\e}\right|_{\e=0}\left[\c^{B'}_{E}(L^{EBCD}\c^{A}_{B'}\psi_{ABCD})-L^{EBCD}A^{B'}_{E}\c^{A}_{B'}\psi_{ABCD}\right].
\end{multline}
The term inside the bracket in (\ref{dSE}) can be calculated without linearizing, following similar manipulations 
as in previous cases and using the explicit expressions (\ref{A}) and (\ref{L}). The result is
\begin{eqnarray*}
& & \c^{B'}_{E}(L^{EBCD}\c^{A}_{B'}\psi_{ABCD})-L^{EBCD}A^{B'}_{E}\c^{A}_{B'}\psi_{ABCD}\\
 &=& -3(\text{\th}-\bar\rho)[\Psi^{-2/3}_2o^{B}\iota^{C}\iota^{D}\bar{\iota}^{B'}\c^{A}_{B'}\psi_{ABCD}]\\
 & & -3(\text{\dh}'-\bar\tau)[\Psi^{-2/3}_2o^{B}o^{C}\iota^{D}\bar{\iota}^{B'}\c^{A}_{B'}\psi_{ABCD}]\\
 & & +3(\text{\dh}-\bar\tau')[\Psi^{-2/3}_2o^{B}\iota^{C}\iota^{D}\bar{o}^{B'}\c^{A}_{B'}\psi_{ABCD}]\\
 & & +3(\text{\th}'-\bar\rho')[\Psi^{-2/3}_2o^{B}o^{C}\iota^{D}\bar{o}^{B'}\c^{A}_{B'}\psi_{ABCD}].
\end{eqnarray*}

Now we just have to use (\ref{bianchieq}) for the corresponding components of $\c^{A}_{B'}\psi_{ABCD}$ 
(note that we need the second, third, sixth and seventh equations in (\ref{bianchieq})), and the fact that 
\begin{equation}
 \Psi^{-2/3}_2(\text{\th}'-3\rho')\Psi_2=3(\text{\th}'-\rho')\Psi^{1/3}_2
\end{equation}
and similarly for the other derivatives. This gives
\begin{eqnarray*}
 & & \c^{B'}_{E}(L^{EBCD}\c^{A}_{B'}\psi_{ABCD})-L^{EBCD}A^{B'}_{E}\c^{A}_{B'}\psi_{ABCD}\\
 &=& -3(\text{\th}-\bar\rho)\left[3(\text{\th}'-\rho')\Psi^{1/3}_2
 +\Psi^{-2/3}_2[-(\text{\dh}-2\tau)\Psi_3+2\kappa'\Psi_1-\sigma\Psi_4]\right]\\
 & & -3(\text{\dh}'-\bar\tau)\left[-3(\text{\dh}-\tau)\Psi^{1/3}_2
 +\Psi^{-2/3}_2[(\text{\th}'-2\rho')\Psi_1-2\sigma\Psi_3+\kappa'\Psi_0]\right]\\
 & & +3(\text{\dh}-\bar\tau')\left[3(\text{\dh}'-\tau')\Psi^{1/3}_2
 +\Psi^{-2/3}_2[-(\text{\th}-2\rho)\Psi_3+2\sigma'\Psi_1-\kappa\Psi_4]\right]\\
 & & +3(\text{\th}'-\bar\rho')\left[-3(\text{\th}-\rho)\Psi^{1/3}_2
 +\Psi^{-2/3}_2[(\text{\dh}'-2\tau')\Psi_1-2\kappa\Psi_3+\sigma'\Psi_0]\right] 
\end{eqnarray*}
Note that the sum of the second and fourth lines is just the primed version of the sum of the first and third ones, 
and then we only calculate the latter:
\begin{eqnarray}
\nonumber && -3(\text{\th}-\bar\rho)\left[3(\text{\th}'-\rho')\Psi^{1/3}_2
  +\Psi^{-2/3}_2[-(\text{\dh}-2\tau)\Psi_3+2\kappa'\Psi_1-\sigma\Psi_4]\right]\\
\nonumber && +3(\text{\dh}-\bar\tau')\left[3(\text{\dh}'-\tau')\Psi^{1/3}_2
  +\Psi^{-2/3}_2[-(\text{\th}-2\rho)\Psi_3+2\sigma'\Psi_1-\kappa\Psi_4]\right]\\
\nonumber &=& -9\left[(\text{\th}-\bar\rho)(\text{\th}'-\rho')-(\text{\dh}-\bar\tau')(\text{\dh}'-\tau')\right]\Psi^{1/3}_2\\
\nonumber & & +3(\text{\th}-\bar\rho)[\Psi^{-2/3}_2(\text{\dh}-2\tau)\Psi_3]-3(\text{\dh}-\bar\tau')
   [\Psi^{-2/3}_2(\text{\th}-2\rho)\Psi_3]\\
\nonumber & & -3(\text{\th}-\bar\rho)[\Psi^{-2/3}_2(2\kappa'\Psi_1-\sigma\Psi_4)]+3(\text{\dh}-\bar\tau')
   [\Psi^{-2/3}_2(2\sigma'\Psi_1-\kappa\Psi_4)].\\
\label{1+3}
\end{eqnarray}
Recalling the explicit expression (\ref{boxp2}) of the weighted wave operator $\teuk_{p}$ (and taking into account 
that $\Psi_2$ is type $\{0,0\}$), we see that the term with $\Psi^{1/3}_2$ is just
\begin{multline}
 \left[(\text{\th}-\bar\rho)(\text{\th}'-\rho')-(\text{\dh}-\bar\tau')(\text{\dh}'-\tau')\right]\Psi^{1/3}_2\\
 =\tfrac{1}{2}(\Box+2\Psi_2 +\tfrac{R}{6})\Psi^{1/3}_2+(\kappa\kappa'-\sigma\sigma')\Psi^{1/3}_2
\end{multline}
The second term in the bottom line of the last equation will vanish when we linearize around a type D spacetime 
(because of (\ref{ghpD})), and so will
the terms with $\Psi_3$ and with $\Psi_1,\Psi_4$ in (\ref{1+3}); for the term with $\Psi_3$ we need first reorder as
\begin{eqnarray*}
 & & (\text{\th}-\bar\rho)[\Psi^{-2/3}_2(\text{\dh}-2\tau)\Psi_3]-(\text{\dh}-\bar\tau')[\Psi^{-2/3}_2(\text{\th}-2\rho)\Psi_3]\\
 &=&\tfrac{2}{3}\Psi^{-5/3}_2(\text{\th}\Psi_2)(\text{\dh}-2\tau)\Psi_3-\Psi^{-2/3}_2(\text{\th}-\bar\rho)(\text{\dh}-2\tau)\Psi_3\\
 & &-\tfrac{2}{3}\Psi^{-5/3}_2(\text{\dh}\Psi_2)(\text{\th}-2\rho)\Psi_3+\Psi^{-2/3}_2(\text{\th}-\bar\tau')(\text{\th}-2\rho)\Psi_3,
\end{eqnarray*}
and recall the background Bianchi identities (\ref{bianchiD}) and the commutation relation (\ref{comm4}).
Linearizing and taking into account the vanishing of the terms just mentioned, we finally get
\begin{eqnarray*}
 & & \left.\frac{d}{d\e}\right|_{\e=0}[\mathcal{S}_{G,0}\mathcal{E}_{G}(\psi_{ABCD})]\\
 &=&\left.\frac{d}{d\e}\right|_{\e=0}\left[\c^{B'}_{E}(L^{EBCD}\c^{A}_{B'}\psi_{ABCD})-L^{EBCD}A^{B'}_{E}\c^{A}_{B'}\psi_{ABCD}\right]\\
 &=&-9\left.\frac{d}{d\e}\right|_{\e=0}\left[(\Box+2\Psi_2 +\tfrac{R}{6})\Psi^{1/3}_2\right],
\end{eqnarray*}
which gives the desired result.

\end{proof}

\subsection{Tensor expressions and Einstein equations}

Using the idea we described in section \ref{linbianchi},
we now give the identities that combine the previous decoupled equations with the linearized Ricci tensor.
First, we recall the definition of the anti-self-dual 2-forms given in the introduction
\begin{equation}
 \stackbin[]{0}{M}_{\a\b}:=2l_{[\a}m_{\b]}, \;\;\;
 \stackbin[]{1}{M}_{\a\b}:=2l_{[\a}n_{\b]}+2\bar{m}_{[\a}m_{\b]}, \;\;\; \stackbin[]{2}{M}_{\a\b}:=2\bar{m}_{[\a}n_{\b]},
\end{equation}
and the anti-self-dual tensors with the symmetries of the Weyl tensor:
\begin{eqnarray}
 \stackbin[]{0}{W}_{\a\b\g\d}&:=&\stackbin[]{0}{M}_{\a\b}\stackbin[]{0}{M}_{\g\d}, \\
 \stackbin[]{2}{W}_{\a\b\g\d}&:=&\stackbin[]{0}{M}_{\a\b}\stackbin[]{2}{M}_{\g\d}
 +\stackbin[]{2}{M}_{\a\b}\stackbin[]{0}{M}_{\g\d}+\stackbin[]{1}{M}_{\a\b}\stackbin[]{1}{M}_{\g\d}, \\
 \stackbin[]{4}{W}_{\a\b\g\d}&:=&\stackbin[]{2}{M}_{\a\b}\stackbin[]{2}{M}_{\g\d}.
\end{eqnarray}
We also recall the expression of the linearized Einstein tensor in terms of the metric perturbation:
\begin{equation}
 \dot{G}_{\a\b}[h]=-\tfrac{1}{2}\Box h_{\a\b}-\tfrac{1}{2}\c_{\a}\c_{\b}h+\c^{\g}\c_{(\a}h_{\b)\g}
 +\tfrac{1}{2}g_{\a\b}(\Box h -\c^{\g}\c^{\d}h_{\g\d}),
\end{equation}
where $h=g^{\g\d}h_{\g\d}$.
The combination of theorem \ref{thm-spin2} with the analysis of section \ref{linbianchi} then leads to the following 
corollary (which is just theorem \ref{spin2}):
\begin{cor}
Let $(\mathcal{M}_{\e},g_{\a\b}(\e))$ be a monoparametric family of pseudo-Riemannian manifolds, analytic 
around $\e=0$, such that $g_{\a\b}(0)$ is of Petrov type D and satisfies the vacuum Einstein equations 
with cosmological constant $\lambda$. 
Denoting the linearization of a quantity $T$ by $\dot{T}:=\frac{d}{d\e}|_{\e=0}T(\e)$, we have the following equalities:
\begin{equation}
\Psi_2^{4/3} \stackbin[]{0}{W}{}^{\a\g\b\d}\c_{\d}\left[\Psi_2^{-4/3}\c_{\g}(\dot{G}_{\a\b}[h]+\l h_{\a\b})\right]
 =(\teuk_{+4}-16\Psi_2+\tfrac{2}{3}\l)\dot\Psi_0[h], \label{sw-2G}
\end{equation}
\begin{multline}
\Psi^{2/3}_{2}\stackbin[]{2}{W}{}^{\a\g\b\d}\c_{\d}\left[\Psi^{-4/3}_2\c_{\g}(\dot{G}_{\a\b}[h]+\l h_{\a\b})\right]\\
 =6\left[(\Box+8\Psi_2+\tfrac{2}{3}\l)[\Psi^{-2/3}_2\dot{\Psi}_2[h]]+3(\dot{\Box}_h+\tfrac{\dot{R}_h}{6})\Psi^{1/3}_2\right], \label{sw0G}
\end{multline}
\begin{equation}
\stackbin[]{4}{W}{}^{\a\g\b\d}\c_{\d}\left[\Psi_2^{-4/3}\c_{\g}(\dot{G}_{\a\b}[h]+\l h_{\a\b})\right]
 =(\teuk_{-4}-16\Psi_2+\tfrac{2}{3}\l)[\Psi^{-4/3}_2\dot\Psi_4[h]]. \label{sw+2G}
\end{equation}
\end{cor}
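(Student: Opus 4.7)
The plan is to assemble the corollary from two pieces already at hand: Theorem~\ref{thm-spin2}, which provides the spinor identity $\tfrac{d}{d\e}|_{\e=0}[\mathcal{S}_{G,s}\mathcal{E}_{G}(\psi_{ABCD})]=\tfrac{d}{d\e}|_{\e=0}[\mathcal{O}_{G,s}\mathcal{T}_{G,s}(\psi_{ABCD})]$ for $s=0,\pm2$, and the Bianchi-based relation~\eqref{keyeq1} of Section~\ref{linbianchi}, which, for any linear differential operator $O^{\a\b\g}$ antisymmetric in $\a\b$ and satisfying $g_{\a\g}O^{\a\b\g}=0$, trades $O^{\a\b\g}\c^{\d}C_{\a\b\g\d}$ for $-O^{\a\b\g}\c_{\a}(G_{\b\g}+\l g_{\b\g})$.

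First I would translate the spinor operators $\mathcal{S}_{G,s}\mathcal{E}_{G}(\psi_{ABCD})$ into tensor form, exactly as indicated for $s=+2$ in footnote~\ref{fn}. The dictionary $\c^{A}_{B'}\psi_{ABCD}\leftrightarrow\c^{\d}\tilde{C}_{\a\b\g\d}$, together with the identification of $P^{ABCD}_{(2,s)}$ (after contraction with its primed-index partners) with $\stackbin[]{0}{W}$, $\stackbin[]{2}{W}$ and $\stackbin[]{4}{W}$ for $s=+2,0,-2$ respectively, yields, up to explicit numerical constants, tensorial expressions of the shape $\Psi^{4/3}_{2}\stackbin[]{k}{W}{}^{\a\b\g\e}\c_{\e}[\Psi^{-4/3}_{2}\c^{\d}\tilde{C}_{\a\b\g\d}]$. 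Since each $\stackbin[]{k}{W}$ is anti-self-dual in each pair of indices, its contraction with $\c^{\d}\tilde{C}$ coincides with its contraction with $\c^{\d}C$; and since each $\stackbin[]{k}{W}$ has the Weyl symmetries, the full differential operator in front of $\c^{\d}C_{\a\b\g\d}$ meets the hypotheses of~\eqref{keyeq1}, which therefore rewrites it as the same operator acting on $-\c_{\a}(G_{\b\g}+\l g_{\b\g})$. Linearizing around the $\l$-vacuum type~D background, where both $\c^{\d}C_{\a\b\g\d}$ and $G_{\a\b}+\l g_{\a\b}$ vanish, the outer operator survives with its background coefficients and one recovers the left-hand sides of \eqref{grav1}--\eqref{grav3} acting on $\dot{G}_{\a\b}[h]+\l h_{\a\b}$.

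For $s=\pm2$ the right-hand sides then follow immediately from Theorem~\ref{thm-spin2}: $\mathcal{T}_{G,\pm 2}(\psi_{ABCD})$ is proportional to $\Psi_{0}$ or $\Psi_{4}$, both vanishing on the background, so its linearization reduces to the Teukolsky operator $\teuk_{\pm 4}-16\Psi_{2}+\tfrac{2}{3}\l$ applied to $\dot{\Psi}_{0}$ or to $\Psi^{-4/3}_{2}\dot{\Psi}_{4}$. The delicate case is $s=0$, where $\mathcal{T}_{G,0}(\psi_{ABCD})=-9\Psi^{1/3}_{2}$ does \emph{not} vanish on the background: therefore $\tfrac{d}{d\e}|_{\e=0}[\mathcal{O}_{G,0}\mathcal{T}_{G,0}(\psi_{ABCD})]$ splits into a piece in which the background Fackerell-Ipser operator $(\Box+2\Psi_{2}+R/6)$ hits $\dot{\Psi}^{1/3}_{2}=\tfrac{1}{3}\Psi^{-2/3}_{2}\dot{\Psi}_{2}[h]$; a piece in which the $\dot{\Psi}_{2}$-variation of the potential couples to $\Psi^{1/3}_{2}$ and combines algebraically with the previous one to shift the coefficient $+2\Psi_{2}\mapsto +8\Psi_{2}$; and, finally, the variation $(\dot{\Box}_{h}+\dot{R}_{h}/6)$ of the operator itself acting on the background $\Psi^{1/3}_{2}$. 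These reproduce exactly the two summands in the bracket on the right of \eqref{grav2}. The main obstacle is the bookkeeping of the numerical constants introduced by the spinor-to-tensor conversion, the anti-self-dual projection, and the recombination just described; no ingredient beyond Theorem~\ref{thm-spin2} and~\eqref{keyeq1} is required.
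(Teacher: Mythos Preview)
Your proposal is correct and follows exactly the approach the paper itself indicates: the corollary is simply the combination of Theorem~\ref{thm-spin2} with the Bianchi relation~\eqref{keyeq1} (or its spinor analogue~\eqref{keyeq2}), and the paper offers no further detail beyond that sentence. Your explicit unpacking of the $s=0$ linearization---splitting $\tfrac{d}{d\e}|_{\e=0}[(\Box+2\Psi_2+R/6)\Psi_2^{1/3}]$ into the variation of the operator acting on the background $\Psi_2^{1/3}$ and the background operator acting on $\tfrac{1}{3}\Psi_2^{-2/3}\dot\Psi_2$, with the $2\dot\Psi_2$ piece from the potential shifting $+2\Psi_2\mapsto+8\Psi_2$---is exactly the computation that turns the right-hand side of Theorem~\ref{thm-spin2} into the bracket of~\eqref{sw0G}, and is more detailed than anything the paper spells out.
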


The previous equations show that if the linearized Einstein equations $\dot{G}_{\a\b}[h]+\l h_{\a\b}=0$ are satisfied, 
then we have decoupled equations for the perturbed Weyl scalars. On the other hand,
in order to see whether we can construct solutions of the linearized Einstein equations from solutions of the decoupled equations, 
we can put these identities in an operator equality form such as (\ref{SEOTspin2}).
We separate cases according to extreme and zero spin weight, since there are important differences between them.

\subsubsection{Extreme spin weight}

For $s=\pm2$ we define new operators $\mathcal{S}_s$ and $\mathcal{E}$ such that
equations (\ref{sw-2G}) and (\ref{sw+2G}) adopt the form
\begin{equation}\label{SEhOTh}
 \mathcal{S}_s\mathcal{E}(h_{\a\b})=\mathcal{O}_s\mathcal{T}_s(h_{\a\b})
\end{equation}
for all symmetric tensor field $h_{\a\b}=h_{(\a\b)}$, where
\begin{eqnarray}
 \mathcal{S}_s(H_{\a\b})&:=&\Psi_2^{(s+2)/3} \stackbin[]{2-s}{W}{}^{\a\g\b\d}\c_{\d}[\Psi_2^{-4/3}\c_{\g}H_{\a\b}], \\
 \mathcal{E}(h_{\a\b})&:=&\dot{G}_{\a\b}[h]+\l h_{\a\b}, \\
 \mathcal{O}_s(\Phi)&:=&(\teuk_{2s}-16\Psi_2+\tfrac{2}{3}\l)\Phi, \\
 \mathcal{T}_s(h_{\a\b})&:=& \Psi^{(s-2)/3}_2 \dot{\Psi}_{2-s}[h].
\end{eqnarray}
Since $\mathcal{E}$ is self-adjoint, and $\mathcal{O}^{\dag}_{s}=\bar{\mathcal{O}}_{-s}$ (eq. (\ref{Oadj})), the adjoint equation 
$\mathcal{E}\mathcal{S}^{\dag}_s(\Phi)=\mathcal{T}^{\dag}_s\mathcal{O}^{\dag}_s(\Phi)$ leads immediately to the following 
corollary:
\begin{cor}
Consider a vacuum type D spacetime with cosmological constant, 
and let $\Phi_s$ be a solution of the spin-weight $s=\pm2$ Teukolsky equation. Then
\begin{equation}\label{m-rec-esw}
 \stackbin[]{s}{h}_{\a\b}(\Phi)=\c_{\g}[\Psi_2^{-4/3}\c_{\d}(\stackbin[]{2+s}{W}_{(\a}{}^{\g\d}{}_{\b)}\Psi_2^{(2-s)/3}\Phi_s)]
\end{equation}
is a complex solution of the linearized Einstein equations.
\end{cor}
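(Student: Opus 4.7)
The plan is to apply Wald's adjoint-operator method from subsection \ref{adjop-sec} to the off-shell operator identity (\ref{SEhOTh}) that was just established for extreme spin weight. First, take the formal adjoint of both sides of $\mathcal{S}_s\mathcal{E}=\mathcal{O}_s\mathcal{T}_s$ with respect to the hermitian product (\ref{innerp}), dropping divergence contributions under the decay assumption. This produces the dual operator identity $\mathcal{E}^{\dag}\mathcal{S}_s^{\dag}=\mathcal{T}_s^{\dag}\mathcal{O}_s^{\dag}$ between differential operators that map complex scalars to symmetric $(0,2)$-tensors.

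Next, invoke two structural facts: the linearized Einstein operator $\mathcal{E}(h)=\dot{G}_{\a\b}[h]+\l h_{\a\b}$ is formally self-adjoint, which one verifies directly by integrating by parts in its explicit expression and checking that each term returns with the same coefficient; and by (\ref{Oadj}) one has $\mathcal{O}_s^{\dag}=\bar{\mathcal{O}}_{-s}$. Substituting these into the dual identity gives $\mathcal{E}[\mathcal{S}_s^{\dag}(\Phi)]=\mathcal{T}_s^{\dag}[\bar{\mathcal{O}}_{-s}(\Phi)]$ for every complex scalar $\Phi$ of the appropriate GHP type. Consequently, whenever $\Phi$ is annihilated by $\bar{\mathcal{O}}_{-s}$---equivalently, whenever $\bar\Phi$ solves the corresponding spin-weight $-s$ Teukolsky equation---the symmetric tensor $h_{\a\b}=\mathcal{S}_s^{\dag}(\Phi)$ is a complex solution of the linearized $\l$-vacuum Einstein equations, exactly as required.

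The remaining task is to compute $\mathcal{S}_s^{\dag}$ explicitly. This is a direct double integration by parts inside the definition $\mathcal{S}_s(H_{\a\b})=\Psi_2^{(s+2)/3}\stackbin[]{2-s}{W}{}^{\a\g\b\d}\c_{\d}[\Psi_2^{-4/3}\c_{\g}H_{\a\b}]$, transferring both covariant derivatives from $H_{\a\b}$ onto the coefficient $\Psi_2^{(s+2)/3}\stackbin[]{2-s}{W}{}^{\a\g\b\d}\Psi_2^{-4/3}\Phi$; only the part symmetric in $(\a,\b)$ survives the pairing against the symmetric $H_{\a\b}$, which produces the $(\a\cdots\b)$ symmetrization visible in (\ref{m-rec-esw}). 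The main (and only) obstacle I anticipate is bookkeeping: tracking the complex conjugations introduced by the hermitian product and using the prime interchange $\stackbin[]{0}{W}\leftrightarrow\stackbin[]{4}{W}$ to convert $\stackbin[]{2-s}{W}$ into the $\stackbin[]{2+s}{W}$ appearing in (\ref{m-rec-esw}), while the fractional powers of $\Psi_2$ must rearrange to give exactly the exponents $-4/3$ and $(2-s)/3$ shown. No new analytic ingredient is required beyond the off-shell identity (\ref{SEhOTh}) and the adjointness machinery of subsection \ref{adjop-sec}.
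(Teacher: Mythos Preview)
Your proposal is correct and follows essentially the same route as the paper: take the adjoint of the off-shell identity (\ref{SEhOTh}), use the self-adjointness of $\mathcal{E}$ together with $\mathcal{O}_s^{\dag}=\bar{\mathcal{O}}_{-s}$ from (\ref{Oadj}), and read off $\mathcal{S}_s^{\dag}$ by two integrations by parts. The bookkeeping you flag (complex conjugation and the relabeling that turns $\stackbin[]{2-s}{W}$ into $\stackbin[]{2+s}{W}$) is exactly the residual work the paper also leaves implicit when it says the adjoint equation ``leads immediately'' to the corollary.
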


It can be shown that (\ref{m-rec-esw}) for $s=-2$ coincides with the Kegeles $\&$ Cohen ansatz \cite[Eq.(5.4)]{Kegeles} 
(in that work $h_{\a\b}$ is given in spinor form and in terms of a Hertz spinor and a gauge spinor). 
We also note that the difference between the metric perturbations constructed in the form (\ref{m-rec-esw}) for 
$s=+2$ and $s=-2$ is described in the recent work \cite{Aksteiner2}, and that further symmetry operators for extreme spin 
weight are constructed in \cite{Aksteiner}.

\subsubsection{Spin weight zero, real $\Psi_2$ case}\label{sw0G-sec}

The `inhomogeneous' term in the right hand side of (\ref{sw0G}), namely $(\dot{\Box}_h+\tfrac{\dot{R}_h}{6})\Psi^{1/3}_2$, 
makes it more difficult to formulate an operator equality like (\ref{SEhOTh}) for the spin weight zero case.
The simplest possibility is in the case in which $\Psi_2$ is a real field, since then 
we can take the imaginary part in (\ref{sw0G}) and get
$\mathcal{S}_0\mathcal{E}(h_{\a\b})=\mathcal{O}_0\mathcal{T}_0(h_{\a\b})$ for all $h_{\a\b}=h_{(\a\b)}$, where 
\begin{eqnarray}
 \mathcal{S}_0(H_{\a\b})&=&\tfrac{1}{2}\Psi^{2/3}_{2}\stackbin[]{2}{{}^{*}W}{}^{\a\g\b\d}\c_{\d}[\Psi^{-4/3}_2\c_{\g}H_{\a\b}],\\
 \mathcal{E}(h_{\a\b})&=&\dot{G}_{\a\b}[h]+\l h_{\a\b},\\
 \mathcal{O}_0(\Phi)&=&6(\Box+8\Psi_2+\tfrac{2}{3}\lambda)\Phi,\\
 \mathcal{T}_0(h_{\a\b})&=&\Psi^{-2/3}_{2}\im\{\dot{\Psi}_2[h]\},
\end{eqnarray}
with $\stackbin[]{2}{{}^{*}W}_{\a\b\g\d}=-2\im\{\stackbin[]{2}{W}_{\a\b\g\d}\}$. 
Taking the adjoint equation, and using the fact that $\mathcal{E}$ and $\mathcal{O}_0$ are both self-adjoint, we
obtain that if $\Phi$ is a solution to $(\Box+8\Psi_2+\tfrac{2}{3}\lambda)\Phi=0$, then the tensor field
\begin{equation}\label{m-rec-sw0}
 h_{\a\b}(\Phi)=\tfrac{1}{2}\c_{\g}[\Psi^{-4/3}_2\c_{\d} (\stackbin[]{2}{{}^{*}W}_{(\a}{}^{\g\d}{}_{\b)}\Psi^{2/3}_2\Phi)]
\end{equation}
is a solution to the linearized Einstein equations, $\dot{G}_{\a\b}[h]+\l h_{\a\b}=0$.
As we show in section \ref{gravschw} below, this applies to the odd sector of gravitational perturbations of the 
Schwarzschild-(A)dS solution. 

Note that, since we are working on a background type D spacetime, we have
\begin{equation}
 \stackbin[]{2}{W}_{\a\b\g\d}\equiv\Psi^{-1}_2\wt{C}_{\a\b\g\d},
\end{equation}
hence we can replace the corresponding expressions with the background Weyl tensor.

On the other hand, we also note that the field $\Psi^{-2/3}_{2}\dot{\Psi}_2$ can be put in terms of the Killing-Yano 
tensors in the following way. In a generic spacetime, we have the identity \cite[Eqs. (8.3.8) and (8.3.10)]{Penrose1}
\begin{equation}
 \psi_{AB}{}^{CD}\psi_{CD}{}^{AB}=6\Psi^{2}_2+2\Psi_0\Psi_4-8\Psi_1\Psi_3,
\end{equation}
or equivalently
\begin{equation}
 6\Psi^{2}_2=\tfrac{1}{4}\wt{C}_{\a\b}{}^{\g\d}\wt{C}_{\g\d}{}^{\a\b}-2\Psi_0\Psi_4+8\Psi_1\Psi_3.
\end{equation}
Linearizing this equation around a type D background, we get
\begin{equation}
 \dot{\Psi}_2=\tfrac{1}{24}\Psi^{-1}_2\wt{C}_{\g\d}{}^{\a\b}\dot{\wt{C}}_{\a\b}{}^{\g\d}.
\end{equation}
Now, using the expression (\ref{sdweylD}) for $\wt{C}_{\a\b}{}^{\g\d}$ and the symmetries of the Weyl tensor, we obtain
\begin{equation}\label{KYCgeneric}
 \Psi^{-2/3}_2\dot{\Psi}_2=-\tfrac{1}{4k^2}\wt{Y}_{\g\d}\wt{Y}^{\a\b}\dot{\wt{C}}_{\a\b}{}^{\g\d}.
\end{equation}
We will work further this expression below, when we apply our results to the Schwarzschild-(A)dS solution.

\section{Spherically symmetric spacetimes}\label{sphsym-sec}

In this section we show the relation of our results, particularized to the Schwarzschild-(A)dS case, with the so-called 
$2+2$ decomposition valid in spherically symmetric spacetimes.
For the latter formalism, we follow closely \cite{Sarbach} (note however that we take the metric to have signature $(+---)$). 
This approach takes advantage of the warped product structure of the 
background manifold $\mathcal{M}=\wt{\mathcal{M}}\times_{r^2}S^2$, with coordinates $z^{\a}=(x^a,y^i)$ and metric
\begin{equation}\label{wp}
 g_{\a\b}(z)dz^{\a}dz^{\b}=\wt{g}_{ab}(x)dx^a dx^b+r^2\wh{g}_{ij}(y)dy^i dy^j.
\end{equation}
Lowercase latin indices $a,b,c,...$ denote quantities in the orbit space $\wt{\mathcal{M}}=\mathcal{M}/SO(3)$, while indices
$i,j,k,...$ refer to quantities on the sphere $S^2$. The metric, covariant derivative and volume form of $\wt{\mathcal{M}}$ are 
respectively $\wt{g}_{ab}$, $\wt{D}_a$ and $\wt{\e}_{ab}$; whereas those of $S^2$ are $\wh{g}_{ij}$, 
$\wh{D}_i$ and $\wh{\e}_{ij}$.
The wave operators in $\wt{\mathcal{M}}$ and $S^2$ are then $\wt{\Delta}:=\wt{g}^{ab}\wt{D}_a\wt{D}_b$ 
and $\wh{\Delta}:=\wh{g}^{ij}\wh{D}_i\wh{D}_j$, respectively.
The relation between the Christoffel symbols of $g_{\a\b}$ and those of $\wt{g}_{ab}$ and $\wh{g}_{ij}$ is
\begin{eqnarray}
 & & \G^{d}{}_{ab}=\wt{\G}^{d}{}_{ab}, \;\;\;\; \G^{d}{}_{ai}=0, \;\;\;\; \G^{d}{}_{ij}=-rr^{d}\wh{g}_{ij}, \\
 & & \G^{i}{}_{ab}=0, \;\;\;\; \G^{i}{}_{aj}=\tfrac{r_a}{r}\d^{i}_{j}, \;\;\;\; \G^{k}{}_{ij}=\wh{\G}^{k}{}_{ij}. 
\end{eqnarray}
For further relations we refer the reader to \cite{Sarbach}.

In the Schwarzschild-(A)dS spacetime, we have $\Psi_2=-M/r^3$.
We take the constant $k$ in the definition (\ref{KSD}) of the Killing spinor to be real, and for 
convenience we define $b:=-kM^{-1/3}$. 
The Killing-Yano tensor (\ref{KY}) and its dual (\ref{DKY}) are then
\begin{eqnarray}
  Y_{\a\b}dz^{\a}\wedge dz^{\b} &\equiv& br^3\wh{\e}_{ij}dy^{i}\wedge dy^{j}, \label{kyschw}\\
  {}^{*}Y_{\a\b}dz^{\a}\wedge dz^{\b} &\equiv& br\wt{\e}_{ab}dx^{a}\wedge dx^{b} \label{kydschw},
\end{eqnarray}
where in Schwarzschild coordinates $\{ t,r,\theta,\varphi \}$, $\wh\e=\sin\theta d\theta\wedge d\varphi$ and $\wt{\e}=dt\wedge dr$.
The Weyl tensor and its dual can be deduced from (\ref{sdweylD}), (\ref{weyl}):
\begin{eqnarray}
 C_{\a\b\g\d}&=&\tfrac{3M}{b^2r^5}(Y_{\a\b}Y_{\g\d}-{}^{*}Y_{\a\b} {}^{*}Y_{\g\d})-\tfrac{2M}{r^3}g_{\a[\g}g_{\d]\b}, \label{Cschw}\\
 {}^{*}C_{\a\b\g\d}&=&\tfrac{3M}{b^2r^5}({}^{*}Y_{\a\b}Y_{\g\d}+Y_{\a\b} {}^{*}Y_{\g\d})+\tfrac{M}{r^3}\e_{\a\b\g\d} \label{C*schw}.
\end{eqnarray}
With our signature conventions, the Schwarzschild-(A)dS metric (\ref{wp}) in coordinates $\{ t,r,\theta,\varphi \}$ has the form
\begin{equation}\label{Sads}
 ds^2=f(r)dt^2-\frac{dr^2}{f(r)}-r^2(d\theta^2+\sin^2\theta d\varphi^2),
\end{equation}
where
\begin{equation}\label{f}
 f(r)=1-\tfrac{2M}{r}-\tfrac{\l}{3}r^2.
\end{equation}
The Ricci tensor of (\ref{Sads}) solves\footnote{the Ricci tensor we use has the opposite sign to the conventional one \cite{Wald2},
see appendix \ref{appA}.}
\begin{equation}
 R_{\a\b}-\l g_{\a\b}=0.
\end{equation}

\subsection{Maxwell fields}\label{maxschw}

According to the $2+2$ decomposition of the Maxwell field performed in \cite{Sarbach}, the information of the field is contained in 
two master scalar variables, $\phi^{-}$ and $\phi^{+}$, codifying respectively the odd and even parity sectors of the electromagnetic 
perturbation. 
It can be shown that spherically symmetric Maxwell fields (i.e., with $\ell=0$ in a decomposition into spherical harmonics) are
static (see e.g. \cite{Price}, also \cite[Appendix A]{Blue}), therefore they are not interesting for the stability problem and we 
can then take $\ell\geq1$, which implies that the laplacian $\wh\Delta$ is invertible.
Assuming vacuum Maxwell equations hold, the reconstruction of the field from the variables $\phi^{-}$ and $\phi^{+}$ is:
\begin{equation}\label{Fsarbach}
 F_{ab}=-\tfrac{1}{r^2}\wt{\e}_{ab}\phi^{+}, \;\;\;\;
 F_{ai}=\wh\e_{i}{}^{j}\wh{D}_j\wt{D}_a\phi^{-}+\wh{D}_i\wt{\e}_{a}{}^{b}\wt{D}_b\wh\Delta^{-1}\phi^{+}, \;\;\;\;
 F_{ij}=-\wh\e_{ij}\wh\Delta\phi^{-}.
\end{equation}
The wave equations satisfied by $\phi^{\pm}$ (see \cite{Sarbach} and the decomposition (\ref{waveop}) of the wave operator 
below) are equivalent to
\begin{equation}\label{FIS}
 (\Box+2\Psi_2+\tfrac{2}{3}\l)[\tfrac{\phi^{+}}{r}+i\tfrac{\phi^{-}}{r}]=0.
\end{equation}
The scalar field $\Phi:=\tfrac{\phi^{+}}{r}+i\tfrac{\phi^{-}}{r}$ satisfies then the Fackerell-Ipser equation, thus we can construct a
{\em new} electromagnetic field using corollary \ref{cor-sw0M}. 
In order to see the relation between this new field and the original one (\ref{Fsarbach}),
we need calculate the components of the tensors $E_{\a\b}(u)$ and ${}^{*}E_{\a\b}(v)$ of formula (\ref{bivector})
according to the $2+2$ decomposition. 
Using the explicit form of the Killing-Yano tensor, and the fact that $\im(\Psi_2)=0$, we get:
\begin{eqnarray}
\nonumber & & {}^{*}E_{ab}(v)=0, \;\;\;\; {}^{*}E_{ai}(v)=-2b\wh\e_{i}{}^{j}\wh{D}_{j}\wt{D}_{a}(rv), 
 \;\;\;\; {}^{*}E_{ij}(v)=2b\wh\e_{ij}\wh\Delta(rv), \\
 & & \label{E*} \\
\nonumber & & E_{ab}(u)=-\tfrac{2b}{r^2}\wt{\e}_{ab}\wh\Delta(ru), 
 \;\;\;\; E_{ai}(u)=2b\wh{D}_{i}\wt{\e}_{a}{}^{b}\wt{D}_{b}(ru), \;\;\;\; E_{ij}(u)=0. \\
& &  \label{E}
\end{eqnarray}
Furthermore, if (\ref{FIS}) holds, then using that $[\wh\Delta^{-1},\Box+2\Psi_2+2\l/3]=0$, it also holds
\begin{equation}
 (\Box+2\Psi_2+\tfrac{2}{3}\l)[\wh\Delta^{-1}(\tfrac{\phi^{+}}{2br})-i\tfrac{\phi^{-}}{2br}]=0,
\end{equation}
and therefore the electromagnetic field constructed from this solution in the form (\ref{E*})-(\ref{E}) 
(i.e. replacing $u\equiv\wh\Delta^{-1}(\tfrac{\phi^{+}}{2br})$ and $v\equiv-\tfrac{\phi^{-}}{2br}$ in those expressions)
coincides exactly with the original field.

Recalling that $\wh{\Delta}=\mathcal{A}_{M,0}$ (see equation (\ref{AM0real})), we summarize the previous results as follows:
\begin{thm}\label{est-max}
 The dynamics of the Maxwell field on the Schwarzschild-(A)dS spacetime is governed by solutions $\Phi=u+iv$ of 
 the Fackerell-Ipser equation, 
 \begin{equation}
  \left(\Box-\frac{2M}{r^3}+\frac{2}{3}\l\right)\Phi=0,
 \end{equation}
 where the real and imaginary parts of $\Phi$ codify respectively the information of the even and odd parity sectors,
 and the covariant four-dimensional reconstruction of the electromagnetic field is
 \begin{equation}
  F_{\a\b}=-\tfrac{2}{b}\c_{[\a}(Y_{\b]}{}^{\g}\c_{\g}v)+\tfrac{1}{b}\e_{\a\b}{}^{\g\d}\c_{\g}(Y_{\d}{}^{\e}\c_{\e}(\mathcal{A}^{-1}_{M,0}u)).
 \end{equation}
\end{thm}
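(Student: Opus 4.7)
The plan is to bridge the four-dimensional reconstruction from Corollary \ref{cor-sw0M} with the standard $2+2$ decomposition of Sarbach recalled in (\ref{Fsarbach})-(\ref{FIS}). First I would restrict attention to $\ell\geq 1$ modes: spherically symmetric ($\ell=0$) Maxwell fields on Schwarzschild-(A)dS are static, so they play no role in the dynamical problem, and this restriction makes $\wh\Delta$ invertible on the relevant angular subspace. On this subspace the operator $\mathcal{A}_{M,0}$ coincides with $\wh\Delta$ by (\ref{AM0real}) together with the explicit Killing--Yano forms (\ref{kyschw})-(\ref{kydschw}), so $\mathcal{A}_{M,0}^{-1}$ is meaningful.

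Next I would feed a solution $\Phi=u+iv$ of the Fackerell--Ipser equation into the reconstruction of Corollary \ref{cor-sw0M}, obtaining $F_{\alpha\beta}(\Phi)=E_{\alpha\beta}(u)-{}^{*}E_{\alpha\beta}(v)$. Using $\im(\Psi_2)=0$ (since $\Psi_2=-M/r^3$), the second term in (\ref{bivector}) drops and both pieces reduce to covariant derivatives of the Killing--Yano contraction. These have already been split into orbit and sphere components in (\ref{E*})-(\ref{E}), so I need only compare the resulting $(F_{ab},F_{ai},F_{ij})$ with Sarbach's expressions (\ref{Fsarbach}). A term-by-term match forces the identifications $v=-\phi^{-}/(2br)$ and $u=\wh\Delta^{-1}(\phi^{+}/(2br))$, where the inverse Laplacian is what turns the $\wt\epsilon_a{}^{b}\wt{D}_b\wh\Delta^{-1}\phi^{+}$ piece of $F_{ai}$ into the $E_{ai}(u)$ of (\ref{E}).

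Third, I would verify that this identification is consistent on the Fackerell--Ipser side. Since the wave equations (\ref{FIS}) satisfied by $\phi^{\pm}/r$ imply $(\Box+2\Psi_2+\tfrac{2}{3}\l)\Phi=0$, and since $\wh\Delta$ commutes with $\Box+2\Psi_2+\tfrac{2}{3}\l$ on the warped product (\ref{wp}) (the Schwarzschild wave operator decomposes as $\wt\Delta+r^{-2}\wh\Delta$ plus lower order terms built from $r$ only), $\wh\Delta^{-1}(\phi^{+}/(2br))$ is still a genuine solution, so $u$ and $v$ defined above do give $\Phi=u+iv$ solving Fackerell--Ipser. Finally, repackaging $F_{\alpha\beta}=E_{\alpha\beta}(u)-{}^{*}E_{\alpha\beta}(v)$ using (\ref{bivector}), rewriting the $E_{\alpha\beta}(u)$ term as a spacetime dual (since it involves $\wt\epsilon_{ab}$ and $\wh{D}_i\wt{D}_a$ and is the dual of a curl of $Y_\delta{}^\epsilon\nabla_\epsilon u$) and replacing $u\mapsto \mathcal{A}_{M,0}^{-1}u$ through the identification $u=\wh\Delta^{-1}(\phi^{+}/(2br))$, yields precisely the covariant formula stated.

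The main obstacle I expect is matching the tensorial form of $E_{\alpha\beta}(u)$ in (\ref{E}) with the dual expression $\epsilon_{\alpha\beta}{}^{\gamma\delta}\nabla_\gamma(Y_\delta{}^\epsilon\nabla_\epsilon\cdot)$ appearing in the theorem statement: this requires careful bookkeeping of the Hodge dual on anti-self-dual 2-forms and the fact that $Y$ and ${}^{*}Y$ exchange under Hodge duality. A secondary subtlety is justifying the commutation $[\wh\Delta^{-1},\Box+2\Psi_2+\tfrac{2}{3}\l]=0$ rigorously on the spherical harmonic decomposition, which comes down to the invertibility of $\wh\Delta$ on the orthogonal complement of the constants.
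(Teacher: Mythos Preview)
Your proposal is correct and follows essentially the same approach as the paper: restrict to $\ell\geq 1$, identify $\mathcal{A}_{M,0}$ with $\wh\Delta$ via (\ref{AM0real}), compute the $2+2$ components (\ref{E*})--(\ref{E}) of the reconstructed field from Corollary \ref{cor-sw0M}, match them with Sarbach's expressions (\ref{Fsarbach}) to find $u=\wh\Delta^{-1}(\phi^{+}/2br)$ and $v=-\phi^{-}/2br$, and use the commutator $[\wh\Delta^{-1},\Box+2\Psi_2+\tfrac{2}{3}\l]=0$ to verify consistency with (\ref{FIS}). The two potential obstacles you flag (the Hodge-dual bookkeeping relating $E_{\a\b}$ to the $\e_{\a\b}{}^{\g\d}\c_{\g}(Y_{\d}{}^{\e}\c_{\e}\cdot)$ form, and the commutation) are precisely the points the paper also relies on.
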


\noindent
Using this result, the linear stability of the Maxwell field on Schwarzschild-dS can be proved along similar lines
as those used in \cite{Dotti} for spin 2 (the problem for the Anti-de Sitter case is more delicate because of the boundary 
conditions \cite{AranedaDotti}).
This is the spin 1 analogue of the results of \cite{Dotti}.

\subsection{Gravitational perturbations}\label{gravschw}

In this subsection we apply the general results of section \ref{gravsec} to linearized gravity on Schwarzschild-(A)dS. 
We will only work on the odd sector of gravitational perturbations, where the perturbed metric is \cite{Sarbach}
\begin{equation}\label{odds}
 h^{-}_{ab}=0, \;\;\;\; h^{-}_{ai}=\wh{\e}_{i}{}^{j}\wh{D}_{j}h_{a}, \;\;\;\; h^{-}_{ij}=0.
\end{equation}
The corresponding linearized Ricci tensor is\footnote{recall that our Ricci tensor has the opposite sign to the one of \cite{Sarbach}} 
\begin{eqnarray}
 \dot{R}^{-}_{ab}&=&0, \\
 \dot{R}^{-}_{ai}&=&+\tfrac{1}{2}\wh{\e}_{i}{}^{j}\wh{D}_{j}\left[\tfrac{1}{r^2}\wt{\e}_{a}{}^{c}\wt{D}_{c}(r^2\mathcal{F})
 +\tfrac{1}{r^2}[\wh{\Delta}+(\wt{\Delta}r^2)]h_a\right], \label{Rai-} \\
 \dot{R}^{-}_{ij}&=&-\wh{\e}_{(i}{}^{k}\wh{D}_{j)}\wh{D}_{k}\wt{D}^{a}h_{a},
\end{eqnarray}
where
\begin{equation}
 \mathcal{F}:=r^2\wt{\e}^{ab}\wt{D}_{a}(r^{-2}h_{b}).
\end{equation}
We will only need the $ai$ component of the Einstein tensor:
\begin{equation}
 \dot{G}^{-}_{ai}=\dot{R}^{-}_{ai}-2\l h_{ai}.
\end{equation}
Using the form (\ref{Sads}) of the metric, we have
\begin{equation}
 \wt{\Delta}r^2=-2+2\l r^2,
\end{equation}
whereby 
\begin{equation}\label{R-h}
 \dot{G}^{-}_{ai}+\l h^{-}_{ai}=+\tfrac{1}{2}\wh{\e}_{i}{}^{j}\wh{D}_{j}\left[\tfrac{1}{r^2}\wt{\e}_{a}{}^{c}\wt{D}_{c}(r^2\mathcal{F})
 +\tfrac{1}{r^2}(\wh{\Delta}-2)h_a\right].
\end{equation}
The Einstein equations $\dot{G}^{-}_{ai}+\l h^{-}_{ai}=0$ together with the fact that $(\wh{\Delta}-2)$ is invertible in the 
space of interest (that is, with $\ell\geq2$ in a decomposition into spherical harmonics, see \cite{Sarbach,Dotti2}), 
imply that the original metric perturbation can be recovered from $\mathcal{F}$:
\begin{equation}\label{hsarbach}
 h^{-}_{ai}=-\wh{\e}_{i}{}^{j}\wh{D}_{j}\wt{\e}_{a}{}^{b}\wt{D}_{b}[r^2(\wh{\Delta}-2)^{-1}\mathcal{F}].
\end{equation}
We will see now the relation of this formalism with our four-dimensional approach in this paper.

Using the explicit expression (\ref{R-h}), the general formula (\ref{sw0G}) can be checked directly in this case, 
as the following lemma shows:
\begin{lem}\label{lemma-schw}
 In the odd sector (\ref{odds}) of linearized gravity on the Schwarzschild-(A)dS spacetime, we have the equality
 \begin{equation}
 \Psi^{-1/3}_2{}^{*}C^{\a\g\b\d}\c_{\d}[\Psi^{-4/3}_2\c_{\g}(\dot{G}^{-}_{\a\b}[h]+\l h^{-}_{\a\b})]
 =\tfrac{3}{M^{2/3}}\left[\left(\Box-\tfrac{8M}{r^3}+\tfrac{2}{3}\l \right)\wh{\Delta}\mathcal{F}\right].
 \end{equation}
\end{lem}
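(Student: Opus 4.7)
The plan is to reduce the lemma to the general identity (\ref{sw0G}) by extracting the imaginary part of both sides, which is possible because on Schwarzschild--(A)dS we have $\Psi_2=-M/r^3\in\mathbb{R}$. On a type D background, $\stackbin[]{2}{W}_{\a\b\g\d}=\Psi_2^{-1}\wt{C}_{\a\b\g\d}$, so the coefficient in front of the double derivative on the LHS of (\ref{sw0G}) is $\Psi_2^{2/3}\cdot\Psi_2^{-1}\wt{C}=\Psi_2^{-1/3}\wt{C}$, a real number times $\tfrac{1}{2}(C+i\,{}^{*}C)$. Thus the imaginary part of the LHS of (\ref{sw0G}) is exactly $\tfrac{1}{2}$ times the LHS of the lemma.

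On the RHS of (\ref{sw0G}), the operator $(\Box+8\Psi_2+\tfrac{2}{3}\l)$ is real, $\Psi_2^{1/3}=-(M/r^3)^{1/3}$ is real, and the coefficient $(\dot{\Box}_h+\dot{R}_h/6)\Psi^{1/3}_2$ is real for a real perturbation of a real background. Hence the imaginary part of the RHS reduces to
\begin{equation*}
6(\Box+8\Psi_2+\tfrac{2}{3}\l)[\Psi^{-2/3}_2\im\{\dot{\Psi}_2[h]\}],
\end{equation*}
and since $8\Psi_2=-8M/r^3$ matches the operator on the RHS of the lemma, using $\Psi^{-2/3}_2=r^2/M^{2/3}$, the lemma is equivalent to the identity
\begin{equation*}
\im\{\dot{\Psi}_2[h^{-}]\}=\tfrac{1}{4r^2}\wh\Delta\mathcal{F}.
\end{equation*}

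For the remaining step I would use identity (\ref{KYCgeneric}) together with the explicit expressions (\ref{kyschw})--(\ref{kydschw}) for the Killing--Yano tensor and its dual on Schwarzschild--(A)dS. Expanding $\wt{Y}=\tfrac{1}{2}(Y+i{}^{*}Y)$ in $\Psi_2^{-2/3}\dot{\Psi}_2=-\tfrac{1}{4k^2}\wt{Y}_{\g\d}\wt{Y}^{\a\b}\dot{\wt{C}}_{\a\b}{}^{\g\d}$, the imaginary part comes from the cross terms $(Y_{\g\d}{}^{*}Y^{\a\b}+{}^{*}Y_{\g\d}Y^{\a\b})\dot{C}$ and $(YY-{}^{*}Y{}^{*}Y)\dot{{}^{*}C}$. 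Because $Y$ lies entirely in the $S^2$ factor while ${}^{*}Y$ lies in the orbit space $\wt{\mathcal{M}}$, and the odd perturbation (\ref{odds}) has only mixed orbit-angular components, parity selects a single surviving term, which is precisely the pseudoscalar invariant $Y^{\a\b}{}^{*}Y^{\g\d}\dot{C}_{\a\b\g\d}$ announced in (\ref{Phi}) of the introduction. A direct $2+2$ evaluation of this scalar on (\ref{odds}) produces $\wh\Delta\mathcal{F}$ (with the coefficient tracked through the real constant $k=-bM^{1/3}$), yielding the desired factor $\tfrac{1}{4r^2}$.

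The main obstacle is the explicit $2+2$ expansion of the contraction $Y^{\a\b}{}^{*}Y^{\g\d}\dot{C}_{\a\b\g\d}$ in terms of the metric variable $h_a$: the warped-product Christoffel symbols generate several mixed terms in $\dot{C}_{\a\b\g\d}$, but the Killing--Yano structure organizes them so that the contraction with $Y_{ij}\propto r^3\wh\e_{ij}$ automatically produces the angular Laplacian $\wh\Delta$ acting on the scalar $\mathcal{F}=r^2\wt\e^{ab}\wt{D}_a(r^{-2}h_b)$. An equally valid alternative is to avoid $\dot\Psi_2$ entirely and compute the LHS of the lemma directly from (\ref{C*schw}) and an off-shell expression for $\dot{G}^{-}_{\a\b}+\l h^{-}_{\a\b}$ on all components (including the trace correction $-\tfrac{1}{2}g_{\a\b}\dot{R}_h$ in the $ab$ and $ij$ slots); both routes lead to the same computation, and the Killing--Yano route is the more conceptual one.
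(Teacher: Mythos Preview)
Your reduction of the lemma to the identity $\im\{\dot{\Psi}_2[h^{-}]\}=\tfrac{1}{4r^2}\wh\Delta\mathcal{F}$ via the imaginary part of (\ref{sw0G}) is correct, but it is a genuinely different route from the paper's. The paper does \emph{not} invoke (\ref{sw0G}) here; indeed, the lemma is introduced precisely as an independent $2+2$ verification of (\ref{sw0G}). The paper's proof works directly on the LHS: it uses the explicit form (\ref{C*schw}) of ${}^{*}C$ in terms of $Y$ and ${}^{*}Y$ to show, for \emph{any} symmetric $E_{\a\b}$, that
\[
\Psi^{-1/3}_2{}^{*}C^{\a\g\b\d}\c_{\d}[\Psi^{-4/3}_2\c_{\g}E_{\a\b}]=-\tfrac{6}{M^{2/3}}\wh\e^{ij}\wh D_j\wt\e^{ab}\wt D_bE_{ai},
\]
and then plugs in the explicit $2+2$ expression (\ref{R-h}) for $E_{ai}=\dot G^{-}_{ai}+\l h^{-}_{ai}$, using (\ref{waveop}) and the background identities to obtain the wave operator acting on $\wh\Delta\mathcal{F}$. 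This is purely a warped-product computation and never passes through $\dot\Psi_2$.

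Your approach is conceptually cleaner but has a real gap: you never actually compute $\im\{\dot\Psi_2[h^{-}]\}$. You appeal to (\ref{KYCgeneric}) and parity selection, but the step ``a direct $2+2$ evaluation of this scalar on (\ref{odds}) produces $\wh\Delta\mathcal{F}$'' is precisely the nontrivial content that needs to be written out --- it requires expanding the linearized Weyl tensor (or at least its $aibj$ components) for the odd metric (\ref{odds}) in the warped product, and this is comparable in effort to the paper's direct calculation. Moreover, since your route presupposes (\ref{sw0G}), it cannot serve the lemma's stated role as a check of that formula. The ``equally valid alternative'' you mention at the end --- computing the LHS directly from (\ref{C*schw}) and the off-shell Einstein tensor --- is exactly what the paper does, and is the part that actually needs to be carried out.
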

\begin{proof}
Define $E_{\a\b}:=\dot{G}^{-}_{\a\b}[h]+\l h^{-}_{\a\b}$. We will first prove that 
\begin{equation}
  \Psi^{-1/3}_2{}^{*}C^{\a\g\b\d}\c_{\d}[\Psi^{-4/3}_2\c_{\g}E_{\a\b}]
  =-\tfrac{6}{M^{2/3}}\wh{\e}^{ij}\wh{D}_{j}\wt{\e}^{ab}\wt{D}_{b}E_{ai},
\end{equation}
this is actually true for any symmetric tensor $E_{\a\b}$.
The calculation is done by using the explicit expression (\ref{C*schw}) for the dual Weyl tensor. 
The term with the volume form vanishes in the contraction with a symmetric tensor. Using that $\c^{\d}{}^{*}C_{\a\b\g\d}=0$, 
we have
\begin{multline}
\Psi^{-1/3}_2{}^{*}C^{\a\g\b\d}\c_{\d}[\Psi^{-4/3}_2\c_{\g}E_{\a\b}]\\
 =-\tfrac{3}{k^2}r\c_{\d}[r^4\c_{\g}(\tfrac{1}{r^5}Y^{\a\g} {}^{*}Y^{\b\d}E_{\a\b})]
 -\tfrac{3}{k^2}r\c_{\d}[r^4\c_{\g}(\tfrac{1}{r^5} {}^{*}Y^{\a\g} Y^{\b\d}E_{\a\b})]. \label{SEschw}
\end{multline}
Let us focus on the first term of the last expression, the calculations for the second one are similar. We find
\begin{multline}
 \tfrac{3}{k^2}r\c_{\d}[r^4\c_{\g}(\tfrac{1}{r^5}Y^{\a\g} {}^{*}Y^{\b\d}E_{\a\b})]\\
 =\tfrac{3}{k^2}r\c_{\d}\left[-\tfrac{5}{r^2}r_{\g}Y^{\a\g} {}^{*}Y^{\b\d}E_{\a\b}
 +\tfrac{1}{r}\c_{\g}(Y^{\a\g} {}^{*}Y^{\b\d}E_{\a\b})\right],
\end{multline}
where $r_{\a}:=\c_{\a}r$. Note that $r_{\g}Y^{\a\g}=0$ because of the explicit form (\ref{kyschw}). 
Using (\ref{kyschw})-(\ref{kydschw}) and the relation between the covariant derivatives of the different spaces, we find
\begin{equation}
 \c_{\g}(Y^{\a\g} {}^{*}Y^{\b\d}E_{\a\b})=\wh{D}_i(Y^{\a i}{}^{*}Y^{\b\d}E_{\a\b})
 +\tfrac{1}{r}r_b Y^{\a\d} {}^{*}Y^{\b b}E_{\a\b},
\end{equation}
which implies 
\begin{multline}
 \tfrac{3}{k^2}r\c_{\d}[r^4\c_{\g}(\tfrac{1}{r^5}Y^{\a\g} {}^{*}Y^{\b\d}E_{\a\b})]\\
 =\tfrac{3}{k^2}r\c_{\d}[\tfrac{1}{r}\wh{D}_i(Y^{\a i}{}^{*}Y^{\b\d}E_{\a\b})
 +\tfrac{1}{r^2}r_b Y^{\a\d} {}^{*}Y^{\b b}E_{\a\b}].
\end{multline}
Once more, we use the relation between covariant derivatives, and find that
\begin{eqnarray*}
 \c_{\d}[\tfrac{1}{r}\wh{D}_i(Y^{\a i}{}^{*}Y^{\b\d}E_{\a\b})]
 &=&\tfrac{1}{r}\wt{D}_{b}\wh{D}_{i}(Y^{\a i} {}^{*}Y^{\b b}E_{\a\b})
 +\tfrac{1}{r^2}r_{b}\wh{D}_i(Y^{\a i}{}^{*}Y^{\b b}E_{\a\b})\\
 \c_{\d}(\tfrac{1}{r^2}r_b Y^{\a\d} {}^{*}Y^{\b b}E_{\a\b}) &=&
 \tfrac{1}{r^2}\wh{D}_i(r_{b} Y^{\a i}{}^{*}Y^{\b b}E_{\a\b});
\end{eqnarray*}
therefore
\begin{multline}
 \tfrac{3}{k^2}r\c_{\d}[r^4\c_{\g}(\tfrac{1}{r^5}Y^{\a\g} {}^{*}Y^{\b\d}E_{\a\b})]\\
 =\tfrac{3}{k^2}\left[\wt{D}_{b}\wh{D}_{i}(Y^{\a i} {}^{*}Y^{\b b}E_{\a\b})+
 \tfrac{2}{r}r_{b}\wh{D}_i( Y^{\a i}{}^{*}Y^{\b b}E_{\a\b})\right]. \label{SEschw1}
\end{multline}
The calculation for the second term in (\ref{SEschw}) is performed along the same lines, the result is
\begin{multline}
 \tfrac{3}{k^2}r\c_{\d}[r^4\c_{\g}(\tfrac{1}{r^5}{}^{*}Y^{\a\g} Y^{\b\d}E_{\a\b})]\\
 =\tfrac{3}{k^2}\left[\wh{D}_{i}\wt{D}_{b}({}^{*}Y^{\a b} Y^{\b i}E_{\a\b})-
 \tfrac{2}{r}r_{b}\wh{D}_i( {}^{*}Y^{\a b}Y^{\b i}E_{\a\b})\right]. \label{SEschw2}
\end{multline}
Putting together (\ref{SEschw1}) and (\ref{SEschw2}), using the explicit forms (\ref{kyschw})-(\ref{kydschw}) 
of the Killing-Yano tensors and $E_{\a\b}=\dot{G}^{-}_{\a\b}[h]+\l h^{-}_{\a\b}$, we obtain
\begin{equation}
 \Psi^{-1/3}_2{}^{*}C^{\a\g\b\d}\c_{\d}[\Psi^{-4/3}_2\c_{\g}(\dot{G}^{-}_{\a\b}[h]+\l h^{-}_{\a\b})]
  =-\tfrac{6}{M^{2/3}}\wh{\e}^{ij}\wh{D}_{j}\wt{\e}^{ab}\wt{D}_{b}(\dot{G}^{-}_{ai}[h]+\l h^{-}_{ai}).
\end{equation}

Next, we calculate $\wt{\e}^{ab}\wt{D}_{b}(\dot{G}^{-}_{ai}+\l h^{-}_{ai})$ using the explicit expression (\ref{R-h}), 
the background equations $r^{a}r_{a}=-f(r)$, $\wt{\Delta}r=-\p_{r}f(r)$, and the decomposition
of the wave operator
\begin{equation}\label{waveop}
 \Box=\wt{\Delta}+\tfrac{1}{r^2}\wh{\Delta}+\tfrac{2}{r}r^a\wt{D}_{a}.
\end{equation}
We find
\begin{equation}
 \wt{\e}^{ab}\wt{D}_{b}(\dot{G}^{-}_{ai}+\l h^{-}_{ai})=-\tfrac{1}{2}\wh{\e}_{i}{}^{j}\wh{D}_{j}
 \left[\left(\Box-\tfrac{8M}{r^3}+\tfrac{2}{3}\l \right)\mathcal{F}\right].
\end{equation}
Finally, using the fact that $[\wh{\Delta},\Box-\tfrac{8M}{r^3}+\tfrac{2}{3}\l]=0$ on scalar fields, we obtain
\begin{equation}
 \Psi^{-1/3}_2{}^{*}C^{\a\g\b\d}\c_{\d}[\Psi^{-4/3}_2\c_{\g}(\dot{G}^{-}_{\a\b}[h]+\l h^{-}_{\a\b})]
 =\tfrac{3}{M^{2/3}}\left[\left(\Box-\tfrac{8M}{r^3}+\tfrac{2}{3}\l \right)\wh{\Delta}\mathcal{F}\right].
\end{equation}
\end{proof}

\subsubsection{Metric reconstruction}\label{met.rec.schw}

We now explain how to recover the original metric perturbation (\ref{odds}) from a solution to the scalar equation. 
From section \ref{sw0G-sec} we know that in case $\Psi_2$ is real, if $\Phi$ is a solution to 
\begin{equation}
 (\Box+8\Psi_2+\tfrac{2}{3}\l)\Phi=0,  
\end{equation}
then the tensor field
\begin{equation}
 h_{\a\b}(\Phi)=-{}^{*}C_{(\a}{}^{\g\d}{}_{\b)}\c_{\g}[\Psi^{-4/3}_2\c_{\d}(\Psi^{-1/3}_2\Phi)]
\end{equation}
is a solution to the linearized Einstein equations.
First, let us show that in the Schwarzschild-(A)dS solution, this expression reproduces formula (25) in \cite{Dotti}.
Using $\Psi_2=-\frac{M}{r^3}$ and the explicit form of the dual Weyl tensor (\ref{C*schw}), we have:
\begin{eqnarray*}
 h_{\a\b}(\Phi)&=&\tfrac{1}{M^{5/3}}{}^{*}C_{(\a}{}^{\g\d}{}_{\b)}\c_{\g}[r^4\c_{\d}(r\Phi)]\\
 &=&\tfrac{3}{b^2M^{2/3}r^5}\left({}^{*}Y_{(\a}{}^{\g}Y^{\d}{}_{\b)}+Y_{(\a}{}^{\g} {}^{*}Y^{\d}{}_{\b)}\right)\c_{\g}[r^4\c_{\d}(r\Phi)]
\end{eqnarray*}
Now, using (\ref{kyschw}), (\ref{kydschw}) it is easy to see that
\begin{equation}
 {}^{*}Y_{(\a}{}^{\g}Y^{\d}{}_{\b)}\c_{\g}[r^4\c_{\d}(r\Phi)]={}^{*}Y_{(\a}{}^{\g}Y^{\d}{}_{\b)}\c_{\g}\c_{\d}(r^5\Phi),
\end{equation}
therefore:
\begin{eqnarray*}
 h_{\a\b}(\Phi)&=&\tfrac{3}{b^2M^{2/3}r^5}
 \left({}^{*}Y_{(\a}{}^{\g}Y^{\d}{}_{\b)}\c_{\g}\c_{\d}(r^5\Phi)+Y_{(\a}{}^{\g} {}^{*}Y^{\d}{}_{\b)}\c_{\g}[r^4\c_{\d}(r\Phi)]\right)\\
 &=&\tfrac{3}{b^2M^{2/3}r^5}Y_{(\a}{}^{\g} {}^{*}Y^{\d}{}_{\b)}\c_{\g}[\c_{\d}(r^5\Phi)+r^4\c_{\d}(r\Phi)]\\
 &=&\tfrac{3}{b^2M^{2/3}r^5}Y_{(\a}{}^{\g} {}^{*}Y^{\d}{}_{\b)}2r^2\c_{\g}\c_{\d}(r^3\Phi),
\end{eqnarray*}
where we have used the identity $\c_{\d}(r^5\Phi)+r^4\c_{\d}(r\Phi)=2r^2\c_{\d}(r^3\Phi)$ and the fact that $Y_{\a}{}^{\g}r_{\g}=0$.
Thus,
\begin{equation}\label{metric1}
 h_{\a\b}(\Phi)= \tfrac{1}{M^{5/3}} r^2 {}^{*}C_{\a}{}^{\g\d}{}_{\b}\c_{\g}\c_{\d}(r^3\Phi).
\end{equation}
Now, using that on scalar fields we have the commutator
\begin{equation}
 [(\wh\Delta-2)^{-1},\Box+8\Psi_2+\tfrac{2}{3}\l]=0,
\end{equation}
if $\Phi$ is a solution to $(\Box+8\Psi_2+\tfrac{2}{3}\l)\Phi=0$, so is the field
$\frac{M^{2/3}}{3}(\wh\Delta-2)^{-1}\Phi$. Defining $\Phi_o:=(\wh\Delta-2)^{-1}\Phi$, the metric perturbation 
(\ref{metric1}) constructed from $\frac{M^{2/3}}{3}(\wh\Delta-2)^{-1}\Phi$ is
\begin{equation}
 h_{\a\b}= \tfrac{r^2}{3M}{}^{*}C_{\a}{}^{\g\d}{}_{\b}\c_{\g}\c_{\d}(r^3\Phi_o),
\end{equation}
which coincides with \cite[Eq. (25)]{Dotti}, and is the original perturbation (\ref{hsarbach}) (with $\Phi\equiv\mathcal{F}$). 
Our general results (\ref{sw0G}) thus explain the mechanism behind (\ref{we})-(\ref{metricschw}) (and extend it to the cosmological 
setting).

\subsubsection{Killing-Yano tensors}\label{kyt.schw}

Finally, we want to derive the formula (\ref{Phi}) from our general formalism. For this we will use (\ref{KYCgeneric}) 
in a slighty different form. Using that $\wt{C}_{\a\b}{}^{\g\d}\wt{C}_{\g\d}{}^{\a\b}=\wt{C}_{\a\b}{}^{\g\d}C_{\g\d}{}^{\a\b}$,
an alternative expression to (\ref{KYCgeneric}) is
\begin{equation}
 12\Psi_2\dot{\Psi}_2=\tfrac{1}{4}(\dot{\wt{C}}_{\a\b}{}^{\g\d}C_{\g\d}{}^{\a\b}+\wt{C}_{\a\b}{}^{\g\d}\dot{C}_{\g\d}{}^{\a\b})
\end{equation}
The linearization of $\wt{C}_{\a\b}{}^{\g\d}$ is delicate because we have to take into account that when we perturb the dual 
${}^{*}C_{\a\b}{}^{\g\d}$ there are two terms: the perturbed volume form $\dot{\e}_{\a\b}{}^{\mu\nu}$ and the perturbed Weyl tensor
$\dot{C}_{\mu\nu}{}^{\g\d}$. A straightforward calculation shows that
\begin{equation}
 \tfrac{d}{d\e}|_{\e=0} ({}^{*}C_{\a\b}{}^{\g\d})=\tfrac{h}{2}{}^{*}C_{\a\b}{}^{\g\d}+\e_{\a\b\rho}{}^{\mu}h^{\nu\rho}C_{\mu\nu}{}^{\g\d}
 +\tfrac{1}{2}\e_{\a\b}{}^{\mu\nu}\dot{C}_{\mu\nu}{}^{\g\d},
\end{equation}
where we recall that $h=g^{\a\b}h_{\a\b}=-g_{\a\b}h^{\a\b}$; then
\begin{equation}
 12\Psi_2\dot{\Psi}_2=\tfrac{1}{2}\wt{C}_{\g\d}{}^{\a\b}\dot{C}_{\a\b}{}^{\g\d}
 +\tfrac{i}{8}(\tfrac{h}{2}{}^{*}C_{\a\b}{}^{\g\d}C_{\g\d}{}^{\a\b}-2h^{\mu\nu}{}^{*}C_{\mu\b\g\d}C_{\nu}{}^{\b\g\d}).
\end{equation}
Now, using the identities 
\begin{eqnarray}
{}^{*}C_{\a\b}{}^{\g\d}C_{\g\d}{}^{\a\b} &=& 48\im(\Psi^2_2), \\
{}^{*}C_{\mu\b\g\d}C_{\nu}{}^{\b\g\d} &=& 12g_{\mu\nu}\im(\Psi^2_2),
\end{eqnarray}
we get
\begin{equation}
 12\Psi_2\dot{\Psi}_2=\tfrac{1}{2}\wt{C}_{\g\d}{}^{\a\b}\dot{C}_{\a\b}{}^{\g\d}+6ih\im(\Psi^2_2)
\end{equation}
As we are interested in the case in which $\Psi_2$ is real, we take the imaginary part in the last equation and, 
using the explicit form (\ref{sdweylD}), we obtain
\begin{equation}
 \Psi^{-2/3}_2\im(\dot{\Psi}_2)=-\tfrac{1}{16k^2}({}^{*}Y_{\g\d}Y^{\a\b}+Y_{\g\d}{}^{*}Y^{\a\b})\dot{C}_{\a\b}{}^{\g\d}.
\end{equation}
The two terms on the RHS turn out to be equal, therefore:
\begin{equation}
 \Psi^{-2/3}_2\im(\dot{\Psi}_2)=-\tfrac{1}{8k^2} Y^{\a\b} {}^{*}Y_{\g\d}\dot{C}_{\a\b}{}^{\g\d},
\end{equation}
which demonstrates (\ref{Phi}).

\section{Conclusions}\label{conclusions}

Working in the class of vacuum Petrov type D spacetimes with cosmological constant, we have presented the general form of 
linear, four-dimensional differential operators mapping {\em off-shell} the equations for linear fields of spin $\s=\frac{1}{2}$, 
1 and 2 into a system of scalar  equations for spin weighted $s$ components of these linear fields that decouple on shell.
By using the Bianchi identities linearized around $\l$-vacuum solutions, we were able to relate off-shell the decoupled 
equations for Weyl scalars to the linearized Einstein equations.
Applying transposition of operators we obtained a way to reconstruct solutions of the original field equations from solutions 
of the decoupled equations. 
This mechanism works well for extreme spin weight $s=\pm\s$ in the Dirac, Maxwell and linearized gravity cases. 
For spin weight $s=0$, the reconstruction formula  works for Maxwell fields, but for gravitational perturbations 
the `inhomogeneous' term in the RHS of (\ref{sw0G}) (namely $(\dot{\Box}_h+\frac{\dot{R}_h}{6})\Psi^{1/3}_2$)
spoils the transposition of operators that would lead to a 
reconstruction formula. One can get rid of this term whenever $\Psi_2$ is a real field, the Schwarzschild-(A)dS solution 
being the most significant example in the present work. Applying our general results to this case, we explained the mechanisms 
behind the equations presented in \cite{Dotti,Dotti2} corresponding to the odd sector of linearized gravity around the 
Schwarzschild-(A)dS black hole. In particular, we corroborate our general formulae by translating the four-dimensional 
expressions of our formalism into the traditional $2+2$ decomposition of warped product spacetimes, setting in this 
way the connection between both approaches.

Our off-shell formulation is also useful for obtaining symmetry operators for the field equations, both for 
the higher spin (Dirac, Maxwell, linear gravity) field equations and for the scalar (Teukolsky, Fackerell-Ipser, etc.) equations. 
For further results about symmetry operators in the literature, 
we note that a comprehensive analysis of the second order symmetry operators for the field equations of
massless test fields of spin $0$, $\frac{1}{2}$ and $1$ is performed in \cite{Andersson2},
and that higher order symmetry operators for spin 1 and 2 and extreme spin weight are obtained in \cite{Aksteiner}.

We have also analyzed the role that Killing spinors (and its tensor analogues, Killing-Yano forms) have in the description of 
the spin weight zero scalar equations for linear fields. 
Killing spinors are certainly very used in the literature. They are the main object in Penrose's spin lowering process for massless 
fields in Minkowski spacetime. For Petrov type D spaces, the 2-index Killing spinor encodes all the information about the symmetries 
and hidden symmetries of the Kerr solution.
They are also central for the existence of symmetry operators for massless 
fields of spin $1/2$ and $1$ in curved spacetimes, as was proved in \cite{Andersson2}, see e.g. Theorems 4 and 6 there.
However, in this work we found that, although some proofs are somewhat simplified by the Killing spinor equation, 
and the general object (\ref{P}) used in the theorems turns out to be a Killing spinor for spins 1 and 2 and spin weight zero, 
the final results do not depend on this condition. Thus, regarding the Maxwell and linearized gravity systems considered in this work,
we may consider the appearance of these objects as merely `accidental', in the sense that the proof of the theorems 
can be done without use of the Killing spinor equation. (We mention that, although in the proof of the spin weight zero 
case of theorem \ref{thm-maxwell} for Maxwell we do use the Killing spinor equation to simplify the calculations, this proof can 
be performed without using this equation.)

There is a vast literature about the subject of symmetry operators and Debye potentials for higher spin fields. We particularly mention 
references \cite{Andersson1, Aksteiner, Aksteiner2, Andersson2, Chrzanowski, Dotti, Fackerell, Kegeles, Penrose4, Teukolsky, Wald}, 
whose connections with this work have been described throughout the text.
The results in this paper encompass a number of previously known results in the mentioned works
(and extend them to the cosmological setting), in particular:
\begin{itemize}
\item For extreme spin weight, the Teukolsky equations \cite{Teukolsky} are the on-shell version of the equations 
presented in this work:
\cite[Eqs.(B4) and (B5)]{Teukolsky} for the Dirac field are the on-shell case of (\ref{dirac1}) and (\ref{dirac2});
\cite[Eqs.(3.5) and (3.7)]{Teukolsky} for the Maxwell fields correspond to the on-shell case 
of equations (\ref{maxwell1}) and (\ref{maxwell3}); and 
\cite[Eqs.(2.12) and (2.14)]{Teukolsky} for linear gravity are the on-shell case of equations (\ref{grav1}) and (\ref{grav3}).
\item For spin weight zero, the on-shell case of (\ref{maxwell2}) for Maxwell fields is the Fackerell-Ipser equation
\cite[Eq.(20)]{Fackerell}, and the on-shell case of (\ref{grav2}) for linear gravity is the linearized equation 
\cite[Eq.(3.10)]{Andersson1} of Aksteiner $\&$ Andersson.
\item The reconstruction formula (\ref{m-rec-esw}) for spin weight $s=-2$ can be checked to agree with Kegeles $\&$ Cohen 
ansatz \cite[Eq.(5.4)]{Kegeles}.
\item For a Schwarzschild background, the on-shell case of (\ref{grav2}) is \cite[Eq.(24)]{Dotti}
(or \cite[Eq.(4DRWE)]{Dotti2} in the cosmological setting), and the reconstruction 
equation (\ref{m-rec-sw0}) is Dotti's formula \cite[Eq.(25)]{Dotti}.
\end{itemize}

\section*{Acknowledgments}

I would like to thank my advisor Gustavo Dotti for suggesting to me the problem and
for very helpful suggestions and conversations, Thomas B\"ackdahl for useful comments
and for pointing me towards some unnoticed references and relations with his work, and two anonymous 
referees for a number of comments that led to an improved version of this manuscript.
I would also like to thank Sergio Dain for discussions, and dedicate this work to his memory.
This work is supported by a doctoral fellowship from CONICET (Argentina).

\appendix

\section{Useful formulae}

In this appendix we collect some useful formulae we have made use of in the proofs of the results in the main text.

\subsection{Curvature spinors}\label{appA}

For completeness we recall the definition of spinor curvature operations used in this paper 
(we simply repeat the formulae of \cite[section 4.9]{Penrose1} relevant for this work).
Our convention for the definition of the Riemann curvature tensor (in the absence of torsion) is (see \cite[Eq.(4.2.31)]{Penrose1})
\begin{equation}
 (\c_{\a}\c_{\b}-\c_{\b}\c_{\a})V^{\g}=+R_{\a\b\d}{}^{\g}V^{\d}.
\end{equation}
Note that the RHS of this equation has the opposite sign to the more commonly used definition,
compare e.g. with \cite[Eq.(3.2.11)]{Wald2}. This implies that our Riemann and Ricci tensors have the opposite signs 
to those of this reference (note however that the curvature scalar and cosmological constant are the same).

The commutator of two covariant derivatives gives the curvature spinor operators $\Box_{AB}$ and $\Box_{A'B'}$ in the form
\begin{equation}
 \c_{\a}\c_{\b}-\c_{\b}\c_{\a}=\e_{A'B'}\Box_{AB}+\e_{AB}\Box_{A'B'},
\end{equation}
where $\Box_{AB}=\c_{A'(A}\c^{A'}_{B)}$, and its action on, for example, a spinor $\theta^{C}{}_{D}{}^{E'}{}_{F'}$, is
\begin{eqnarray}
\nonumber \Box_{AB}\theta^{C}{}_{D}{}^{E'}{}_{F'}&=&X_{ABQ}{}^{C}\theta^{Q}{}_{D}{}^{E'}{}_{F'}-X_{ABD}{}^{Q}\theta^{C}{}_{Q}{}^{E'}{}_{F'}\\
 & & +\Phi_{ABQ'}{}^{E'}\theta^{C}{}_{D}{}^{Q'}{}_{D'}-\Phi_{ABF'}{}^{Q'}\theta^{C}{}_{D}{}^{E'}{}_{Q'}; \label{boxAB}
\end{eqnarray}
a similar formula holds for $\Box_{A'B'}$ (see \cite[Eq.(4.9.14)]{Penrose1}). The curvature spinor $X_{ABCD}$ is decomposed as
\begin{equation}\label{X1}
 X_{ABCD}=\psi_{ABCD}+\tfrac{R}{24}(\e_{AC}\e_{BD}+\e_{AD}\e_{BC}),
\end{equation}
where $\psi_{ABCD}$ is the Weyl conformal curvature spinor. This implies in particular that
\begin{equation}\label{X2}
 X_{ABC}{}^{B}=\tfrac{R}{8}\e_{AC}.
\end{equation}

\subsection{Derivatives of the dyad spinors}\label{appB}

Using expressions for the directional derivatives of the dyad $\{o^A,\iota^A\}$ along the tetrad vectors 
(equations (4.5.26) in \cite{Penrose1}), it is easy to see that
\begin{eqnarray}
\nonumber \c^{M'}_{M}o^{A}&=&(\e o^A-\kappa\iota^A)\iota_{M}\bar{\iota}^{M'}-(\e' o^A+\tau\iota^A)o_M\bar{o}^{M'}\\
 & & +(\b' o^A+\rho\iota^A)o_{M}\bar{\iota}^{M'}-(\b o^A-\sigma\iota^A)\iota_{M}\bar{o}^{M'},\\
\nonumber \c^{M'}_{M}\iota^{A}&=&-(\e \iota^A+\tau' o^A)\iota_{M}\bar{\iota}^{M'}+(\e' \iota^A-\kappa'o^A)o_M\bar{o}^{M'}\\
 & & -(\b'\iota^A-\sigma' o^A)o_{M}\bar{\iota}^{M'}+(\b \iota^A+\rho' o^A)\iota_{M}\bar{o}^{M'}.
\end{eqnarray}
Contracting with $o^M$, $\iota^M$ and $\e_{A}{}^{M}$, we obtain the following useful formulae:
\begin{eqnarray}
 o^{M}\c^{M'}_{M}o^{A}&=&(-\e o^A+\kappa\iota^A)\bar{\iota}^{M'}+(\b o^A-\sigma\iota^A)\bar{o}^{M'}, \label{odo} \\
 \iota^{M}\c^{M'}_{M}o^{A}&=&-(\e' o^A+\tau\iota^A)\bar{o}^{M'}+(\b' o^A+\rho\iota^A)\bar{\iota}^{M'}, \label{ido} \\
 \c^{M'}_{A}o^A&=&(\rho-\e)\bar{\iota}^{M'}+(\b-\tau)\bar{o}^{M'}, \label{edo} \\
 o^{M}\c^{M'}_{M}\iota^{A}&=&(\e \iota^A+\tau' o^A)\bar{\iota}^{M'}-(\b\iota^A+\rho' o^A)\bar{o}^{M'}, \label{odi} \\
 \iota^{M}\c^{M'}_{M}\iota^{A}&=&-(\e'\iota^A-\kappa' o^A)\bar{o}^{M'}-(\b'\iota^A-\sigma o^A)\bar{\iota}^{M'}, \label{idi} \\
 \c^{M'}_{A}\iota^{A}&=&(\tau'-\b')\bar{\iota}^{M'}+(\e'-\rho')\bar{o}^{M'}. \label{edi}
\end{eqnarray}

For the proofs of the theorems in the text we also need expressions for the divergence of the tetrad vectors:
\begin{eqnarray}
 \c_{\a}l^{\a}&=&\e+\bar\e-\rho-\bar\rho, \label{dl} \\
 \c_{\a}n^{\a}&=&\e'+\bar\e'-\rho'-\bar\rho', \label{dn} \\
 \c_{\a}m^{\a}&=&\b+\bar\b'-\tau-\bar\tau', \label{dm} \\
 \c_{\a}\bar{m}^{\a}&=&\b'+\bar\b-\tau'-\bar\tau. \label{dmb}
\end{eqnarray}

\subsection{Killing spinors}

In the following proposition we gather useful identities involving the Killing spinor of type D solutions:
\begin{prop}
 Consider the Killing spinor $K_{AB}$ of a $\l$-vacuum type D spacetime, and let $\xi^{AA'}=\c^{A'B}K_{B}{}^{A}$ be the 
 associated Killing vector. We have:
 \begin{eqnarray}
  \c_{C'C}K_{AB}&=&\tfrac{2}{3}\c^{D}_{C'}K_{D(A}\e_{B)C}, \label{k1} \\
  \psi_{ABCD}K^{CD}&=&-2\Psi_2 K_{AB}, \label{k2} \\
  \Box K_{AB}&=&(2\Psi_2+\tfrac{2}{3}\l)K_{AB}, \label{k3} \\
  \c^{A'A}(\Psi_2 K_{AB})&=&0, \label{k4} \\
  \c_{B'A}\xi^{B'}_{B}&=& (3\Psi_2+\l)K_{AB}. \label{k5}
 \end{eqnarray}
\end{prop}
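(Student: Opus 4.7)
My plan is to prove the five identities in order: (\ref{k1}) is the algebraic consequence of the twistor equation, (\ref{k2}) is the type-D eigenvalue relation between $K_{AB}$ and $\psi_{ABCD}$, and (\ref{k3})--(\ref{k5}) combine these via spinor curvature commutators together with the $\lambda$-vacuum Bianchi identities, the latter compactly encoded as $\nabla_{AA'}\Psi_2=-3\Psi_2 A_{AA'}$ (a consequence of (\ref{AD}) and (\ref{bianchiD})).

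For (\ref{k1}), I will use the irreducible decomposition of a spinor $T_{CAB}$ symmetric in its last two indices,
\begin{equation*}
T_{CAB}=T_{(CAB)}-\tfrac{2}{3}\epsilon_{C(A}T^D{}_{|D|B)}\,,
\end{equation*}
the coefficient being fixed by tracing both sides. Setting $T_{CAB}=\nabla_{C'C}K_{AB}$, the totally symmetric part is killed by the twistor equation, and the trace piece, rewritten via the antisymmetry of $\epsilon$ as $\nabla^D{}_{C'}K_{D(A}\epsilon_{B)C}$, yields (\ref{k1}). For (\ref{k2}), I substitute (\ref{KSD}) and (\ref{weylD}) directly and compute $o_{(A}o_B\iota_C\iota_{D)}o^{(C}\iota^{D)}$ by expanding the 4-index symmetrization; of the six terms, only two survive the contraction with $o^{(C}\iota^{D)}$ upon using $o^A\iota_A=1$, and they combine into $-\tfrac{1}{3}o_{(A}\iota_{B)}$, yielding the claimed $-2\Psi_2 K_{AB}$.

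For (\ref{k3}), I apply $\nabla^{CC'}$ to (\ref{k1}): the left side produces $\Box K_{AB}$, while the right side reorganizes, through the commutator $[\nabla_{AB'},\nabla^{DB'}]=2\Box_A{}^D$ (itself obtained from the general spinor commutator together with the vanishing of $\epsilon^{A'B'}\Box_{A'B'}$) and a second application of (\ref{k1}), into $\Box K_{AB}=2\Box_{(A}{}^D K_{|D|B)}$. Expanding $\Box_A{}^D K_{DB}$ with the curvature identity (\ref{boxAB}) and the decomposition (\ref{X1}) on the $\lambda$-vacuum background ($\Phi_{ABA'B'}=0$), the Weyl part collapses via (\ref{k2}) to $+2\Psi_2 K_{AB}$ and the $\Lambda=R/24$ piece contributes $+\tfrac{2}{3}\lambda K_{AB}$. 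For (\ref{k4}), I expand the divergence by Leibniz as $\nabla^{A'A}(\Psi_2 K_{AB})=-3\Psi_2 A^{A'A}K_{AB}+\Psi_2\nabla^{A'A}K_{AB}$; writing both terms in the principal dyad via (\ref{A}), (\ref{KSD}) and (\ref{odo})--(\ref{edi}), and using the type-D vanishing conditions (\ref{ghpD}), the two contributions cancel term by term.

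Finally for (\ref{k5}), lowering indices in the definition of $\xi$ gives $\xi^{B'}{}_B=-\nabla^{CB'}K_{CB}$, and applying $\nabla_{B'A}$ with the commutator $[\nabla_{AB'},\nabla^{CB'}]=2\Box_A{}^C$ splits the expression into $-\nabla^{CB'}\nabla_{AB'}K_{CB}-2\Box_A{}^C K_{CB}$. The first piece, after inserting (\ref{k1}) into $\nabla_{AB'}K_{CB}$ and using (\ref{k4}) to eliminate the resulting $\nabla(\Psi_2 K)$ divergence, vanishes upon symmetrization in $AB$; the surviving $-2\Box_A{}^C K_{CB}$ term, evaluated through (\ref{boxAB}) and (\ref{X1}), produces the Weyl contribution $3\Psi_2 K_{AB}$ via (\ref{k2}) together with the cosmological contribution $\lambda K_{AB}$ via $R=4\lambda$. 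The main obstacle throughout is the sign and symmetrization bookkeeping when commuting contracted spinor derivatives and applying $\Box_{AB}$ to the two-index $K_{CD}$, especially in verifying that the one-index contractions $\psi_{ABCE}K^E{}_D$ that arise reduce, after the appropriate symmetrization in $AB$, to multiples of $K_{AB}$ consistent with the two-index eigenvalue relation (\ref{k2}), rather than leaving uncontrolled pieces proportional to $o_{(A}o_B\iota_{C)}$ or $o_{(A}\iota_B\iota_{C)}$.
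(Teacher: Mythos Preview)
Your arguments for (\ref{k1}) and (\ref{k2}) are correct and coincide with the paper's (the paper uses the identity $K_{AB}K^{AB}=-\tfrac{k^2}{2}\Psi_2^{-2/3}$ together with (\ref{WSD}) for (\ref{k2}), but your direct expansion of the symmetrized product is equivalent).

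For (\ref{k3})--(\ref{k5}) your strategies are valid in spirit but more circuitous than the paper's, and two intermediate claims need correction. In (\ref{k3}), the relation you obtain after applying $\nabla^{CC'}$ to (\ref{k1}) and commuting is $\Box K_{AB}=\Box_{(A}{}^{D}K_{|D|B)}$, \emph{without} the factor of $2$; with that factor your curvature evaluation would give $(4\Psi_2+\tfrac{4}{3}\lambda)K_{AB}$ rather than the correct $(2\Psi_2+\tfrac{2}{3}\lambda)K_{AB}$. The paper avoids this by applying $\nabla_{C'D}$ directly to the twistor equation $\nabla^{C'}_{(C}K_{AB)}=0$, using $\nabla_{C'D}\nabla^{C'}_{C}=\Box_{DC}+\tfrac{1}{2}\epsilon_{DC}\Box$, and contracting with $\epsilon^{CD}$, which yields $\Box K_{AB}+\psi_{ABCD}K^{CD}-\tfrac{R}{6}K_{AB}=0$ in one step. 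For (\ref{k4}), your dyad computation works but is laborious; the paper instead applies $\nabla^{A'A}$ to both sides of (\ref{k2}) and uses the $\lambda$-vacuum Bianchi identity $\nabla^{A'A}\psi_{ABCD}=0$ together with $\nabla^{A'(A}K^{CD)}=0$, which kills the derivative of the left side immediately. For (\ref{k5}), your invocation of (\ref{k4}) to dispose of the ``first piece'' is not transparent---no $\Psi_2$ appears naturally in $\nabla^{CB'}\nabla_{AB'}K_{CB}$ after inserting (\ref{k1}); what actually happens is that the $\epsilon_{BA}$ term drops by antisymmetry and the remaining term is again of the form $\nabla_{(A}{}^{B'}\nabla^{D}{}_{|B'|}K_{|D|B)}$, which feeds back into the same structure as (\ref{k3}). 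The paper's route is shorter: write $\nabla_{B'A}\xi^{B'}_{B}=\nabla_{B'A}\nabla^{B'C}K_{CB}=\tfrac{1}{2}\Box K_{AB}-\psi_{ABCD}K^{CD}+\tfrac{R}{6}K_{AB}$ directly from the $\Box_{AC}+\tfrac{1}{2}\epsilon_{AC}\Box$ splitting, and then substitute (\ref{k3}) and (\ref{k2}).
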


\begin{proof}
(\ref{k1}) follows immediately after using the Killing spinor condition $\c_{C'(C}K_{AB)}=0$ and 
\cite[Eq.(3.3.55)]{Penrose1}, 
\begin{equation*}
 \c_{C'C}K_{AB}=-\tfrac{1}{3}\e_{CA}\c^{D}_{C'}K_{DB}-\tfrac{1}{3}\c^{D}_{C'}K_{DA}\e_{CB}.
\end{equation*}
For (\ref{k2}) we just have to use the expressions (\ref{WSD}) for the Weyl spinor and (\ref{KSD}) for $K_{AB}$, together 
with the identity
\begin{equation}
 K_{AB}K^{AB}=-\tfrac{k^2}{2}\Psi^{-2/3}_2.
\end{equation}
For (\ref{k3}), we take an additional derivative in $0=\c^{C'}_{(C}K_{AB)}$ and use the decomposition (\ref{X1}) of $X_{ABCD}$:
\begin{eqnarray*}
 0&=&\c_{C'D}\c^{C'}_{(C}K_{AB)}\\
  &=& \tfrac{1}{2}\e_{D(C}\Box K_{AB)}-2\psi_{D(CA}{}^{E}K_{B)E}-\tfrac{R}{12}\e_{D(A}K_{BC)}.
\end{eqnarray*}
Expanding in $CAB$ and contracting with $\e^{CD}$, we get
\begin{equation*}
 0=\Box K_{AB}+\psi_{ABCD}K^{CD}-\tfrac{R}{6}K_{AB}
\end{equation*}
which, after using (\ref{k2}) and replacing $R=4\l$, reduces to (\ref{k3}).\\
Formula (\ref{k4}) follows after applying a derivative $\c^{A'A}$ to both sides of (\ref{k2}) and using the Bianchi 
identities $\c^{A'A}\psi_{ABCD}=0$ and the Killing spinor condition $\c^{A'(A}K^{CD)}=0$ (together with the fact that 
$\psi_{ABCD}$ is totally symmetric).\\
Finally, for (\ref{k5}) we use the definition $\xi^{B'}_{B}=\c^{B'C}K_{CB}$:
\begin{equation*}
 \c_{B'A}\xi^{B'}_{B}=\c_{B'A}\c^{B'C}K_{CB}=\tfrac{1}{2}\Box K_{AB}-\psi_{ABCD}K^{CD}+\tfrac{R}{6}K_{AB}.
\end{equation*}
Then, using (\ref{k3}), (\ref{k2}) and $R=4\l$ we easily obtain (\ref{k5}).
\end{proof}

\end{document}